\newtheorem{theorem}{Theorem}[section]
\newtheorem{definition}{Definition}[section]
\newtheorem{proposition}[theorem]{Proposition}
\newtheorem{lemma}[theorem]{Lemma}
\newtheorem{corollary}[theorem]{Corollary}
\newcommand{\bDefinition}{\begin{definition}}
\newcommand{\eDefinition}{\end{definition}}
\newcommand{\bTheorem}{\begin{theorem}}
\newcommand{\eTheorem}{\end{theorem}}
\newcommand{\bProposition}{\begin{proposition}}
\newcommand{\eProposition}{\end{proposition}}
\newcommand{\bCorollary}{\begin{corollary}}
\newcommand{\eCorollary}{\end{corollary}}
\newcommand{\bProof}{\begin{proof}}
\newcommand{\eProof}{\end{proof}}	
\newcommand{\benum}{\par\begin{center}\begin{varwidth}{\textwidth}\begin{enumerate}[{\rm (i)}]}
\newcommand{\benumx}[1]{\par\begin{center}\begin{varwidth}{\textwidth}\begin{enumerate}[{\rm (#1)}]}
\newcommand{\eenum}{\end{enumerate}\end{varwidth}\par\end{center}}
\newcommand{\deqs}[1]{\begin{align*}#1\end{align*}}
\newcommand{\ie}{\mbox{i.e.\ }}
\newcommand{\eg}{\mbox{e.g.\ }}
\newcommand{\Forall}{~~\mb{for all}~~}
\newcommand{\mb}{\mbox}
\newcommand{\bc}{\begin{center}}
\newcommand{\ec}{\end{center}}
\newcommand\ul[1]{\underline{#1}}
\newcommand\ld{\lambda}
\newcommand\cBc{\mathcal{B}_c}			
\newcommand\Sig{\underline{\Sigma}}	
\newcommand\SigL{\Sig^L}						
\newcommand\SigM{\Sig^M}						
\newcommand\de{\delta}							
\newcommand\De{\Delta}							
\newcommand\eps{\varepsilon}						
\newcommand\pde{\underline{\delta}}	
\newcommand\join{\vee}							
\newcommand\bjoin{\bigvee}					
\newcommand\meet{\wedge}						
\newcommand\bmeet{\bigwedge}				
\newcommand\cH{\mathcal{H}}					
\newcommand\PH{\mathcal{P}(\cH)}		
\newcommand\BcPH{\cBc(\PH)}					
\newcommand\CSet{\mathbf{Set}}			
\newcommand\BA{\mathbf{BA}}					
\newcommand\cBA{\mathbf{cBA}}				
\newcommand\Stone{\mathbf{Stone}}		
\newcommand\Stonean{\mathbf{Stonean}}	
\newcommand\OML{\mathbf{OML}}				
\newcommand\Pos{\mathbf{Pos}}				
\newcommand\Lat{\mathbf{Lat}}				
\newcommand\dom{\operatorname{dom}}	
\newcommand\func{\operatorname{func}}	
\newcommand\ran{\operatorname{ran}}	
\newcommand\op{\operatorname{op}}		
\newcommand\cl{\operatorname{cl}}		
\newcommand\Subcl{\operatorname{Sub}_{\cl}}
\newcommand\BL{\mathcal{B}(L)}			
\newcommand\BcL{\mathcal{B}_c(L)}		
\newcommand\BM{\mathcal{B}(M)}
\newcommand\tphi{\tilde\phi}
\newcommand\Presh[1]{\mathbf{Presh}(#1)}
\newcommand\pair[2]{\langle #1,#2\rangle}
\newcommand\id{\operatorname{id}}
\newcommand\BH{{\mathcal{L}(\cH)}}
\newcommand\SigPH{\Sig^{\PH}}
\newcommand\wpsi{\ps{\mathfrak{w}}^{\psi}}
\newcommand\ps[1]{\underline{#1}}		
\newcommand\cB{\mathcal{B}}
\newcommand\cC{\mathcal{C}}
\newcommand\true{\mathsf{true}}
\newcommand\One{\mathbf{1}}
\newcommand\cE{\mathcal{E}}
\newcommand{\ph}{\phi}
\newcommand{\val}[1]{\|#1\|}
\newcommand{\valj}[1]{\val{#1}_j}
\renewcommand{\And}{\wedge}
\newcommand{\Or}{\vee}
\newcommand{\Not}{\neg}
\newcommand{\Then}{\rightarrow}
\newcommand{\THEN}{\Rightarrow}
\newcommand{\THENJ}{\Rightarrow_j}
\newcommand{\Iff}{\leftrightarrow}
\newcommand{\IFF}{\Leftrightarrow}
\newcommand{\IFFJ}{\Leftrightarrow_j}
\newcommand{\Inf}{\bigwedge}
\newcommand{\Sup}{\bigvee}
\newcommand{\ck}[1]{\hat{#1}}
\newcommand{\commutes }{\,\rotatebox[origin=c]{270}{$\multimap$}\,}
\newcommand{\p}{{}^{\perp}}
\newcommand{\ep}{\varepsilon}
\newcommand{\Q}{\mathbb{Q}}
\newcommand{\R}{\mathbb{R}}
\newcommand{\rS}{{\mathrm S}}
\newcommand{\rC}{{\mathrm C}}
\newcommand{\rR}{{\mathrm R}}
\newcommand{\la}{\lambda}
\newcommand{\Sub}{\Subcl(\Sigma)}
\newcommand{\RS}{\R^{(\Sub)}}
\newcommand{\RSR}{\R^{(\Sub)}_{{\rm r}}}
\newcommand{\RPH}{\R^{(\PH)}}
\newcommand{\VS}{V^{(\Sub)}}
\newcommand{\VL}{V^{(L)}}
\newcommand{\VPH}{V^{(\PH)}}
\newcommand{\oE}{\overline{E}}
\newcommand{\tE}{\tilde{E}}
\begin{document}
\title{\textbf{A Bridge Between Q-Worlds}}
\author{Andreas D{\"o}ring, Benjamin Eva and Masanao Ozawa}
\date{}  
\maketitle

\abstract{Quantum set theory (QST) and topos quantum theory (TQT) are two long running projects in the mathematical foundations of quantum mechanics that share a great deal of conceptual and technical affinity. Most pertinently, both approaches attempt to resolve some of the conceptual difficulties surrounding quantum mechanics by reformulating parts of the theory inside of non-classical mathematical universes, albeit with very different internal logics. We call such mathematical universes, together with those mathematical and logical structures within them that are pertinent to the physical interpretation, `Q-worlds'. Here, we provide a unifying framework that allows us to (i) better understand the relationship between different Q-worlds, and (ii) define a general method for transferring concepts and results between TQT and QST, thereby significantly increasing the expressive power of both approaches. Along the way, we develop a novel connection to paraconsistent logic and introduce a new class of structures that have significant implications for recent work on paraconsistent set theory.}

\section{Introduction}

The idea that the conceptually and philosophically challenging aspects of quantum mechanics (QM) can be understood and even resolved via the adoption of some suitably non-classical logic has been an influential one. In particular, Birkhoff and von Neumann's \cite{BirvNe36} contention that the distributive law is not generally applicable to the description of quantum systems led to the emergence of quantum logic as an important research program in logic, quantum foundations and the philosophy of science.\footnote{For discussions of the philosophical aspects of quantum logic, see \eg Putnam \cite{Put75}, Dummett \cite{Dum76}, Gibbins \cite{Gib08}.}

One possibility created by the logical perspective on QM is to construct non-classical mathematical universes (\ie models of set theory, toposes, etc.) whose internal logic is non-classical and suitably `quantum', and inside of which one can reformulate parts of the theory in a novel and illuminating way. We will call such a mathematical universe, together with those internal logical and mathematical structures that are relevant for the quantum physical interpretation, a `Q-world'.

The first example of such Q-worlds arose from a result of Takeuti \cite{Tak81},  who showed that for any quantum system, there exists a set theoretic universe\footnote{Based on the quantum logic associated with that system.} whose real numbers are in bijective correspondence with the physical quantities associated with that system. These models were subsequently generalised by Ozawa and others (see \eg Ozawa \cite{Oza07,Oza16,Oza17}, Titani \cite{Tit99},
{Ying \cite {Yin05}}) to set-theoretic structures with non-distributive internal logics. The study of these structures has come to be known as `quantum set theory' (QST). The fact that the real numbers in those structures are in bijective correspondence with the set of all physical quantities associated with the given quantum system allows one to reformulate the physics of the system in the internal language of the structures. 

The second example, first studied by Isham \cite{Ish97}, is given by the topos-theoretic reformulation of quantum mechanics. The relevant foundational result here is the reformulation of the Kochen-Specker theorem as a result about the non-existence of global sections of the `spectral presheaf' of a quantum system (see \eg Isham and Butterfield \cite{IshBut98}). Building on this result, the topos-theoretic approach progresses to reformulate quantum mechanics inside of a particular type of presheaf topos. Unlike the Q-worlds studied by Takeuti, Ozawa and others, these `quantum toposes' have a distributive, intuitionistic internal logic. For some of the literature studying the technical and conceptual implications of reformulating standard Hilbert space quantum mechanics in the topos-theoretic setting, see \eg Isham and Butterfield \cite{IshBut98}, D{\"o}ring and Isham \cite{DoeIsh11}, D{\"o}ring \cite{Doe12}, and references therein. We will call this approach `topos quantum theory' (TQT for short).

Until now, the tantalising prospect of unifying these two (kinds of) Q-worlds, non-distributive set theoretic QST and distributive topos-theoretic TQT, within a single formal setting has gone almost completely unexplored (the prospect was first tentatively suggested by Eva \cite{Eva15}). In the present article we develop such a generalised framework and present a number of results that connect QST and TQT. Quite unexpectedly,\footnote{$\ldots$ at least for us, at the beginning of our explorations...} the bridge between the Q-worlds of TQT and those of QST  turns out to be via paraconsistent set theory.

The article is structured as follows. Section \ref{Sec_OrthomodQLogic} recalls some basic ideas from non-distributive quantum logic. In section \ref{Sec_QST} we provide a concise introduction to algebraic valued models of set theory in general, and to QST in particular, summarising the key results of the approach. In section \ref{Sec_TQT} we outline the key ideas of the topos-theoretic reformulation of quantum mechanics, focusing especially on the approach's distributive logical structure. Section \ref{Sec_ParaconsAndDistribInQL} contains the first original contributions of the paper. In particular, we show that embedding standard non-distributive quantum logic into the intuitionistic logic of TQT naturally results in a novel form of paraconsistent quantum logic. After proving a number of new results about the relationship between these different forms of quantum logic, we go on define a new class of paraconsistent Q-worlds in section \ref{Sec_VSubclSig}. We then prove transfer theorems that guarantee the satisfaction of large fragments of classical mathematics inside of these Q-worlds, thereby making an important contribution to the study of inconsistent mathematics and paraconsistent set theory.\footnote{As studied, for example, by Weber \cite{Web10,Web12}, McKubre-Jordens and Weber \cite{Web12}.} Section \ref{Sec_Bridge} shows how these new Q-worlds can be seen as a bridge between the Q-worlds of TQT and QST, and uses them to transfer a number of results between the two settings. Section \ref{Sec_Conclusion} concludes. 

\section{Orthomodular Quantum Logic}	\label{Sec_OrthomodQLogic}

The logical structure of classical physics can be summarised in the following way. To any classical physical system $S$ we can associate a corresponding space $\textbf{S}$ of possible physical states of $S$. A `physical proposition' pertaining to $S$ is a statement of the form ``the value of the physical quantity $A$ for $S$ lies in the Borel set $\de \subseteq \mathbb{R}$''.\footnote{For instance, the sentences ``the momentum of the system $S$ is between 0 and 1 (in suitable units)'' and ``the velocity of $S$ is less than 5'' are examples of physical propositions pertaining to $S$.} Physical propositions of this form are in bijective correspondence (modulo logical equivalence) with measurable subsets of the state space $\textbf{S}$. In particular, each physical proposition corresponds to the set of all states that make that proposition true, which is measurable (and any measurable set of states defines a corresponding physical proposition). Since the measurable subsets of $\textbf{S}$ form a complete Boolean algebra under the usual set theoretic operations (modulo sets of Lebesgue measure 0), we can conclude that the logic governing the physical propositions associated with a classical system $S$ is classical. 

The situation is very different in QM. Given a quantum system $S$, the space of states is always assumed to be a Hilbert space $\cH$, and physical quantities (\eg angular momentum, position, mass etc.) are represented not as real valued functions on $\cH$, but rather as \emph{self-adjoint operators} on $\cH$. The spectral theorem for self-adjoint operators tells us that physical propositions (statements about the values of physical quantities) are in bijective correspondence up to logical equivalence with the closed subspaces of $\cH$ or, equivalently, with projection operators onto the closed subspaces of $\cH$. Thus the logical structure of the physical propositions associated with a quantum system is given by the lattice of projection operators on the corresponding Hilbert space. But since closed subspaces are not closed under unions, this is not a simple subset algebra. In particular, although we can take the meet\footnote{We use `meet' and `join' to refer to the lattice-theoretic operations of greatest lower bound and least upper bound, respectively.} of two subspaces to be the intersection, we need to define the join of two subspaces as 
the closed linear sum, not the union. And as Birkhoff and von Neumann pointed out \cite{BirvNe36}, these two operations do not satisfy the distributive laws, \ie 
\begin{align*}
a \join (b \meet b) &= (a \join b) \meet (a \join c),\\
a \meet (b \join c) &= (a \meet b) \join (a \meet c),
\end{align*}
are not guaranteed to hold where $a, b, c$ are subspaces of a Hilbert space $\cH$ and $\meet, \join$ are the lattice operations given by intersection and closed linear sum, respectively. So, in the standard Hilbert space formulation of QM, the logic governing the physical propositions associated with a quantum system is non-distributive. Given the central role played by the distributive law in classical and intuitionistic logic, this suggests a radically non-classical form of quantum logic. 

The strongest analogue of distributivity that is generally satisfied by the algebra of projections on a Hilbert space is known as the \emph{orthomodular law}. Before stating the law, note that the lattice $\PH$ of projections on a Hilbert space $\cH$ also comes equipped with a canonical orthocomplementation operation $^\bot: \PH \rightarrow \PH$ that takes each subspace to its orthogonal complement. This orthocomplementation operation is fully classical in the sense that it satisfies excluded middle (\ie $a\join a^{\bot}=\top$, where $\top$ is the identity operator, the projection onto the whole of Hilbert space, and hence the top element of $\PH$) and non-contradiction (\ie $a\meet a^{\bot}=\bot$, where $\bot$ is the zero projection onto the null subspace,\footnote{Throughout the paper, we will use $\top$ and $\bot$ to denote the maximal and minimal elements of the relevant lattice/algebra, respectively. We trust that the context will make it clear which algebra $\top$ and $\bot$ inhabit. We also trust that no confusion will arise between (the notation for) the bottom element and the orthocomplementation operation.}
 and hence the bottom element in $\PH$), and is an involution (\ie $a \leq b$ implies $a^{\bot} \geq b^{\bot}$).

\begin{definition}
An orthocomplemented lattice $L$ is called \emph{orthomodular} if and only if for any $a, b \in L$, $a \leq b$ implies that there exists a Boolean sublattice $B \subseteq L$ such that $a, b \in B$.\footnote{Note that this definition is equivalent to the more common, but less instructive algebraic definition, which says that $a \leq b$ implies $a = a \join (b^{\bot} \meet a)$ in an OML.} An orthomodular lattice $L$ is called \emph{complete} if meets and joins of arbitrary families of elements in $L$ exist.
\end{definition}

It is a basic result that the projection operators on a Hilbert space always form a complete orthomodular lattice (complete OML). Thus, quantum logic is standardly characterised as the study of logical systems whose algebraic semantics are given by complete OML's. However, a number of new technical and conceptual difficulties arise once one moves to a non-distributive setting. One issue that will be important for our purposes concerns the definition of a canonical implication operation. It is well known that any distributive lattice uniquely defines a corresponding implication operation. But once we surrender distributivity, the choice of implication operation is no longer obvious. Hardegree \cite{Har81} identifies three 
`quantum material' implication connectives characterized by
the minimal implicative criteria for any orthomodular lattice. 

\begin{definition}	\label{Def_MinImplications}
The \emph{quantum material 
implication connectives} on an orthomodular lattice are the following:

{\centering
\begin{varwidth}{\textwidth}
\begin{enumerate}[{\rm (1)}]
\item (Sasaki conditional) $a \rightarrow_{{\rm S}} b = a^{\bot} \join (a \meet b)$.
\item (Contrapositive Sasaki conditional) $a \rightarrow_{{\rm C}} b = b^{\bot} \rightarrow_{{\rm S}} a^{\bot}$.
\item (Relevance conditional) $a \rightarrow_{{\rm R}} b = (a \rightarrow_{{\rm S}} b) \meet (a \rightarrow_{{\rm C}} b)$. 
\end{enumerate}
\end{varwidth}
\par}
\end{definition}

Hardegree \cite{Har81} proved that these are the only three orthomodular polynomial connectives $\rightarrow$
satisfying {the} 
minimal implicative criteria: (1) modus ponens and modus tollens and 
(2) reflecting and preserving order, \ie  $a \leq b$ if and only if $a \rightarrow b = \top$.
\section{QST: The Basics}	\label{Sec_QST}

We turn now to providing a concise introduction to QST. Firstly, we need to recall some facts about algebraic valued models of set theory. 
\subsection{Boolean Valued Models}

Boolean valued models of set theory were originally developed in the 1960's by Scott, Solovay and Vopenka, as a way of providing a novel perspective on the independence proofs of Paul Cohen.\footnote{For a complete survey of the approach, see Bell \cite{Bel11}.} Intuitively, they can be thought of as generalizations of the usual universe $V$ (the `ground model') of classical set theory (ZFC).

To see this, assume, by recursion, that for any set $x$ in the ground model $V$ of ZFC, there exists a unique corresponding characteristic function $f_{x}$ such that $x \subseteq \dom(f_{x})$ and $\forall y \in \dom(f_{x}) ((f_{x}(y) = 1$ $\leftrightarrow$ $y \in x) \meet (f_{x}(y) = 0 \leftrightarrow y \notin x))$. This assignment allows us to think of $V$ as a class of two-valued functions from itself into the two valued Boolean algebra $\textbf{2} = \{0, 1\}$. This can all be done rigorously with the following definition, by transfinite recursion on $\alpha$:
\[
V_{\alpha}^{(2)} = \{x: \func(x) \meet \ran(x) \subseteq \textbf{2} \meet \exists \xi < \alpha (\dom(x) \subseteq V_{\xi}^{(2)})\}.
\]
Then, we collect each stage of the hierarchy into a single universe by defining the class term 
\[
V^{(2)} = \bigcup_{\alpha} V^{(2)}_{\alpha}.
\]
This allows us to give the following intuitive characterisation of elements of $V^{(2)}$:
\[
x \in V^{(2)}\mbox{ iff }\func(x) \meet \ran(x) \subseteq 2 \meet \dom(x) \subseteq V^{(2)}.
\]

So far, we have simply taken the usual ground model of ZFC and built an equivalent reformalisation of it with no apparent practical purpose. However, we are now in a position to ask another interesting question. What happens if, rather than considering functions into the two-element Boolean algebra, \textbf{2}, we generalise and consider functions whose range is included in an arbitrary but fixed complete Boolean algebra (CBA), $B$? In this case, we obtain the following definition.

\begin{definition}	\label{Def_BooleanValuedModel}
Given a CBA B, the \emph{Boolean valued model} generated by B is the structure $V^{(B)} = \bigcup_{\alpha} V^{(B)}_{\alpha}$, where
\[
V_{\alpha}^{(B)} = \{x: \func(x) \meet \ran(x) \subseteq B \meet \exists \xi < \alpha (\dom(x) \subseteq V_{\xi}^{(B)})\}.
\]
\end{definition}

Now, the natural question to ask here is whether or not these generalised Boolean valued models still provide us with a model of ZFC, and if so, whether the model we obtain in this way has properties that differ from those of the ground model. But we are not yet in a position to provide a sensible answer to that question. For, we do not yet have access to a clear description of what it means for a sentence $\phi$ of the language of set theory
 to hold or fail to hold in $V^{(B)}$ for an arbitrary fixed CBA $B$. So, our immediate task is to define a suitable satisfaction relation for formulae of the language in $V^{(B)}$.\footnote{Of course, we augment the language $L$ of ZFC by adding names for the elements of $V^{(B)}$. We call formulae and sentences in this language `\emph{$B$-formulae}' and `\emph{$B$-sentences}', respectively.} We do this by assigning to each sentence $\phi$ of the language an element $\| \phi \|^{B}$ of $B$, called the `Boolean truth value' of $\phi$.\footnote{We omit the superscript from $\| \phi \|^{B}$ when the CBA $B$ is clear from context.}
We proceed via induction on the complexity of formulae. 

Suppose that Boolean truth values have already been assigned to all atomic $B$-sentences (sentences of the form $u = v, u \in v$, for $u,v \in V^{(B)}$). Then, for $B$-sentences $\phi$ and $\psi$, we set 

\bigskip
{\centering
\begin{varwidth}{\textwidth}
\begin{enumerate}[{\rm (i)}]
\item $\| \phi \meet \psi \| = \| \phi \| \meet \| \psi \|$
\item $\| \phi \join \psi \| = \| \phi \| \join \| \psi \|$
\item $\| \neg \phi\| = \neg \| \phi \|$
\item $\| \phi \rightarrow \psi \| = \| \phi \| \Rightarrow \| \psi \|$
\end{enumerate}
\end{varwidth}
\par}
\bigskip

\noindent where $\neg$ and $\Rightarrow$ represent the complementation and implication operations of $B$. If $\phi(x)$ is a $B$-formula with one free variable $x$ such that $\| \phi(u) \|$ has already been defined for all $u \in V^{(B)}$, we define

\bigskip
{\centering
\begin{varwidth}{\textwidth}
\begin{enumerate}
\item[(v)] $\|\forall x \phi (x) \| =\displaystyle \bmeet_{u \in V^{(B)}}$ $\| \phi (u) \|$
\item[(vi)] $\|\exists x \phi (x) \| = \displaystyle\bjoin_{u \in V^{(B)}}$ $\| \phi (u) \|$

\end{enumerate}
\end{varwidth}
\par}
\bigskip

We also define `bounded' quantifiers $\exists x\in v $ and 
$\forall x\in v$ for every $v\in V^{(B)}$ as

\bigskip
{\centering
\begin{varwidth}{\textwidth}
\begin{enumerate}
\item[(vii)] $\|\forall x\in v\, \phi (x) \| = \displaystyle \bmeet_{u \in \dom(v)} (v(u) \Rightarrow\| \phi (u) \|)$
\item[(viii)] $\|\exists x\in v\, \phi (x) \| = \displaystyle \bjoin_{u \in \dom(v)} (v(u) \meet\| \phi (u) \|)$
\end{enumerate}
\end{varwidth}
\par}
\bigskip
So it only remains to define Boolean truth values for atomic $B$-sentences. For technical reasons, it turns out that the following simultaneous definition is best.

\bigskip
{\centering
\begin{varwidth}{\textwidth}
\begin{enumerate}
\item[(ix)]  $\| u = v \| = \|\forall x\in u\, (x\in v)\| \meet \|\forall y\in v\, (y\in u) \|$
\item[(x)] $\| u \in v \|=\|\exists y\in v\, (u=y)\|$
\end{enumerate}
\end{varwidth}
\par}
\bigskip

We have now defined Boolean truth values for all $B$-sentences. We say that a $B$-sentence $\sigma$ `holds', `is true', or `is satisfied' in $V^{(B)}$ if and only if $\| \sigma \| = \top$. In this case, we write $V^{(B)} \models \sigma$. This is our satisfaction relation. We are now able to ask whether or not $V^{(B)}$ is a model of ZFC. All we have to do is take each axiom and check to see if its Boolean truth value is $\top$. The basic theorem of Boolean valued set theory\footnote{For details and a proof, see chapter 1 of Bell \cite{Bel11}.} is that this is indeed the case. All the axioms of ZFC hold in $V^{(B)}$, regardless of which CBA you choose as your truth value set $B$. However, although all CBA's agree on the truth of the axioms of ZFC, they do not generally agree on the truth of all set theoretic sentences. This is what makes Boolean valued models useful for independence proofs. For example, there are certain choices of $B$ that will make $V^{(B)}$ a model of the continuum hypothesis (CH), and certain choices that will make $V^{(B)}$ a model of $\neg$CH, which demonstrates the independence of CH from the standard axioms of ZFC. 

The following embedding of $V$ into $V^{(B)}$ (for arbitrary but fixed $B$) will play an important role in later sections.
\begin{align*}
\hat{}: V &\rightarrow V^{(B)},\\
x &\mapsto \hat{x}, 
\end{align*}
where $\hat{x} = \{\langle \hat{y}, \top \rangle \: | \: y \in x \} $, \ie $\dom({\hat{x}}) = \{ \hat{y} \: | \: y \in x \}$ and $\hat{x}$ assigns the value $\top$ to every element of its domain. It is easily shown\footnote{See chapter 1 of Bell \cite{Bel11}.} that this embedding satisfies the following properties:

\[
\left\{
\begin{tabular}
[c]{ll}%
$x \in y${ iff }$\| \hat{x} \in \hat{y} \| = \top,$\\
$x \notin y${ iff }$\| \hat{x} \in \hat{y} \| = \bot$.
\end{tabular}
\ \right.
\]

\[
\left\{
\begin{tabular}
[c]{ll}%
$x = y${ iff }$\| \hat{x} = \hat{y} \| = \top,$\\
$x = y${ iff }$\| \hat{x} = \hat{y} \| = \bot$.
\end{tabular}
\ \right.
\]
\smallskip

These properties allow us to establish the following useful results {\cite[Theorem 1.23, Problem 1.24]{Bel11}}:\footnote{Where $\Delta_{0}$-formulae contain only restricted quantifiers and $\Sigma_{1}$-formulae contain no unrestricted universal quantifiers.
{For details, see chapter 13 of Jech \cite{Jec03}.}}
\begin{enumerate}[(1)]
\item
Given a $\De_{0}$-formula $\phi$ with $n$ free variables, and $x_{1},...,x_{n} \in V$, $\phi (x_{1}, ... , x_{n}) \leftrightarrow V^{(B)} \models \phi (\hat{x}_{1}, ... , \hat{x}_{n})$. 
\item
Given a $\Sigma_{1}$-formula $\phi$ with $n$ free variables, and $x_{1},...,x_{n} \in V$, $\phi (x_{1}, ... , x_{n}) \rightarrow V^{(B)} \models \phi (\hat{x}_{1}, ... , \hat{x}_{n})$. 
\end{enumerate}

Since $V^{(B)}$ is a full model of ZFC, it is possible to construct representations of all the usual objects of classical mathematics inside of $V^{(B)}$.  For now, we will be concerned with the real numbers. Specifically, note that the formula defining the predicate $\textbf{R}(x)$ (`x is a Dedekind real number' {or more precisely `x is an upper segment of 
a Dedekind cut of the rational numbers without endpoint'}
) is the following.
\begin{align*}
\mathbf{R}(x) := \forall y \in x (y \in \hat{\mathbb{Q}}) \meet &\exists y \in \hat{\mathbb{Q}} (y \in x) \meet 
\exists y \in \hat{\mathbb{Q}} (y \notin x) \meet\\
&\forall y \in \hat{\mathbb{Q}} (y \in x \leftrightarrow 
\exists z \in \hat{\mathbb{Q}} (z< y \meet z \in x)),
\end{align*}
where $z<y$ is shorthand for the formula $(z,y)\in \hat{R}$, where $R$ is the ordering on $\mathbb{Q}$,
\ie $(z,y)\in R$ if and only if $z\in\mathbb{Q}$, $y\in\mathbb{Q}$, and $z<y$.
Since this is a $\De_{0}$-formula, we know that for any $x \in V$, $V^{(B)} \models \mathbf{R}(\hat{x}) \leftrightarrow x \in \mathbb{R}$. We can use the following definition to construct `the set of all real numbers in $V^{(B)}$'. 

\begin{definition}	\label{Def_R(B)}
Define $\mathbb{R}^{(B)} \subset V^{(B)}$ by 
\[
\mathbb{R}^{(B)} = \{u \in V^{(B)}| \dom(u) = \dom( \hat{\mathbb{Q}}) \meet \|\mathbf{R}(u)\| = \top\}.
\]
\end{definition}

Note that $\mathbb{R}^{(B)}$ is actually not an element of the model $V^{(B)}$. However, we can easily represent $\mathbb{R}^{(B)}$ inside of $V^{(B)}$ in the following way. 

\begin{definition}	\label{Def_R_(B)}
Define $\mathbb{R}_{B} \in V^{(B)}$ by 
\[
\mathbb{R}_{B} = \mathbb{R}^{(B)} \times \{\top\}.
\]
\end{definition}

Think of $\mathbb{R}_{B}$ as the `internal representation' of the real numbers in the model $V^{(B)}$.

\subsection{Boolean Valued Models Generated by Projection Algebras}

Returning to the quantum setting, suppose that we are considering a quantum system represented by a fixed Hilbert space $\cH$ with a corresponding orthomodular lattice of projection operators $\PH$. Then, we can choose a complete Boolean subalgebra $B$ of $\PH$ and build the associated Boolean valued model $V^{(B)}$ in the usual way. The following theorem (due to Takeuti \cite{Tak74}) is the basic founding result of QST. 

\begin{theorem}\label{Thm_InBooleanSubalgRealsAreSelfadjOps} If $B$ is a complete Boolean 
{subalgebra}
of the lattice $\PH$ of projections on a given Hilbert space $\cH$, then the set $\mathbb{R}^{(B)}$ of all real numbers 
in the corresponding model $V^{(B)}$ is isomorphic to the set $SA(B)$ of all self-adjoint operators on $\cH$ 
whose spectral projections all lie in $B$.
\end{theorem}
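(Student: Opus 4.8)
The plan is to exhibit an explicit bijection between $\mathbb{R}^{(B)}$ and $SA(B)$, built from the spectral theorem. Given a self-adjoint operator $A$ on $\cH$ whose spectral projections all lie in $B$, its spectral family $\{E^A_\lambda\}_{\lambda \in \mathbb{R}}$ is a monotone, right-continuous family of projections in $B$ with $\bigwedge_\lambda E^A_\lambda = \bot$ and $\bigvee_\lambda E^A_\lambda = \top$. The idea is to turn this family into an element $u_A \in \mathbb{R}^{(B)}$: set $\dom(u_A) = \dom(\hat{\mathbb{Q}}) = \{\hat{q} \mid q \in \mathbb{Q}\}$ and define $u_A(\hat{q}) = \bigvee_{r \in \mathbb{Q},\, r < q} (\top \setminus E^A_r)$ — i.e. the $B$-truth value that encodes ``$q$ lies in the upper segment of the Dedekind cut determined by $A$''. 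One then checks that $\|\mathbf{R}(u_A)\| = \top$, so that $u_A \in \mathbb{R}^{(B)}$ by Definition \ref{Def_R(B)}; this is the content of verifying that the four conjuncts of the formula $\mathbf{R}(x)$ each evaluate to $\top$, which amounts to re-expressing the monotonicity, right-continuity, and the two boundedness conditions of the spectral family as statements in the Boolean-valued language.

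Conversely, given $u \in \mathbb{R}^{(B)}$, the plan is to recover a spectral family by reading off the values $u(\hat{q}) \in B$ for $q \in \mathbb{Q}$ and defining, for real $\lambda$, $E_\lambda := \bigwedge_{q \in \mathbb{Q},\, q > \lambda}(\top \setminus u(\hat{q}))$. The conditions packaged in $\|\mathbf{R}(u)\| = \top$ are exactly what is needed to show that $\{E_\lambda\}$ is a genuine spectral family valued in $B$ (monotone, right-continuous, with the correct limits at $\pm\infty$), hence determines via the spectral theorem a unique self-adjoint $A_u$ with all spectral projections in $B$. One then verifies that $A \mapsto u_A$ and $u \mapsto A_u$ are mutually inverse. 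The map being well-defined in both directions and the round-trip identities both reduce to the standard correspondence between self-adjoint operators and spectral families, transported across the dictionary ``projection in $B$'' $\leftrightarrow$ ``Boolean truth value in $B$''.

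The word ``isomorphic'' in the statement should be read as more than a bare bijection: $SA(B)$ carries a partial order (and, on the bounded part, more algebraic structure), and the claim is that $\mathbb{R}^{(B)}$ carries a matching order definable in $V^{(B)}$ (the internal order on Dedekind reals, externalized) under which the bijection is an order isomorphism. So an additional step is to check that $A \leq B$ in the operator sense corresponds, under $A \mapsto u_A$, to the pointwise-on-$\dom(\hat{\mathbb{Q}})$ comparison of truth values that represents the internal $\leq$ on $\mathbb{R}^{(B)}$ — this follows from the characterization of operator order in terms of spectral families ($A \leq A'$ iff $E^A_\lambda \geq E^{A'}_\lambda$ for all $\lambda$).

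The main obstacle I expect is not the set-up of the two maps but the verification that $u_A$ actually satisfies $\|\mathbf{R}(u_A)\| = \top$, and symmetrically that $\|\mathbf{R}(u)\| = \top$ forces the recovered family to be a bona fide $B$-valued spectral family. In particular, the fourth conjunct of $\mathbf{R}(x)$ — the ``no endpoint / openness'' condition $\forall y \in \hat{\mathbb{Q}}\,(y \in x \leftrightarrow \exists z \in \hat{\mathbb{Q}}\,(z < y \wedge z \in x))$ — corresponds precisely to right-continuity of the spectral family, and getting the suprema and infima over $\mathbb{Q}$ to interact correctly with the (complete) Boolean operations in $B$, while using that $B$ is a \emph{complete} Boolean \emph{sub}algebra of $\PH$ (so these lattice operations are computed the same way in $B$ as in $\PH$), is the delicate point. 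Everything else is bookkeeping on top of the spectral theorem and the definition of $\|\cdot\|$ given in Section \ref{Sec_QST}.
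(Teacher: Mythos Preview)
The paper does not prove this theorem; it is attributed to Takeuti \cite{Tak74}. The closest the paper comes to a proof is the explicit description of the bijection given after theorem \ref{th:q-reals} (for the case $L=\PH$, which specialises immediately to any complete Boolean subalgebra $B$). Your overall strategy --- build the bijection from the spectral theorem, check that the defining conjuncts of $\mathbf{R}(x)$ correspond to the defining properties of a spectral family, and verify the round-trip identities --- is exactly that of Takeuti and of the paper's description, so in spirit your proposal is on the right track.

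However, your concrete formulas are wrong in a way that would derail the verification. With the standard increasing, right-continuous spectral family $E^A_\lambda$ (so that $E^A_\lambda$ is the projection for ``$A\leq\lambda$''), the element $\top\setminus E^A_r$ is the projection for ``$A>r$'', which is \emph{decreasing} in $r$; hence $\bigvee_{r<q}(\top\setminus E^A_r)=\top$ for every $q$, and your $u_A$ is identically $\top$. The paper's description is much simpler: work with the \emph{left}-continuous spectral family $\{E^X_\lambda\}$ and set $u(\hat r)=E^X_r$ directly, with inverse $E^X_\lambda=\bigvee_{r\in\mathbb{Q}:r<\lambda}u(\hat r)$. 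No complements or auxiliary suprema are needed. Relatedly, the fourth conjunct of $\mathbf{R}(x)$, namely $y\in x\leftrightarrow\exists z<y\,(z\in x)$, unwinds to $u(\hat r)=\bigvee_{s<r}u(\hat s)$, which is \emph{left}-continuity of the family $r\mapsto u(\hat r)$, not right-continuity as you claim. If you insist on right-continuous families you must switch to Takeuti's alternative definition of real (upper segment that is the complement of a lower segment without endpoint), as the paper notes; mixing the paper's $\mathbf{R}(x)$ with right-continuous families is what produced your incorrect formula. Your remarks on order-preservation are a reasonable addition, though the paper does not emphasise this for the present theorem.
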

Intuitively, a Boolean subalgebra $B \subseteq \PH$ corresponds to a set of commuting observables for the relevant quantum system.\footnote{Of course, $B$ can contain eigenprojections for non-commuting operators, but then the idea is that these operators can be simultaneously measured, \emph{to the degree of accuracy determined by the uncertainty relations} \cite{Oza04}.} Commutativity ensures that it is always, in principle, possible to measure these observables at the same time. Thus, any such subalgebra can be thought of as a \emph{classical measurement context} or a \emph{classical perspective} on the quantum system. If we think only about the physical propositions included in $B$, we can reason classically without running into any strange quantum paradoxes. It is only when we try to extend the algebra to include non-commuting projections that things start to go wrong.

Theorem \ref{Thm_InBooleanSubalgRealsAreSelfadjOps} tells us that, to each classical measurement context $B$, there corresponds a set-theoretic universe in which the real numbers are isomorphic to the algebra of physical quantities that we can talk about in that measurement context. This is a deep and enticing result that suggests the possibility of finding mathematical universes whose internal structures are uniquely well suited to describing the physics of specific quantum systems. Towards this end, it is natural to ask whether one can find a model $M$ such that the real numbers in $M$ are isomorphic not just to the set $SA(B)$ of self-adjoint operators whose spectral projections lie in a Boolean subalgebra $B$, but rather to the whole set $SA(\cH)$ of \emph{all} self-adjoint operators on the Hilbert space $\cH$ associated with a given system. Such a model would truly encode the physics of the given system in a complete and robust manner. 

\subsection{Orthomodular Valued Models}
In order to obtain the desired extension of theorem \ref{Thm_InBooleanSubalgRealsAreSelfadjOps}, it is natural to embed the individual Boolean valued models $V^{(B)}$ ($B \subseteq \PH$) within the larger structure $V^{(\PH)}$. This is the strategy introduced by Takeuti \cite{Tak81} and subsequently studied by Ozawa (see \eg Ozawa \cite{Oza07,Oza16b,Oza16,Oza17}), Titani (\cite{Tit99}) and others. But of course, since $\PH$ is a non-distributive lattice, $V^{(\PH)}$ will not be a full model of any well known set theory. However, Ozawa \cite{Oza07} proved a theorem transfer principle that guarantees the satisfaction of significant fragments of classical mathematics within $V^{(\PH)}$. We turn now to outlining this useful result. First, we need some basic definitions. 

{
The structure $V^{(\PH)}$ is defined in a manner parallel with the Boolean-valued model
$V^{(B)}$.  For the later discussions, here we shall give a formal definition for a lattice-valued
model; see Titani \cite{Tit99} for a similar approach.

Let $L$ be a complete lattice.
The $L$-valued model $V^{(L)}$ is the structure $V^{(L)} = \bigcup_{\alpha} V^{(L)}_{\alpha}$, where
\[
V_{\alpha}^{(L)} = \{x: \func(x) \meet \ran(x) \subseteq L \meet \exists \xi < \alpha (\dom(x) \subseteq V_{\xi}^{(L)})\}.
\]
For every $u\in V^{(L)}$, the rank of $u$ 
is defined as the least $\alpha$ such that $u\in V_{\alpha+1}^{(L)}$. 
It is easy to see that if $u\in\dom(v)$ then ${\rm rank}(u) < {\rm rank}(v)$.

\bDefinition \label{intrepretation}
Let $(L,\Rightarrow, *)$ be a triple consisting of 
a complete lattice $L$ with a binary operation 
$\Rightarrow$ (implication)
and a unary operation $*$ (negation).
For any sentence $\phi$ of the language of set theory augmented by the names of elements
of $V^{(L)}$ we define the $L$-valued truth value $\val{\ph}$, called the
$(L,\Rightarrow,*)$-interpretation of $\phi$, 
by the following rules recursive on the rank of elements
of $V^{(L)}$ and the complexity of formulas.
\benum
\item $\val{\ph_1\And\ph_2}=\val{\ph_1}\And\val{\ph_2}$.
\item $\val{\ph_1\Or\ph_2}=\val{\ph_1}\Or\val{\ph_2}$.
\item $\val{\Not\ph}=\val{\ph}^*$.
\item $\val{\ph_1\Then\ph_2}=\val{\ph_1}\Rightarrow\val{\ph_2}$.
\item $\val{\ph_1\Iff\ph_2}=(\val{\ph_1}\THEN\val{\ph_2})\And(\val{\ph_2}\THEN\val{\ph_1})$.
\item $\val{\forall x \ph(x)}=\Inf_{u'\in\VL}\val{\ph(u')}$.
\item $\val{\exists x \ph(x)}=\Sup_{u'\in\VL}\val{\ph(u')}$.
\item $\val{\forall x\in u\, \ph(x)}=\Inf_{u'\in\dom(u)}u(u')\Rightarrow \val{\ph(u')}$.
\item $\val{\exists x\in u\, \ph(x)}=\Sup_{u'\in\dom(u)}u(u')\And \val{\ph(u')}$.
\item $\val{x=y}=\val{\forall x'\in x (x'\in y)}\And
\val{\forall y'\in y (y'\in x)}$.
\item $\val{x\in y}=\val{\exists y'\in y\, (x=y')}$. 
\eenum
We write $\VL\models \ph$ if\/ $\val{\ph}=\top$.
\eDefinition
{Combining (vii)--(x) in the above definition, we obtain
\begin{align*}
{\rm (xii)}&\quad \val{x=y}=\Inf_{x'\in\dom(x)} x(x')\THEN\val{x'\in y}\And 
\Inf_{y'\in\dom(y)}y(y')\THEN\val{y'\in x}.\\
{\rm (xiii)}&\quad \val{x\in y}=\Sup_{y'\in\dom(y)}y(y')\And\val{y'=x}.
\end{align*}}

In the case where $L=\PH$, we shall consider the three implications $\Then_{\rm S}$,
$\Then_{\rm C}$, and $\Then_{\rm R}$, introduced in definition \ref{Def_MinImplications},
and one negation $*=\perp$.

}
\begin{definition}\label{Def_PQCommute} Given $a, b \in \PH$, we say that $a$ and $b$ \emph{commute}, or $a\commutes b$, if $a = (a \meet b) \join (a \meet b^{\bot})$.
\end{definition}

Definition \ref{Def_PQCommute} provides a lattice theoretic generalisation of the notion of commuting operators. 

\begin{definition} Given an arbitrary subset $A \subseteq \PH$, let 
\[
A^{!} = \{a \in \PH\mid a\commutes b \Forall b \in A\}.
\]
\end{definition}

We call $A^{!}$ the \emph{commutant of $A$} since it consists of all the elements of $\PH$ that commute with everything in $A$. 

\begin{definition} Given an arbitrary subset $A \subseteq \PH$, let 
\[
\amalg (A) = \bjoin \{a \in A^{!}\mid  (b_{1} \meet a)\commutes (b_{2} \meet a) \Forall b_{1}, b_{2} \in A\}.
\]
\end{definition}

\begin{definition} Given $u \in V^{(\PH)}$, define the \emph{support of $u$}, $L(u)$, by the following transfinite recursion.
\[
L(u) = \bigcup_{x \in \dom(u)} L(x) \cup \{u(x)\mid x \in \dom(u)\}.
\]

For any $A \subseteq V^{(\PH)}$, we will write $L(A)$ for $\bigcup_{u \in A} L(u)$, and for any $u_{1}, ..., u_{n} \in V^{(\PH)}$, 
we write $L(u_{1}, ..., u_{n})$ for $L(\{u_{1}, ..., u_{n}\})$. 
\end{definition}

\begin{definition} \label{def:commutator}
Given $A \subseteq V^{(\PH)}$, define $\ul{\join} (A)$, the \emph{commutator} of $A$, by 
\[
\ul{\join} (A) = \amalg(L(A)).
\]
\end{definition}

\sloppy
Intuitively, $\ul{\join} (u_{1}, ..., u_{n})$ measures the extent to which the orthomodular-valued sets $u_{1}, ..., u_{n} \in V^{(\PH)}$ `commute' with each other. 
It is a generalisation of the usual operator theoretic notion of a commutator to the set theoretic setting \cite[Theorem 5.4]{Oza07}. 
{If $\ul{\join} (u_{1}, ..., u_{n})=1$, 
elements $u_{1}, ..., u_{n} \in V^{(\PH)}$ perfectly `commute' 
with each other, 
and there exists a complete Boolean subalgebra $B$ of $\PH$ 
such that 
$L(u_{1}, ..., u_{n})\subseteq B$ and 
$u_1,\ldots,u_n\in V^{(B)}$ \cite[Proposition 5.1]{Oza17}.}

Before stating Ozawa's \cite{Oza07,Oza17} `ZFC transfer principle', we need to note an important subtlety that arises when one studies orthomodular valued models. Recall definition \ref{Def_MinImplications}, according to which any orthomodular lattice comes equipped with at least three equally plausible implication connectives. When we were defining the semantics for Boolean valued models, the truth values of conditional sentences were always uniquely determined because of the existence of a single canonical Boolean implication connective. But when we define the set-theoretic semantics for $V^{(\PH)}$, it is always necessary to specify a particular choice of implication. We denote the Sasaki, contrapositive and relevance implications by $\Rightarrow_{j}$ where $j \in \{\rS, \rC, \rR\}$. Similarly, when we want to specify the implication connective used to define the semantics on $V^{(\PH)}$, we write $\|\phi\|_{j}$ to denote the truth values of sentences under that semantics. 
\sloppy
\begin{theorem}\label{Thm_ZFCTransfer} For any $\De_{0}$-formula $\phi(x_{1}, ..., x_{n})$ and any $u_{1}, ..., u_{n} \in V^{(\PH)}$, if $\phi(x_{1}, ..., x_{n})$ is provable in ZFC then 
\[
\ul{\join} (u_{1}, ..., u_{n}) \leq \|\phi({u_{1}, ..., u_{n}})\|_{j}
\] 
for $j = {{\rm S}, {\rm  C}, {\rm R}}.$
\end{theorem}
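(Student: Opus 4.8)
The plan is to prove the statement by induction on the structure of the $\Delta_0$-formula $\phi$, mirroring the way one proves that the axioms of ZFC hold in a Boolean-valued model, but now keeping careful track of the commutator $\ul{\join}(u_1,\dots,u_n)$ as a ``correction term'' that measures how far we are from a classical situation. The guiding intuition, made precise in the remark after Definition \ref{def:commutator}, is that when $\ul{\join}(u_1,\dots,u_n)=\top$ the elements $u_1,\dots,u_n$ all live inside a single complete Boolean subalgebra $B\subseteq\PH$ and $V^{(B)}$ is a genuine model of ZFC, so the inequality degenerates to the classical fact that $\|\phi\|_j=\top$. For general elements one localizes below the projection $\ul{\join}(u_1,\dots,u_n)$: the three Sasaki-type implications all agree with the Boolean implication on any Boolean subalgebra, so below the commutator the $\PH$-valued semantics should coincide with a Boolean-valued semantics and the classical transfer results (1) and (2) from the excerpt can be invoked.

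First I would establish the base case: for atomic $\Delta_0$-sentences $u=v$ and $u\in v$ one must show $\ul{\join}(u,v)\le\|u=v\|_j$ whenever $x=y$ is an (absolute) theorem — but of course $x=y$ is not provable in general, so what is really needed here is not the base case of the theorem as stated but rather the key \emph{localization lemma}: for elements $u_1,\dots,u_n$ with commutator $c=\ul{\join}(u_1,\dots,u_n)$, and any $\Delta_0$-formula $\phi$, the truth value $\|\phi(u_1,\dots,u_n)\|_j$, when met with $c$, does not depend on $j\in\{\rS,\rC,\rR\}$ and equals the truth value computed in an appropriate Boolean subalgebra. This is the technical heart, and it is where I expect the main obstacle to lie: one has to show that the recursively-defined clauses (i)--(xiii) interact well with the commutator, in particular that for bounded quantifiers $\Sup_{u'\in\dom(u)}u(u')\wedge\|\phi(u')\|_j$ and $\Inf_{u'\in\dom(u)}u(u')\Rightarrow_j\|\phi(u')\|_j$ the restriction below $c$ behaves Booleanly. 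The commutator is designed precisely so that $L(u_1,\dots,u_n)$ together with the relevant sub-elements all commute below $c$; the delicate point is controlling the domains of the $u_i$ and all elements appearing in the recursion, which is handled by the definition of the support $L(u)$ and the properties of $\amalg$ and the commutant $(-)^!$.

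Next, with the localization lemma in hand, the inductive step is comparatively routine: for $\phi=\psi_1\wedge\psi_2$ one uses $\ul{\join}(u_1,\dots,u_n)\le\ul{\join}$ of the sub-tuples and the fact that $\wedge$ is monotone; for $\phi=\neg\psi$ one uses that below the commutator $*=\perp$ behaves like Boolean complementation; for bounded quantifiers $\forall x\in u_i\,\psi$ and $\exists x\in u_i\,\psi$ one notes that $\dom(u_i)\subseteq V^{(\PH)}$ and that $L(u')\subseteq L(u_i)$ for $u'\in\dom(u_i)$, so the commutator does not decrease when we pass to elements of the domain, and then applies the induction hypothesis clause-by-clause. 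Finally, since $\phi$ is provable in ZFC and $\phi$ is $\Delta_0$, classical transfer (result (1) in the excerpt, applied inside the Boolean subalgebra $B$ furnished by the localization lemma) gives that the Boolean truth value of $\phi$ is $\top$ in $V^{(B)}$, hence the $\PH$-valued truth value of $\phi(u_1,\dots,u_n)$ met with $c$ equals $c$, which is exactly the desired inequality $\ul{\join}(u_1,\dots,u_n)\le\|\phi(u_1,\dots,u_n)\|_j$. I would also remark that the restriction to $\Delta_0$-formulas is essential because unbounded quantifiers range over all of $V^{(\PH)}$, whose supports need not commute with $u_1,\dots,u_n$, so the localization step would fail; this is the same phenomenon that already limits the classical transfer results to $\Delta_0$ (for the biconditional) and $\Sigma_1$ (for the one-way implication).
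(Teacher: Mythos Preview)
The paper does not give a proof of this theorem: it is stated and attributed to Ozawa \cite{Oza07,Oza17}, and used as a black box. So there is no ``paper's own proof'' to compare against here.

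That said, your outline is broadly faithful to the strategy in Ozawa's original papers. The essential idea you identify --- that below the commutator $c=\ul{\join}(u_1,\dots,u_n)$ all elements of the support $L(u_1,\dots,u_n)$ mutually commute, so one can pass to a complete Boolean subalgebra $B\subseteq\PH$ and invoke the classical Boolean-valued transfer --- is exactly right. The main technical content is your ``localization lemma'': one shows that for $u_1,\dots,u_n\in V^{(\PH)}$ there exist restricted elements $u_1|_c,\dots,u_n|_c$ lying in some $V^{(B)}$ with $B$ a complete Boolean subalgebra of $\PH$, and that $c\meet\|\phi(u_1,\dots,u_n)\|_j = c\meet\|\phi(u_1|_c,\dots,u_n|_c)\|^{B}$ for every $\Delta_0$-formula $\phi$ and every $j$. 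This is proved by induction on formula complexity, using that all three implications $\Rightarrow_j$ coincide with the Boolean implication on any Boolean subalgebra. Once this is in hand, provability of $\phi$ in ZFC gives $\|\phi\|^{B}=\top$ and hence $c\le\|\phi\|_j$.

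Two places where your sketch is a bit loose: first, the ``base case'' discussion is confused --- there is no base case for the theorem itself, only for the localization lemma (truth values of atomic formulas restrict correctly below $c$), and you seem to realize this mid-paragraph. Second, the claim that ``the commutator does not decrease when we pass to elements of the domain'' needs care: what is true is that $L(u')\subseteq L(u_1,\dots,u_n)$ for $u'\in\dom(u_i)$, so the \emph{same} commutator $c$ continues to work for the recursive sub-computations; one does not recompute a new commutator at each step. With these two points tightened, your plan matches Ozawa's argument.
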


Theorem \ref{Thm_ZFCTransfer} allows us to recover large fragments of classical mathematics within `commutative regions' of $V^{(\PH)}$ (and also limits the non-classicality of `non-commutative regions'). 
Although $V^{(\PH)}$ is certainly not a full model of ZFC, theorem \ref{Thm_ZFCTransfer} will allow us to model significant parts of classical mathematics within $V^{(\PH)}$. 
{Specifically, if $\ul{\join} (u_{1}, ..., u_{n})=1$, there exists a complete Boolean subalgebra $B$
of $\PH$ such that $u_{1}, ..., u_{n}\in V^{(B)}$ and that $\|\phi({u_{1}, ..., u_{n}})\|_{j}=1$
for any $\De_{0}$-formula $\phi(x_{1}, ..., x_{n})$ provable in ZFC.} 
It is also important to note that the proof of theorem \ref{Thm_ZFCTransfer} is independent of our choice of 
quantum material implication connective on $\PH$ \cite{Oza17}.

Just as we used definitions \ref{Def_R(B)} and \ref{Def_R_(B)} to construct the real numbers for Boolean valued models, it is possible to construct $\mathbb{R}^{(\PH)}$, $\mathbb{R}_{\PH}$ in the same way. Using theorem \ref{Thm_ZFCTransfer}, Ozawa \cite{Oza07,Oza16b,Oza16} proves a number of useful results characterising the behaviour of $\mathbb{R}_{\PH}$. 
{In particular, note here that for any $u,v\in\RPH$ we have $\valj{u=v}=\top$ if and only if $u=v$, \ie $u(\ck{r})=v(\ck{r})$ for all $r\in\Q$.
}

Recall that the purpose of studying the structure $V^{(\PH)}$ was to establish a bijection between $\mathbb{R}^{(\PH)}$ 
and the set $SA(\cH)$ of all self-adjoint operators on $\cH$ \cite{Tak74,Tak81,Oza07,Oza16}. 

\begin{theorem}\label{th:q-reals}
Given a Hilbert space $\cH$, there exists a bijection between $SA(\cH)$ and $\mathbb{R}^{(\PH)}$.
\end{theorem}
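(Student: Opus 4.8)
The plan is to build the bijection between $SA(\cH)$ and $\RPH$ by gluing together the context-wise bijections supplied by Theorem \ref{Thm_InBooleanSubalgRealsAreSelfadjOps}, and then to check that this gluing is both well-defined and exhaustive. First I would recall the classical fact (the spectral theorem in the form used by Takeuti) that a self-adjoint operator $A$ is completely determined by, and determines, its (right-continuous) spectral family $\{E^A_r\}_{r\in\Q}$ of projections, where $E^A_r$ is the spectral projection of $A$ for the interval $(-\infty,r]$; this family satisfies $E^A_r\meet E^A_s=E^A_{\min(r,s)}$, $\bjoin_r E^A_r=\top$, $\bmeet_r E^A_r=\bot$, and the right-continuity condition $E^A_r=\bmeet_{s>r}E^A_s$. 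Conversely any such $\Q$-indexed family of projections in $\PH$ arises from a unique self-adjoint operator. The key point is that the defining formula $\mathbf R(x)$ for a Dedekind real, when applied to an element $u\in\VPH$ with $\dom(u)=\dom(\hat\Q)$, forces $\val{\mathbf R(u)}_j=\top$ to be \emph{exactly} the statement that the assignment $\ck r\mapsto u(\ck r)$ defines a spectral family of the above kind; this is where the $\De_0$-character of $\mathbf R$ and Theorem \ref{Thm_ZFCTransfer} do the work, since the commutator of a single element $u$ is $\top$ (an element commutes with itself), so all the ZFC-provable $\De_0$-consequences of ``$x$ is a Dedekind real'' hold with value $\top$.

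Given this, I would define the map $\Phi\colon SA(\cH)\to\RPH$ by sending $A$ to the element $u_A\in\VPH$ with $\dom(u_A)=\dom(\hat\Q)$ and $u_A(\ck r)=E^A_r$ for each $r\in\Q$, and conversely $\Psi\colon\RPH\to SA(\cH)$ by sending $u$ to the self-adjoint operator whose spectral family is $r\mapsto u(\ck r)$. The substantive verifications are: (1) $\Phi$ lands in $\RPH$, i.e. $\val{\mathbf R(u_A)}_j=\top$ — this follows by translating each conjunct of $\mathbf R$ into a property of the spectral family and invoking the transfer theorem as above, using $\ul{\join}(u_A)=\top$; (2) $\Psi$ is well-defined, i.e. every $u\in\RPH$ has $r\mapsto u(\ck r)$ a genuine spectral family, which is the converse reading of the same translation; and (3) $\Phi$ and $\Psi$ are mutually inverse. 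For (3) the direction $\Psi\circ\Phi=\id$ is immediate from uniqueness in the spectral theorem, and $\Phi\circ\Psi=\id$ uses the remark quoted just before the theorem statement: for $u,v\in\RPH$, $\valj{u=v}=\top$ if and only if $u(\ck r)=v(\ck r)$ for all $r\in\Q$, so two elements of $\RPH$ with the same spectral family are literally equal as functions.

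It is worth noting that one can alternatively organize the argument through Theorem \ref{Thm_InBooleanSubalgRealsAreSelfadjOps} directly: every $A\in SA(\cH)$ has all its spectral projections in the complete Boolean subalgebra $B_A$ they generate, and every $u\in\RPH$ has $L(u)=\{u(\ck r): r\in\Q\}$ contained in some complete Boolean subalgebra (since $\ul{\join}(u)=\top$, by the cited Proposition 5.1 of \cite{Oza17}), so in each case the question reduces to the already-known Boolean statement $SA(B)\cong\R^{(B)}$, and the only remaining task is to check that the identifications $\R^{(B)}\hookrightarrow\RPH$ for varying $B$ are compatible and jointly surjective. I would present whichever of these two routes is cleaner given the lemmas the paper has already set up; the spectral-family route is more self-contained, while the context route leans more heavily on prior QST machinery.

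The main obstacle I anticipate is handling the \emph{implication-dependence} carefully in step (1): the truth value $\val{\mathbf R(u)}_j$ is computed using $\Then_j$, and $\mathbf R$ contains a biconditional (the cut-without-endpoint clause $\forall y\in\hat\Q\,(y\in x\leftrightarrow\exists z\in\hat\Q\,(z<y\meet z\in x))$) whose value genuinely depends on $j\in\{\rS,\rC,\rR\}$. One must verify either that the transfer theorem delivers value $\top$ uniformly in $j$ (which it does, since $\ul{\join}(u)=\top$ and the theorem's conclusion holds for all three $j$ simultaneously), or, going the other way for step (2), that $\valj{\mathbf R(u)}=\top$ for even one choice of $j$ already forces the spectral-family conditions on $r\mapsto u(\ck r)$ — this requires unwinding the Sasaki/contrapositive/relevance conditionals at the relevant projections and checking the order relations carefully. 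The rest of the proof is bookkeeping with spectral families and the recursive truth-value clauses (viii)--(xiii), which is routine but needs to be done with some care because $\dom(u)=\dom(\hat\Q)$ and the values $u(\ck r)$ interact through the bounded quantifiers.
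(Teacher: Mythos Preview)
The paper does not actually prove this theorem; it is cited from \cite{Tak74,Tak81,Oza07,Oza16}, and the text following the statement merely records the explicit form of the bijection (via left-continuous spectral families) together with the alternative Takeuti convention. So there is no in-paper argument to compare against, only the stated correspondence $u(\ck r)=E^X_r$.

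Your sketch is broadly on the right track---the bijection really is ``spectral family $\leftrightarrow$ assignment $\ck r\mapsto u(\ck r)$''---but there are two genuine gaps and one convention slip.

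First, the claim that ``the commutator of a single element $u$ is $\top$ (an element commutes with itself)'' misreads Definition~\ref{def:commutator}. The commutator $\ul{\join}(u)$ is $\amalg(L(u))$, where $L(u)$ contains all the values $u(\ck r)$; it equals $\top$ exactly when those projections pairwise commute in $\PH$, which is \emph{not} automatic for an arbitrary $u$ with domain $\dom(\hat\Q)$. For $u_A$ built from a spectral family this does hold, but because spectral projections of a single self-adjoint operator commute, not for the reason you give. In the converse direction, for $u\in\RPH$ you still owe an argument: one extracts from $\valj{\mathbf R(u)}=\top$ that $u(\ck r)=\bjoin_{s<r}u(\ck s)$, hence the $u(\ck r)$ form a chain and therefore commute. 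Only then does Proposition~5.1 of \cite{Oza17} yield a Boolean subalgebra containing $L(u)$.

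Second, Theorem~\ref{Thm_ZFCTransfer} cannot be invoked to establish $\valj{\mathbf R(u_A)}=\top$. The transfer principle bounds truth values of $\De_0$-formulae that are \emph{provable in ZFC}; $\mathbf R(x)$ is a predicate, not a theorem, and ``ZFC-provable consequences of $\mathbf R(x)$'' go the wrong way. What is actually needed for step~(1) is a direct unwinding of each conjunct of $\mathbf R$ in terms of the values $u_A(\ck r)=E^A_r$ (entirely parallel to Proposition~\ref{Prop_uLikeSpecFam} later in the paper), or the Boolean-subalgebra route via Theorem~\ref{Thm_InBooleanSubalgRealsAreSelfadjOps} once commutativity of the $E^A_r$ is noted.

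Finally, a convention issue: you work with the right-continuous family $E^A_r$ for $(-\infty,r]$, but the paper's $\mathbf R(x)$ encodes an upper segment \emph{without endpoint}, which forces $u(\ck r)=\bjoin_{s<r}u(\ck s)$ and hence matches the \emph{left}-continuous family $E^X_r$ for $(-\infty,r)$. With your convention the ``no endpoint'' clause can fail at eigenvalues of $A$; the paper explicitly flags this distinction after the theorem statement.
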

{
Recall that, by the spectral theorem for self-adjoint operators, to any self-adjoint operator $X \in SA(\cH)$, 
there corresponds a unique left-continuous family of spectral projections $\{E^{X}_{\la}|\la \in \R\} \subseteq P(\cH)$ 
satisfying the following properties.
\bigskip

{\centering
\begin{varwidth}{\textwidth}
\begin{enumerate}
\item[(i)] $\bigvee \limits_{\la \in \R} E^{X}_{\la} = \top$.
\quad (ii) $\Inf \limits_{\la \in \mathbb{R}} E^{X}_{\la} = \bot$.
\quad (iii) $\bigvee \limits_{\mu \in \mathbb{R}: \mu< \la} E^{X}_{\mu} = E^{X}_{\la}$.
\item[(iv)] $\displaystyle\int_{-\infty}^{+\infty}\la \, d(\xi,E^{X}_\la\psi)=(\xi,X\psi)$ for all $\xi\in\cH$\\
\qquad\qquad and $\psi\in\cH$ 
such that 
 $\displaystyle\int_{-\infty}^{+\infty}\la^2 \, d(\|E^{X}_\la\psi\|^2)<+\infty$.
\end{enumerate}
\end{varwidth}
\par}
\bigskip

Any left-continuous family of projections $\{E_{\la}\}_{\la \in \mathbb{R}} \subseteq P(\cH)$ can be uniquely extended to a countably additive 
projection-valued measure $\tE:\cB(\R)\to \PH$ such that $\tE((-\infty,\la))=E_\la$,
where $\cB(\R)$ denotes the $\sigma$-field of Borel subsets of $\R$,
and conversely any countably additive projection-valued measure can be
obtained in this way.  For the spectral theorem in the projection-valued measure form
we refer the reader to Reed and Simon \cite{RS80}.

In our setting, in which  a real number is defined as an upper segment of a Dedekind cut 
of the rational numbers without endpoint, the bijection between $u\in\RPH$ and $X\in SA(\cH)$
in theorem \ref{th:q-reals} is given by the relations
\benum
\item $u(\ck{r})=E^{X}_r$ \Forall $r\in\Q$,
\item $\displaystyle E^{X}_\la=\Sup_{r\in\Q:r<\la}u(\ck{r})$ \Forall $\la\in\R$,
\eenum
where $\{E^{X}_\la\}_{\la\in\R}$ is the {\em left-continuous} spectral family of $X$.
Note that two left-continuous spectral families are equal if they coincide on $\Q$, \ie $X=Y$ if $E^{X}_r=E^{Y}_r$
for all $r\in\Q$.

Note that  in Takeuti's setting \cite{Tak74,Tak81} (see also \cite{Oza07,Oza16b,Oza16}) a real number is defined as an upper segment of a Dedekind cut 
of the rational numbers that is the complement of a lower segment without endpoint.  
In that case, the bijection between $v\in\RPH$ and $X\in SA(\cH)$ is given by the relations
\benum
\item $v(\ck{r})=\oE^{X}_r$ \Forall $r\in\Q$,
\item $\displaystyle \oE^{X}_\la=\Inf_{r\in\Q:\la<r}v(\ck{r})$ \Forall $\la\in\R$,
\eenum
where $\{\oE^{X}_\la\}_{\la\in\R}$ is the {\em right-continuous} spectral family of $X$,
which satisfies the following properties.

\bigskip
{\centering
\begin{varwidth}{\textwidth}
\begin{enumerate}
\item[(i)] $\Inf \limits_{\la \in \R} (\oE^{X}_{\la})^{\bot} = \bot$.
\quad (ii) $\Inf \limits_{\la \in \mathbb{R}} \oE^{X}_{\la} = \bot$.
\quad (iii) $\Inf \limits_{\mu \in \mathbb{R}: \la<\mu} \oE^{X}_{\mu} = \oE^{X}_{\la}$.
\item[(iv)] $\displaystyle\int_{-\infty}^{+\infty}\la \, d(\xi,\oE^{X}_\la\psi)=(\xi,X\psi)$ for all $\xi\in\cH$\\
\qquad\qquad and $\psi\in\cH$ 
such that 
 $\displaystyle\int_{-\infty}^{+\infty}\la^2 \, d(\|\oE^{X}_\la\psi\|^2)<+\infty$.
\end{enumerate}
\end{varwidth}
\par}
\bigskip
Any right-continuous family $\{\oE_\la\}_{\la\in\R}\subseteq\PH$ of projections
can be uniquely extended to a countably additive 
projection-valued measure $\tE:\cB(\R)\to \PH$ such that $\tE((-\infty,\la])=\oE_\la$
for all $\la\in\R$.  Thus, we have $\oE^{X}_{\la}=E^{X}_{\la}\Or\tE(\{\la\})=\Inf_{r\in\Q:\la<r}E^{X}_{r}$
and $E^{X}_{\la}=\Sup_{r\in\R:r<\la}\oE_{r}$ for any $\la\in\R$.

}

In summary then, given a quantum system with a corresponding Hilbert space $\cH$, we can build the set-theoretic structure $V^{(\PH)}$. We know that distributive regions of $V^{(\PH)}$ behave classically in the sense characterised by theorem \ref{Thm_ZFCTransfer}, and we also know that the real numbers in $V^{(\PH)}$ are in bijective correspondence with the physical quantities associated with the given quantum system.

\section{TQT: Basic Structures and Generalised Stone Duality}	\label{Sec_TQT}

We turn now to providing a basic introduction to the logical aspects of TQT. We focus especially on the version based on orthomodular lattices, first developed by Cannon \cite{Can13} in her MSc thesis and extended by Cannon and D\"oring \cite{CanDoe16}. This version is close in most respects to the standard version based on operators and von Neumann algebras \cite{DoeIsh11}, but the connections with quantum logic and quantum set theory are even more direct in the new version.

The philosophical motivation behind TQT is primarily to provide a `neo-realist' reformulation of quantum theory. Specifically, in TQT one identifies the following two properties as characteristic of any realist theory (for an in depth discussion of the realist interpretation of TQT, see Eva \cite{Eva16}).

\begin{itemize}
	\item[(1)] It is always possible to simultaneously assign truth values to all the physical propositions in a coherent and non-contextual way. 
	\item[(2)] The logic of physical propositions is always distributive. 
\end{itemize}

We have already seen that condition 2 is violated by the standard Hilbert space formalism of QM. Furthermore, the Kochen-Specker theorem tells us that condition 1 cannot be satisfied if the truth values in question form a Boolean algebra and the physical propositions are represented by projection operators. More precisely, the Kochen-Specker theorem shows that there is no non-contextual assignment of classical truth values to all the elements of the projection lattice on a Hilbert space of dimension greater than 2 that preserves the algebraic relationships between the operators. If one defines a `possible world' to be an assignment of classical truth values to all the propositions in the algebra (\eg a `row of the truth table'), then the Kochen-Specker theorem tells us (modulo some important caveats) that there are generally no possible worlds for quantum systems.

Of course, in classical physics, the possible worlds are given by the possible states of the system. Any state assigns to every physical proposition a classical truth value, and any assignment of truth values to all physical propositions uniquely defines a corresponding point in the state space. Thus, we can represent a classical state as a Boolean algebra homomorphism $\ld: B \rightarrow \{0, 1\}$ from the Boolean algebra of physical propositions to the two-element Boolean algebra, $\mathbf{2}=\{0,1\}$. 

One basic aim of TQT is to reformulate the logical structure of QM in a way that satisfies conditions 1 and 2 above. We now sketch the main steps in this enterprise.

\subsection{Stone Duality and Stone Representation}

\emph{Stone duality} goes back to Stone's seminal paper \cite{Sto36}. In modern language, Stone duality is a dual equivalence between the category $\BA$ of Boolean algebras and the category $\Stone$ of Stone spaces, that is, compact, totally disconnected Hausdorff spaces,
\[
				\xymatrix{\BA \ar@<1ex>^-{\Sigma}[rr] \ar@{}|-{\bot}[rr] && \Stone^{\op}. \ar@<1ex>^-{\cl(-)}[ll]}
\]
\begin{itemize}
	\item To each Boolean algebra $B$, the set
	\[
				\Sigma_B:=\{\ld:B\rightarrow \mathbf{2} \mid \ld\text{ is a Boolean algebra homomorphism}\}
	\]
	is assigned. $\Sigma_B$, equipped with the topology generated by the sets $S_a:=\{\ld\in\Sigma_B \mid \ld(a)=1\}$, is called the \textbf{Stone space of $B$}. One can easily check that the sets $S_a$ are clopen, that is, simultaneously closed and open, so the Stone topology has a basis of clopen sets and hence is totally disconnected.
	\item Conversely, to each Stone space $X$, the Boolean algebra $\cl(X)$ of its clopen subsets is assigned. 
\end{itemize}
This is the object level of the two functors above. For the arrows (morphisms), we have:
\begin{itemize}
	\item To each morphism $\phi: B\rightarrow C$ of Boolean algebras, the map
\begin{align*}
			\Sigma(\phi):\Sigma_C &\longrightarrow \Sigma_B\\
			\ld &\longmapsto\ld\circ\phi
\end{align*}
is assigned.
	\item Conversely, to each continuous function $f:X\rightarrow Y$ between Stone spaces, the Boolean algebra morphism
\begin{align*}
			\cl(f):\cl(Y) &\longrightarrow \cl(X)\\
			S &\longmapsto f^{-1}(S)
\end{align*}
is assigned.
\end{itemize}
Note the reversal of direction and `action by pullback' in both cases. Stone duality comes in different variants: there is also a dual equivalence
\[
				\xymatrix{\cBA \ar@<1ex>^-{\Sigma}[rr] \ar@{}|-{\bot}[rr] && \Stonean^{\op}. \ar@<1ex>^-{\cl(-)}[ll]}
\]
between complete Boolean algebras and their morphisms and \emph{Stonean spaces}, which are extremely disconnected compact Hausdorff spaces. The appropriate morphisms between Stonean spaces are open continuous maps.

The \emph{Stone representation theorem} shows that every Boolean algebra $B$ is isomorphic to the Boolean algebra $\cl(\Sigma_B)$ of clopen subsets of its Stone space. The isomorphism is concretely given by
\begin{align*}
			\phi: B &\longrightarrow \cl(\Sigma_{B})\\
			b &\longmapsto S_b:=\{ \ld \in \Sigma_{B} \mid \ld (b) = 1\}.
\end{align*}
While being enormously useful in classical logic, Stone duality and Stone representation do not generalise in any straighforward way to non-distributive orthomodular lattices. In a sense, we can interpret the Kochen-Specker theorem as telling us that the (hypothetical) Stone space of a projection lattice $\PH$ is generally \emph{empty}. Since there are no lattice homomorphisms from $\PH$ into \textbf{2}, there is simply no general notion of the Stone space of an orthomodular lattice. 

This situation was remedied in \cite{Can13,CanDoe16}. In this paragraph, we summarise the main results, which will be presented in more detail in the following subsections. For full proofs and many more details, see \cite{Can13,CanDoe16}. To each orthomodular lattice $L$, one can associate a \emph{spectral presheaf} $\SigL$, which is a straigthtforward generalisation of the Stone space of a Boolean algebra. The assignment is contravariantly functorial, and $\SigL$ is a complete invariant of $L$. Moreover, the clopen subobjects of the spectral presheaf $\SigL$ form a complete bi-Heyting algebra $\Subcl(\SigL)$, generalising the Boolean algebra $\cl(\Sigma_B)$ of clopen subsets of the Stone space of a Boolean algebra $B$. For each complete orthomodular lattice (OML) $L$, there is a map called \emph{daseinisation}, 
\begin{align*}
			\pde: L &\longrightarrow \Subcl(\SigL)\\
			a &\longmapsto \pde(a),
\end{align*}
which provides a representation of $L$ in $\Subcl(\SigL)$. Different from the classical case, $\pde$ is not an isomorphism, but it is injective, monotone, join-preserving, and most importantly, it has an adjoint $\eps:\Subcl(\SigL)\rightarrow L$ such that $\eps\circ\pde$ is the identity on $L$. Hence, the daseinisation map $\pde$ (and its adjoint $\eps$) provide a bridge between the orthomodular world and the topos-based world, at least at the propositional level. In later sections, we will extend this bridge to the level of predicate logic.

\subsection{The Spectral Presheaf}
Given an orthomodular lattice $L$, let $\BL$ be the set of Boolean subalgebras of $L$, partially ordered by inclusion. $\BL$ is called the \emph{context category of $L$}. If $L$ is complete, it makes sense to consider the poset $\BcL$ of complete Boolean subalgebras.

Let $\Lat$ be the category of lattices and lattice morphisms, let $\OML$ denote the category of orthomodular lattices and orthomorphisms (lattice morphisms that preserve the orthocomplement), and let $\Pos$ be the category of posets and monotone maps. When going from an orthomodular lattice $L$ to its context category $\BL$, seemingly a lot of information is lost. By only considering Boolean subalgebras, any non-distributivity in $L$ is discarded. Moreover, since $\BL$ is just a \emph{poset}, the inner structure of each $B\in\BL$ as a Boolean algebra is lost (or so it seems). Somewhat surprisingly, Harding and Navara \cite{HarNav11} (see also \cite{HHLN19}) proved that $\BL$ is a complete invariant of $L$ :
\begin{theorem}
(Harding, Navara) Let $L,M$ be orthomodular lattices with no $4$-element blocks (\ie no maximal Boolean subalgebras with only $4$ elements). Then $L\simeq M$ in $\OML$ if and only if $\BL\simeq\BM$ in $\Pos$.
\end{theorem}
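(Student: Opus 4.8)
\noindent\textit{Proof proposal.} The forward direction is routine: an orthomodular-lattice isomorphism $f\colon L\to M$ sends Boolean subalgebras to Boolean subalgebras and preserves inclusion, so $B\mapsto f[B]$ is a poset isomorphism $\BL\to\BM$. So assume we are given a poset isomorphism $\Phi\colon\BL\to\BM$; the task is to manufacture an orthomodular-lattice isomorphism $f\colon L\to M$ inducing it.

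First I would set up a dictionary between the poset $\BL$ and the lattice $L$. The least element of $\BL$ is $\mathbf{2}=\{0,1\}$; the atoms of $\BL$ are exactly the four-element Boolean subalgebras, and these correspond bijectively to the unordered complementary pairs $\{a,a\p\}$ with $a\in L\setminus\{0,1\}$, so $\Phi$ already induces a bijection between the complementary pairs of $L$ and those of $M$. The maximal elements of $\BL$ are the blocks, and for every $A\in\BL$ the down-set ${\downarrow}A$ in $\BL$ equals (as an ordered set) the subalgebra lattice $\mathrm{Sub}(A)$ of the Boolean algebra $A$. Moreover $a\commutes b$ in $L$ iff the atoms $\{a,a\p\}$, $\{b,b\p\}$ have a common upper bound in $\BL$, and when $b\notin\{a,a\p\}$ one can read off, from whether they lie below a common upper bound isomorphic to $\mathbf{2}^3$, whether some one of the four meets $a^{\pm}\meet b^{\pm}$ vanishes, i.e.\ whether an order relation holds among $a^{\pm},b^{\pm}$. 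All of this structure transfers across $\Phi$.

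Next I would prove a rigidity statement for finite Boolean algebras: for $n,n'\ge 3$, any poset isomorphism $\mathrm{Sub}(\mathbf{2}^n)\to\mathrm{Sub}(\mathbf{2}^{n'})$ forces $n=n'$ and is induced by a \emph{unique} Boolean isomorphism, because $\mathrm{Sub}(\mathbf{2}^n)$ is (up to duality) the partition lattice on $n$ points, whose automorphism group is $S_n\cong\mathrm{Aut}(\mathbf{2}^n)$. This is where the hypothesis enters: since neither $L$ nor $M$ has a four-element block, the subalgebra $\{0,a,a\p,1\}$ is never maximal, so each $a\in L\setminus\{0,1\}$ commutes with some $b\notin\{0,a,a\p,1\}$ and hence lies in a \emph{finite} Boolean subalgebra $D$ with at least three atoms; for such $D$, the isomorphism $\Phi|_{{\downarrow}D}$ lifts uniquely to a Boolean isomorphism $f_D\colon D\to\Phi(D)$, and I would set $f(a):=f_D(a)$ (and $f(0):=0$, $f(1):=1$). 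The analogous rigidity fails for $\mathbf{2}^2$, whose subalgebra lattice is a two-element chain that does not see the nontrivial automorphism of $\mathbf{2}^2$ — precisely what makes the next step nontrivial.

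The main obstacle is to show $f$ is well defined and is an isomorphism. Well-definedness amounts to: if $a\in D\cap D'$ for two such finite charts $D,D'$, then $f_D(a)=f_{D'}(a)$. If $D\cap D'$ has at least three atoms this is immediate, since both restrictions induce $\Phi|_{{\downarrow}(D\cap D')}$ and hence agree by the uniqueness above; the hard case is $D\cap D'\in\{\mathbf{2},\mathbf{2}^2\}$, where one must exclude a ``sign flip'' $f_D(a)=f_{D'}(a\p)$. I expect this coherence problem to be the heart of the proof, and I would attack it via a Zorn's-lemma argument: consider partial maps $f$ defined on sub-orthoposets of $L$ that preserve $\le$ and $(-)\p$ and agree with $\Phi$ on every Boolean subalgebra contained in their domain, order these by extension, and argue that a maximal such $f$ must have domain all of $L$ — the delicate point being the extension step, which requires showing that $\Phi|_{{\downarrow}D}$ can always be lifted to a Boolean isomorphism $f_D\colon D\to\Phi(D)$ \emph{compatible with the values already chosen}, using the order-relation data recovered in the dictionary to pin down the orientation of each newly encountered complementary pair. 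Granting this, $f$ is a bijection preserving $\le$ and $(-)\p$ in both directions (injectivity, surjectivity, and the reverse implications follow by running the same construction from $\Phi^{-1}$), hence an order isomorphism, hence an isomorphism in $\OML$ (meets and joins being order-determined); and since $f$ agrees with $\Phi$ on all finite Boolean subalgebras and every element of $\BL$ is the supremum in $\BL$ of the finite subalgebras below it, $f$ induces $\Phi$, completing the argument.
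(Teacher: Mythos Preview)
The paper does not prove this theorem; it is stated with attribution to Harding and Navara and cited to \cite{HarNav11} (see also \cite{HHLN19}), so there is no in-paper argument to compare against.

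On its merits, your strategy is sound and close in spirit to the original: atoms of $\BL$ are the complementary pairs $\{a,a\p\}$; the no-4-element-block hypothesis ensures each pair sits in some finite Boolean subalgebra with $\ge 3$ atoms; and for $n\ge 3$ the subalgebra lattice of $\mathbf{2}^n$ (equivalently the partition lattice $\Pi_n$) has automorphism group exactly $S_n\cong\mathrm{Aut}(\mathbf{2}^n)$, so $\Phi$ lifts \emph{uniquely} on each such chart $D$. You also correctly isolate the coherence problem---ruling out a sign flip $f_D(a)=f_{D'}(a\p)$ when $D\cap D'$ has at most two atoms---as the crux. However, your Zorn's-lemma sketch does not actually resolve it: the ``extension step'' you describe is the original difficulty restated, and your earlier observation that a $\mathbf{2}^3$ upper bound signals \emph{some} vanishing meet among $a^{\pm}\meet b^{\pm}$ does not by itself say \emph{which} one, so it does not pin down orientations without further argument. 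The substantive combinatorics---propagating a consistent choice of orientation across overlapping charts and showing no global obstruction arises---is exactly where Harding and Navara do the real work, and that part is missing here.
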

We also note that every morphism $\phi:L\rightarrow M$ between orthomodular lattices induces a monotone map
\begin{align*}
			\tphi: \BL &\longrightarrow \BM\\
			B &\longmapsto \phi[B].
\end{align*}
This means that we have a (covariant) functor from $\OML$ to $\Pos$, $L\rightarrow\BL$, $\phi\rightarrow\tphi$. There are obvious analogues of these constructions for complete OMLs.

While $\BL$ certainly cannot be regarded as a generalised Stone space of $L$, it is significant that $\BL$ is a complete invariant and that the assignment $L\mapsto\BL$ is functorial. This suggests to consider structures built over $\BL$ that are related to Stone spaces. In fact, the simplest idea works: each $B\in\BL$ is a Boolean algebra, so it has a Stone space $\Sigma_B$. If $B'\subset B$, then there is a canonical map from $\Sigma_B$ to $\Sigma_{B'}$, given by restriction:
\begin{align*}
			r(B'\subset B): \Sigma_B &\longrightarrow \Sigma_{B'}\\
			\ld &\longmapsto \ld|_{B'}.
\end{align*}
It is well-known that this map is surjective and continuous with respect to the Stone topologies. Moreover, being a quotient map between compact Hausdorff spaces, it is a closed map. If $B$ and $B'$ are complete, then $\Sigma_B$ and $\Sigma_{B'}$ are Stonean and $r(B'\subset B)$ is also open.

This allows us to define a presheaf over $\BL$ that comprises all the `local' Stone spaces $\Sigma_B$ and glues them together:
\begin{definition}
The \emph{spectral presheaf}, $\SigL$, of an orthomodular lattice $L$ is the presheaf over $\BL$ given
\begin{itemize}
	\item [(a)] on 
	objects: for all $B\in\BL$, $\SigL_{B} := \Sigma_{B}$ (where $\SigL_B$ denotes the component of $\SigL$ at $B$),
	
	\item [(b)] on 
	arrows: for all $B,B'\in\BL$ such that $B' \subset B$, $\SigL(B'\subset B):=r(B'\subset B)$.
\end{itemize}
\end{definition}
If $L$ is a complete OML, then there is an obvious analogue, the spectral presheaf of $L$ over $\BcL$. Obviously, the spectral presheaf $\SigL$ of an orthomodular lattice $L$ is a generalisation of the Stone space $\Sigma_B$ of a Boolean algebra $B$.\footnote{For now we abide by the convention used in the literature on TQT of \underline{underlining} presheaves. We will drop this convention in section \ref{Sec_ParaconsAndDistribInQL} to keep things neat.}

In order to discuss whether the assignment $L\mapsto\SigL$ is functorial, we first have to define a suitable category in which the spectral presheaf $\SigL$ is an object, and the appropriate morphisms between such presheaves. This is done in great detail in \cite{Can13,CanDoe16}. One considers a category $\Presh{\Stone}$ of presheaves with values in Stone spaces, but over varying base categories. (We saw that if $L$ is not isomorphic to $M$, then their context categories $\BL,\BM$ are not isomorphic, either.)\footnote{Alternatively, one could consider a category of presheaf topoi over different base categories, together with a distinguished spectral object in each topos.} With this in place, Cannon \cite{Can13} showed:
\begin{theorem}
There is a \emph{contravariant} functor
\begin{align*}
			\Sig: \OML &\longrightarrow \Presh{\Stone}
\end{align*}
sending an object $L$ of $\OML$ to $\SigL$, which is an object in $\Presh{\Stone}$, and a morphism $\phi:L\rightarrow M$ in $\OML$ to a morphism $\pair{\tphi}{\mathcal{G}_\phi}:\SigM\rightarrow\SigL$ in $\Presh{\Stone}$ in the opposite direction.

If $L,M$ are orthomodular lattices, then $L\simeq M$ in $\Lat$ if and only if $\SigL\simeq\SigM$ in $\Presh{\Stone}$.
\end{theorem}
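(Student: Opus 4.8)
The plan is to prove the two assertions in turn: first that $\Sig$ is a contravariant functor into $\Presh{\Stone}$, and then that $\SigL$ is a complete invariant of $L$ up to lattice isomorphism. Throughout I would use the precise description of $\Presh{\Stone}$ from \cite{Can13,CanDoe16}: an object is a functor $P\colon\cC^{\op}\to\Stone$ for a small poset $\cC$, and a morphism $P\to Q$, with $P$ over $\cD$ and $Q$ over $\cC$, is a pair $\pair{F}{\tau}$ where $F\colon\cC\to\cD$ is monotone and $\tau\colon P\circ F^{\op}\Rightarrow Q$ is a natural transformation whose components are continuous maps of Stone spaces; composition is the evident Grothendieck-style pairing of composed base maps with composed whiskered natural transformations. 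On objects $\Sig$ is already defined. For a morphism $\phi\colon L\to M$ in $\OML$, I would first note that since $\phi$ preserves $\meet,\join,{}^{\bot},\bot,\top$, the image $\phi[B]$ of any $B\in\BL$ is a Boolean subalgebra of $M$, so $\tphi\colon\BL\to\BM$, $B\mapsto\phi[B]$, is a well-defined monotone map; then, since $\phi|_B\colon B\to\phi[B]$ is a morphism of Boolean algebras, Stone duality supplies a continuous map $(\mathcal{G}_\phi)_B:=\Sigma(\phi|_B)\colon\Sigma_{\phi[B]}\to\Sigma_B$, $\ld\mapsto\ld\circ\phi|_B$. Setting $\Sig(\phi):=\pair{\tphi}{\mathcal{G}_\phi}\colon\SigM\to\SigL$, it remains to verify that $\mathcal{G}_\phi$ is natural and that $\Sig$ respects identities and contravariant composition.

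These verifications are routine. Naturality of $\mathcal{G}_\phi$ with respect to an inclusion $B'\subseteq B$ in $\BL$ is the statement that restricting a character along $\phi[B']\subseteq\phi[B]$ and then precomposing with $\phi|_{B'}$ gives the same map as precomposing with $\phi|_B$ and then restricting to $B'$ --- both send $\ld$ to $b'\mapsto\ld(\phi(b'))$. For functoriality, $\Sig(\id_L)=\id_{\SigL}$ is immediate, and for $L\xrightarrow{\phi}M\xrightarrow{\psi}N$ one has $\widetilde{\psi\circ\phi}=\tilde\psi\circ\tphi$ because $(\psi\circ\phi)[B]=\psi[\phi[B]]$, while the identity $\ld\circ(\psi\circ\phi)|_B=(\ld\circ\psi|_{\phi[B]})\circ\phi|_B$ exhibits $\mathcal{G}_{\psi\circ\phi}$ as the $\Presh{\Stone}$-composite of $\mathcal{G}_\phi$ with (the whiskering of) $\mathcal{G}_\psi$. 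Hence $\Sig\colon\OML\to\Presh{\Stone}$ is a contravariant functor.

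For the invariance claim, the implication from $L\simeq M$ in $\Lat$ to $\SigL\simeq\SigM$ is the easy one: a lattice isomorphism $\phi$ restricts on each Boolean subalgebra to a bijective lattice homomorphism, hence to an isomorphism of Boolean algebras (complements in a distributive lattice being unique), so $B\mapsto\phi[B]$ is a poset isomorphism $\BL\to\BM$ covered by the natural isomorphism of spectral presheaves built exactly as above --- here one uses that for orthomodular lattices a lattice isomorphism transports the context structure faithfully, i.e.\ the orthocomplement relevant to each context is already recovered from the lattice order on that context. The converse is the substantive direction. An isomorphism $\pair{F}{\tau}\colon\SigL\to\SigM$ in $\Presh{\Stone}$ forces $F\colon\BM\to\BL$ to be an isomorphism of posets, so $\BL\simeq\BM$ in $\Pos$; one could now invoke the Harding--Navara theorem recalled above, but it carries a `no $4$-element blocks' hypothesis absent here, so I would instead reconstruct $L$ from $\SigL$ directly. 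The presheaf $\SigL$ determines its base poset $\BL$; via Stone representation it determines, for each $B\in\BL$, the Boolean algebra $\cl(\Sigma_B)\cong B$; its transition maps determine the inclusions $B'\hookrightarrow B$; and the components of $\tau$, being homeomorphisms, yield Boolean-algebra isomorphisms $B\xrightarrow{\sim}F(B)$ compatible with all inclusions by naturality of $\tau$. Since in an orthomodular lattice every element lies in some Boolean subalgebra and both order and orthocomplementation are computed locally within such subalgebras, the canonical lattice homomorphism $\varinjlim_{B\in\BL}B\to L$ out of the colimit over the inclusion diagram is a bijection, hence an isomorphism; the same holds for $M$, and since the two diagrams are isomorphic so are their colimits, giving $L\simeq M$ in $\Lat$.

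The main obstacle is exactly this last reconstruction: to show that $\SigL$ recovers not merely each local Boolean piece $B$ together with the poset $\BL$, but the whole orthomodular lattice $L$. Orthomodularity is used essentially here, through the fact that the order and the orthocomplement of $L$ are determined locally inside its Boolean subalgebras; making the colimit argument precise, and in particular handling the $4$-element blocks that obstruct the Harding--Navara statement, is the delicate part. By comparison the functoriality in the first two paragraphs is bookkeeping, the only genuine checks being the naturality squares for $\mathcal{G}_\phi$ and the contravariant composition law.
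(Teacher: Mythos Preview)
The paper does not actually prove this theorem; it is stated as a result of Cannon \cite{Can13} (see also \cite{CanDoe16}) with no argument given in the present text. So there is no ``paper's own proof'' to compare against, only the external references.

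That said, your sketch is essentially correct and matches the strategy one finds in those references. The functoriality part is exactly the expected construction, and your checks (naturality of $\mathcal{G}_\phi$, contravariant composition) are the right ones. For the invariance claim, the forward direction is unproblematic. For the converse, your idea of bypassing Harding--Navara and reconstructing $L$ directly from the diagram of its Boolean subalgebras is the correct move, and it works precisely because the presheaf isomorphism carries strictly more data than a mere poset isomorphism $\BL\simeq\BM$: the components $\tau_B$, via Stone duality, furnish coherent Boolean-algebra isomorphisms $B\cong F(B)$ compatible with all inclusions, so the two inclusion diagrams are isomorphic as diagrams, not just their indexing posets.

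Two small points worth tightening. First, the colimit should be taken in $\Pos$, not in $\Lat$ (or, equivalently, phrase the argument as a direct gluing of the local isomorphisms into a single bijection $L\to M$). The colimit in $\Pos$ of the inclusion diagram over $\BL$ really is $L$: the underlying set is $\bigcup_{B\in\BL}B=L$ since every element lies in $\{0,a,a^\perp,1\}$, and the generated order coincides with $\leq_L$ because in an OML $a\leq b$ implies $a\commutes b$, hence $a,b$ lie in a common $B$. An order isomorphism between lattices is automatically a lattice isomorphism, so you land in $\Lat$ as required. Second, your worry about $4$-element blocks is not actually an obstruction here: that hypothesis is needed in Harding--Navara because a bare poset isomorphism $\BL\simeq\BM$ does not tell you how to match the two nontrivial elements of a $4$-element block; in your setting the homeomorphism $\tau_B$ already makes that choice, coherently with all sub- and super-contexts, so the ambiguity never arises.
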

Again, there is an obvious version of this result for complete OMLs. This shows that the spectral presheaf $\SigL$ is a complete invariant of an orthomodular lattice $L$, and the assignment $L\rightarrow\SigL$ is contravariantly functorial. This strengthens the interpretation of $\SigL$ as a generalised Stone space. Yet, there are two caveats: the behaviour of the orthocomplement under the assignment $L\mapsto\SigL(L)$ was not considered in \cite{Can13} and only very briefly in \cite{CanDoe16}; this will be done in section \ref{Sec_ParaconsAndDistribInQL} below. More importantly, the result is weaker than a full duality, since not every object in $\Presh{\Stone}$ is in the image of the functor $\Sig$.

\subsection{The Algebra of Clopen Subobjects}
Before we can generalise the Stone representation, we define $\Subcl(\SigL)$, the collection of clopen subobjects of the spectral presheaf, and briefly consider its algebraic structure. The guiding idea is that clopen subobjects of the spectral presheaf of an OML generalise the clopen subsets of the Stone space of a Boolean algebra.

\begin{definition}
Let $L$ be an orthomodular lattice, $\BL$ its context category, and $\SigL$ its spectral presheaf. A subobject (\ie a subpresheaf) $\ps S$ of $\SigL$ is called \emph{clopen} if for all $B\in\BL$, the component $\ps S_B$ of $\ps S$ at $B$ is a clopen subset of $\SigL_B=\Sigma_B$, the Stone space of $B$. The collection of clopen subobjects is denoted $\Subcl(\SigL)$.
\end{definition}

We first observe that $\Subcl{\SigL}$ is a partially ordered set with a natural order given by
\begin{align*}
			\forall \ps S,\ps T\in\Subcl{\SigL}: \ps S\leq\ps T :\Longleftrightarrow (\forall B\in\BL:\ps S_B\subseteq\ps T_B).
\end{align*}
Clearly, the empty subobject $\ps\emptyset$ is the bottom element and $\SigL$ is the top element. Moreover, meets and joins exist with respect to this order and are given as follows: for every finite family $(\ps S_i)_{i\in I}\subset\Subcl(\SigL)$ and for all $B\in\BL$,
\begin{align*}
			(\bmeet_{i\in I}\ps S_i)_B &= \bmeet_{i\in I}\ps S_{i;B},\\
			(\bjoin_{i\in I}\ps S_i)_B &= \bjoin_{i\in I}\ps S_{i;B},
\end{align*}
where $\ps S_{i;B}$ denotes the component of $\ps S_i$ at $B$. If $L$ is a complete OML, then we consider the context category $\BcL$ of complete Boolean sublattices and the spectral presheaf $\SigL$ is constructed over $\BcL$. Then \emph{all} meets and joins exist in $\Subcl(\SigL)$, not just finite ones, and they are given as follows: for any family $(\ps S_i)_{i\in I}\subseteq\Subcl\SigL$,
\begin{align*}
			(\bmeet_{i\in I}\ps S_i)_B &= \operatorname{int}(\bmeet_{i\in I}\ps S_{i;B}),\\
			(\bjoin_{i\in I}\ps S_i)_B &= \operatorname{cls}(\bjoin_{i\in I}\ps S_{i;B}),
\end{align*}
where $\operatorname{int}$ denotes the interior and $\operatorname{cls}$ the closure. For now, we go back to general OMLs.

Since meets and joins are calculated componentwise, $\Subcl(\SigL)$ is a distributive lattice. If $L$ is complete, finite meets distribute over arbitrary joins and vice versa. Additionally, it is easy to show that $\Subcl(\SigL)$ is a both a \emph{Heyting algebra} and a \emph{co-Heyting algebra} (or \emph{Brouwer algebra}), hence a \emph{bi-Heyting algebra}. If $L$ is complete, then $\Subcl(\SigL)$ is a complete bi-Heyting algebra \cite{Doe12}. The Heyting algebra structure gives an intuitionistic propositional calculus, the co-Heyting algebra structure a paraconsistent one. The Heyting negation and the co-Heyting negation on $\Subcl(\SigL)$ are not related closely to the orthocomplement in $L$, so we will not consider them further. Instead, we will define a third negation on $\Subcl(\SigL)$ in section \ref{Sec_ParaconsAndDistribInQL} that indeed is closely related to the orthocomplement in $L$ and will prove very useful.

\subsection{Representing a Complete Orthomodular Lattice in the Algebra of Clopen Subobjects by Daseinisation}
From now on, we specialise to complete orthomodular lattices,
{and accordingly we take the spectral presheaf $\SigL$ to be constructed over
the context category $\BcL$ of complete Boolean sublattices.}
 We aim to generalise the Stone representation by defining a suitable map from a complete OML $L$ to the algebra $\Subcl(\SigL)$ of clopen subobjects of its spectral presheaf.

Let $a\in L$, and let $B\in\BcL$ be a (complete) context. If $a\in B$, then we simply use the standard Stone representation and assign the clopen subset $S_a=\{\ld\in\Sigma_B \mid \ld(a)=1\}$ of $\SigL_B=\Sigma_B$ to it. Yet, if $a\notin B$, then we have to approximate $a$ in $L$ first: define
\begin{align*}
			\de_B(a) := \bmeet\{b\in B \mid b\geq a\}.
\end{align*}
The meet exists in $B$, since $B$ is complete. $\de_B(a)$ is the smallest element of $B$ that dominates $a$.
By the Stone representation, there is a clopen subset of $\Sigma_B$ corresponding to $\de_B(a)$, given by $S_{\de_B(a)}=\{\ld\in\Sigma_B \mid \ld(\de_B(a))=1\}$. In this way, we obtain one clopen subset in each component $\Sig_B$ of the spectral presheaf, where $B$ varies over $\BcL$. It is straightforward to check that these clopen subsets form a clopen subobject (see \cite{Can13,CanDoe16}), which we denote by $\pde(a)$,
\begin{align*}
			\forall B\in\BcL: \pde(a)_B = S_{\de_B(a)}.
\end{align*}
We call $\pde(a)\in\Subcl(\SigL)$ the \emph{daseinisation of $a$}.\footnote{{
Note that this is also called \emph{outer daseinisation of $a$} to distinguish it from an inner daseinisation of $a$, which analogously 
approximate $a$ from below.  In this paper, we confine our attention to outer daseinisations, and yet we will extend our arguments
to inner daseinisations elsewhere. }} 

\begin{proposition}\label{th:delta}
Let $L$ be a complete orthomodular lattice, $\BcL$ its complete context category, $\SigL$ its spectral presheaf, and $\Subcl(\SigL)$ the complete distributive lattice of clopen subobjects of $\SigL$. The \emph{daseinisation map}
\begin{align*}
			\pde: L &\longrightarrow \Subcl(\SigL)\\
			a &\longmapsto \pde(a)
\end{align*}
has the following properties:
\begin{enumerate}[{\rm (i)}]
	\item $\pde$ is injective, but not surjective,
	\item $\pde$ is monotone,
	\item $\pde$ preserves bottom and top elements,
	\item $\pde$ preserves all joins, \ie for any family $(a_i)_{i\in I}\subseteq L$, it holds that
		\begin{align*}
					\bjoin_{i\in I} \pde(a_i) = \pde(\bjoin_{i\in I} a_i).
		\end{align*}
	\item[(v)] for meets, we have $\pde(\bmeet \limits_{i \in I} a_{i}) \leq \bmeet \limits_{i \in I}\pde(a_{i})$.
\end{enumerate}
\end{proposition}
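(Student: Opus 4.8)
The plan is to verify properties (i)--(v) directly from the pointwise definition $\pde(a)_B = S_{\de_B(a)}$, using the classical Stone representation $a \mapsto S_a$ inside each context $B$ together with the behaviour of the approximation maps $\de_B$. The guiding principle throughout is that meets and joins in $\Subcl(\SigL)$ are computed componentwise (as recalled just before the proposition), so each claim reduces to a statement about the family $(\de_B)_{B \in \BcL}$ and the clopen sets $S_{\de_B(a)} \subseteq \Sigma_B$.

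First I would record the elementary facts about $\de_B$: it is monotone (if $a \leq a'$ then any $b \in B$ dominating $a'$ dominates $a$, so $\de_B(a) \leq \de_B(a')$), it fixes elements of $B$ (so $\de_B(0) = 0$ and $\de_B(1) = 1$), and — the key point for (iv) — it preserves joins, i.e. $\de_B(\bjoin_i a_i) = \bjoin_i \de_B(a_i)$. The $\geq$ direction of this last identity is monotonicity; for $\leq$, note that $\bjoin_i \de_B(a_i) \in B$ (as $B$ is a complete subalgebra) and dominates each $a_i$, hence dominates $\bjoin_i a_i$, hence dominates $\de_B(\bjoin_i a_i)$. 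Then (iii) is immediate since $\pde(0)_B = S_{0} = \emptyset$ and $\pde(1)_B = S_1 = \Sigma_B$ for every $B$; (ii) follows because $\de_B$ is monotone and the Stone map $b \mapsto S_b$ is an order-isomorphism $B \to \cl(\Sigma_B)$, so $a \leq a'$ gives $S_{\de_B(a)} \subseteq S_{\de_B(a')}$ in every component; and (iv) follows since, in each component, $\bigl(\bjoin_i \pde(a_i)\bigr)_B = \bjoin_i S_{\de_B(a_i)} = S_{\bjoin_i \de_B(a_i)} = S_{\de_B(\bjoin_i a_i)} = \pde(\bjoin_i a_i)_B$, where the second equality uses that the Stone map preserves joins and the third uses join-preservation of $\de_B$. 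Property (v) is then a formal consequence of monotonicity: $\bmeet_i a_i \leq a_k$ for each $k$ gives $\pde(\bmeet_i a_i) \leq \pde(a_k)$, hence $\pde(\bmeet_i a_i) \leq \bmeet_i \pde(a_i)$.

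The one genuinely non-formal part is (i), the injectivity and non-surjectivity of $\pde$. For injectivity, suppose $a \neq a'$ in $L$; I would exhibit a context witnessing $\pde(a) \neq \pde(a')$. The natural candidate is a (complete) Boolean subalgebra $B$ containing $a$ — for instance the subalgebra generated by $a$ and $a^\bot$, which is a $4$-element block, or any complete Boolean subalgebra containing $a$. In such a $B$ we have $\de_B(a) = a$, so $\pde(a)_B = S_a$; if also $\pde(a')_B = S_a$ then $\de_B(a') = a$ by faithfulness of the Stone representation, and one then has to see that this cannot happen simultaneously for a well-chosen family of contexts unless $a' = a$. Concretely, picking $B \ni a$ and $B' \ni a'$ and comparing forces $a' \leq \de_B(a') = a$ and symmetrically $a \leq a'$, giving $a = a'$; so injectivity follows. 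For non-surjectivity one points to a clopen subobject that is not of the form $\pde(a)$: the standard argument is that $\pde$ is join-preserving but, by the Kochen--Specker theorem (equivalently, the emptiness of global sections / the failure of Stone duality noted earlier in the excerpt), it cannot be surjective — a cardinality or structural mismatch between the OML $L$ and the bi-Heyting algebra $\Subcl(\SigL)$, which is distributive while $L$ is not. I expect this non-surjectivity step, making precise ``which'' clopen subobject is missed, to be the main obstacle; the cleanest route is to cite the detailed treatment in \cite{Can13,CanDoe16} rather than reconstruct it, since it is exactly the content of the generalised Stone representation developed there. All the remaining clauses are bookkeeping with the componentwise formulas above.
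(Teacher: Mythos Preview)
Your proposal is correct and follows essentially the same approach as the paper: the paper's proof is a terse sketch (``(i)--(iii) are easy to prove; (iv) follows since $\de_B$ is left adjoint to the inclusion $B\hookrightarrow L$ and hence preserves joins; (v) is a consequence of (ii)''), and your argument simply unpacks these steps --- your direct verification that $\de_B$ preserves joins is exactly the elementary content of the left-adjoint observation, and your injectivity argument via contexts $B\ni a$, $B'\ni a'$ together with the distributivity mismatch for non-surjectivity is the natural way to cash out what the paper calls ``easy''.
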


\begin{proof}
(i)--(iii) are easy to prove. (iv) follows since meets and joins are calculated componentwise, and $\de_B:L\rightarrow B$ is the left adjoint of the inclusion $B\hookrightarrow L$, so $\de_B$ preserves colimits, which are joins. (v) is a consequence of (ii).
\end{proof}

The map $\pde:L\rightarrow\Subcl(\SigL)$ is our generalisation of the Stone representation $B\rightarrow\cl(\Sigma_B)$. Different from the Stone representation, $\pde$ is not an isomorphism, and it cannot be, since $L$ is non-distributive in general, while $\Subcl(\SigL)$ is distributive. Yet, since $\pde$ is injective, no information is lost. 

Moreover, $\pde$ is a join-preserving map between complete lattices, so it has a right adjoint $\eps:\Subcl(\SigL)\rightarrow L$, given by
\begin{align*}
			\forall \ps S\in\Subcl(\SigL): \eps(\ps S) = \bjoin\{a\in L \mid \pde(a)\leq\ps S\}
\end{align*}
The properties of $\eps$ are analogous to those of $\pde$:
\begin{proposition}
Let $L$ be a complete orthomodular lattice, $\BcL$ its complete context category, $\SigL$ its spectral presheaf, and $\Subcl(\SigL)$ the complete distributive lattice of clopen subobjects of $\SigL$. The right adjoint $\eps$ of $\pde$ has the following properties:
\begin{enumerate}[{\rm (i)}]
	\item $\eps$ is surjective, but not injective,
	\item $\eps$ is monotone,
	\item $\eps$ preserves bottom and top elements,
	\item $\eps$ preserves all meets, \ie for any family $(\ps S_i)_{i\in I}\subseteq\Subcl(\SigL)$, it holds that
		\begin{align*}
					\bmeet_{i\in I} \eps(\ps S_i) = \eps(\bmeet_{i\in I} \ps S_i),
		\end{align*}
	\item[(v)] for joins, we have $\eps(\ps S\join\ps T)\geq\eps(\ps S)\join\eps(\ps T)$.
\end{enumerate}
\end{proposition}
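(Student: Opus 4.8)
The plan is to derive all five items from the single structural fact already in hand: $\eps$ is the \emph{right adjoint} of the join-preserving map $\pde:L\to\Subcl(\SigL)$, so that $\pde(a)\leq\ps S$ if and only if $a\leq\eps(\ps S)$ for every $a\in L$ and every $\ps S\in\Subcl(\SigL)$. I will combine this with the properties of $\pde$ from Proposition~\ref{th:delta} (injective, not surjective, preserves $\bot$, $\top$ and all joins) and with the identity $\eps\circ\pde=\id_L$ noted earlier. The explicit formula $\eps(\ps S)=\bjoin\{a\in L\mid\pde(a)\leq\ps S\}$ is convenient for a couple of the direct checks.

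First I would dispose of the ``formal'' items. Monotonicity (ii) is immediate, either because right adjoints are monotone or because enlarging $\ps S$ enlarges the set $\{a\mid\pde(a)\leq\ps S\}$ over which the join is taken. For (iv): a right adjoint preserves all limits that exist in its domain, and in the complete lattice $\Subcl(\SigL)$ every meet is such a limit, so $\eps(\bmeet_{i\in I}\ps S_{i})=\bmeet_{i\in I}\eps(\ps S_{i})$ for arbitrary families; if a reader prefers, the same equality can be checked by hand from the defining join using monotonicity of $\pde$ and the counit inequality below. Preservation of $\top$ in (iii) is the special case of (iv) for the empty family, or directly $\eps(\SigL)=\bjoin L=\top$. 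For $\bot$: by Proposition~\ref{th:delta}(iii), $\pde(\bot)=\ps\emptyset$, and since $\pde$ is injective the set $\{a\mid\pde(a)\leq\ps\emptyset\}=\{a\mid\pde(a)=\ps\emptyset\}$ is precisely $\{\bot\}$, whence $\eps(\ps\emptyset)=\bot$. Item (v) is just monotonicity applied to $\ps S,\ps T\leq\ps S\join\ps T$, giving $\eps(\ps S)\join\eps(\ps T)\leq\eps(\ps S\join\ps T)$; this is generally strict because a right adjoint need not preserve joins, and one can note that equality would in fact force $\pde$ to be an order-isomorphism.

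The one item requiring a genuine (though short) argument is (i). Surjectivity is immediate: every $a\in L$ equals $\eps(\pde(a))$ by $\eps\circ\pde=\id_L$. For non-injectivity I would argue by contradiction. The adjunction gives the counit inequality $\pde(\eps(\ps S))\leq\ps S$ for every $\ps S$ (apply the defining biconditional to $a=\eps(\ps S)$, using $\eps(\ps S)\leq\eps(\ps S)$). If $\pde(\eps(\ps S))=\ps S$ held for \emph{all} $\ps S$, then every clopen subobject would lie in the range of $\pde$, contradicting the non-surjectivity of $\pde$ in Proposition~\ref{th:delta}(i). Hence there is some $\ps S_{0}$ with $\ps T_{0}:=\pde(\eps(\ps S_{0}))\neq\ps S_{0}$; yet $\eps(\ps T_{0})=\eps(\pde(\eps(\ps S_{0})))=\eps(\ps S_{0})$ by $\eps\circ\pde=\id_L$, so $\ps S_{0}$ and $\ps T_{0}$ are distinct clopen subobjects with the same image under $\eps$. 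I expect this last point --- the passage from ``$\pde$ not surjective'' to ``$\eps$ not injective'', \ie the equivalence $\pde\circ\eps=\id\iff\pde$ surjective --- to be the only place where one must think rather than quote adjunction generalities; everything else is bookkeeping.
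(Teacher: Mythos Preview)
The paper does not actually supply a proof of this proposition; it is stated and immediately followed by the next lemma, presumably because the properties are considered routine dualisations of those of $\pde$ (whose proof is itself only a two-line sketch). Your argument is correct and is exactly the kind of adjunction bookkeeping one would expect here: monotonicity, meet-preservation, and the top element from right-adjoint-preserves-limits; bottom from $\pde(\bot)=\ps\emptyset$ and injectivity; the join inequality from monotonicity; surjectivity from $\eps\circ\pde=\id_L$; and non-injectivity from the counit $\pde\circ\eps\leq\id$ together with non-surjectivity of $\pde$.

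One small point of presentation: you cite $\eps\circ\pde=\id_L$ as ``noted earlier,'' but in the paper this identity is recorded only as a Corollary \emph{after} the present proposition. There is no circularity, however, since it follows immediately from the Galois-connection fact that a left adjoint is injective if and only if the unit is an identity, and injectivity of $\pde$ is already established in Proposition~\ref{th:delta}(i). You might simply replace the phrase ``noted earlier'' by a one-line derivation of $\eps\circ\pde=\id_L$ from the adjunction and the injectivity of $\pde$.
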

If $B\in\BcL$ is a context and $S\in\cl(\Sigma_B)$, then we denote the element of $B$ corresponding to $S$ under the Stone representation by $a_S$. Carmen Constantin first proved the following useful lemma (see \cite{Can13}):
\begin{lemma}
For all $\ps S\in\Subcl(\SigL)$,
\begin{align*}
			\eps(\ps S) = \bmeet_{B\in\BcL} a_{\ps S_B}.
\end{align*}
\end{lemma}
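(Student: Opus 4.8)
The plan is to establish the stated formula $\eps(\ps S) = \bmeet_{B\in\BcL} a_{\ps S_B}$ by a direct computation from the adjunction, unwinding both sides. Recall that by definition $\eps(\ps S) = \bjoin\{a\in L \mid \pde(a)\leq\ps S\}$, where $\pde(a)\leq\ps S$ means exactly that $S_{\de_B(a)} \subseteq \ps S_B$ for every $B\in\BcL$. Under the Stone representation for each context $B$, the clopen subset $\ps S_B$ corresponds to the element $a_{\ps S_B}\in B$, and $S_{\de_B(a)}$ corresponds to $\de_B(a)\in B$; since the Stone representation is an order isomorphism between $B$ and $\cl(\Sigma_B)$, the inclusion $S_{\de_B(a)}\subseteq\ps S_B$ is equivalent to $\de_B(a)\leq a_{\ps S_B}$ in $B$. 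Now $\de_B(a) = \bmeet\{b\in B \mid b\geq a\}$ is the least element of $B$ above $a$, so $\de_B(a)\leq a_{\ps S_B}$ holds if and only if $a\leq a_{\ps S_B}$ (using that $a_{\ps S_B}\in B$, so it is one of the $b$'s in the meet, giving one direction, and $a\leq\de_B(a)$ giving the other). Hence the condition $\pde(a)\leq\ps S$ is equivalent to the conjunction over all $B\in\BcL$ of $a\leq a_{\ps S_B}$, i.e. to $a\leq\bmeet_{B\in\BcL} a_{\ps S_B}$.

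With this reformulation in hand, the result is immediate: $\eps(\ps S) = \bjoin\{a\in L \mid a\leq\bmeet_{B\in\BcL} a_{\ps S_B}\} = \bmeet_{B\in\BcL} a_{\ps S_B}$, since the join of the down-set of an element $c$ of a complete lattice is $c$ itself. The only point requiring a little care is that $\bmeet_{B\in\BcL} a_{\ps S_B}$ is a well-defined element of $L$ (which it is, as $L$ is a complete OML), and that this element need \emph{not} itself lie in any single context $B$ — but that does not matter, because $\eps$ takes values in $L$, not in any particular $B$.

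I expect the main obstacle — really the only non-formal point — to be the verification that $\ps S$ being a genuine \emph{clopen subobject} (a subpresheaf, compatible with the restriction maps $r(B'\subset B)$) is consistent with, and does not further constrain, the pointwise description above; in particular one should check that $\bmeet_{B}a_{\ps S_B}$ correctly reproduces $\eps$ without any hidden interaction between components. Since $\eps$ is defined purely via the adjunction and the characterization of $\pde(a)\leq\ps S$ is genuinely componentwise (meets and joins in $\Subcl(\SigL)$ being computed componentwise by Proposition~\ref{th:delta} and the surrounding discussion), no such interaction arises, and the componentwise argument above is complete and rigorous. One could alternatively phrase the whole proof abstractly: $\eps$ is the right adjoint of the join-preserving $\pde$, $\pde = \bjoin_B (\text{daseinisation into }B)$ componentwise, each component map has right adjoint $S\mapsto a_S$, and right adjoints of a join of maps are the meet of the right adjoints — but the explicit calculation is shorter and more transparent, so that is the route I would write up.
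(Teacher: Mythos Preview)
Your proof is correct. The key observation---that $\pde(a)\leq\ps S$ unwinds, via the Stone isomorphism in each context and the adjunction $\de_B\dashv (B\hookrightarrow L)$, to the single inequality $a\leq\bmeet_{B}a_{\ps S_B}$---is exactly what is needed, and the final step is then immediate.

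As for comparison with the paper: the paper does not actually prove this lemma. It attributes the result to Carmen Constantin and cites \cite{Can13} for the argument, so there is no in-text proof to compare against. Your direct computation from the defining formula for the right adjoint is the natural approach and is almost certainly what appears in the cited thesis; the alternative abstract phrasing you mention at the end (right adjoint of a componentwise join of left adjoints is the meet of the componentwise right adjoints) is the same argument in categorical clothing.

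One very minor remark: in step~4 you use that $a\leq\de_B(a)$. This relies on the meet $\bmeet\{b\in B\mid b\geq a\}$, taken in $B$, coinciding with the meet taken in $L$ (so that the result still dominates $a$). The paper asserts this explicitly when it says ``$\de_B(a)$ is the smallest element of $B$ that dominates $a$'', so within the paper's framework you may take it for granted; but if you were writing this up independently you would want to note that the contexts $B\in\BcL$ are complete Boolean \emph{sublattices} of $L$, i.e.\ closed under arbitrary meets and joins in $L$, which is what makes this step go through.
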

This means that in order to calculate $\eps(\ps S)$, we simply switch in each context from the clopen set $\ps S_B$ to the corresponding element $a_{\ps S_B}$ of $B$ and then take the meet (in $L$) over all contexts.
\begin{corollary}
With the notation above, {we have $\eps\circ\pde=\id$ on $L$ and $\pde\circ\ep\le\id$ on $\Subcl(\Sig)$}
for any complete orthomodular lattice $L$.
\end{corollary}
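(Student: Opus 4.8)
The corollary asserts two facts: that $\eps\circ\pde=\id$ on $L$, and that $\pde\circ\eps\leq\id$ on $\Subcl(\SigL)$. Both are essentially formal consequences of the adjunction $\pde\dashv\eps$ together with the explicit formula for $\eps$ provided by Constantin's lemma, so the plan is to simply assemble these ingredients.

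For the first statement, I would argue as follows. Fix $a\in L$. By the previous lemma, $\eps(\pde(a))=\bmeet_{B\in\BcL} a_{\pde(a)_B}$. Now $\pde(a)_B=S_{\de_B(a)}$ by construction of daseinisation, and under the Stone representation the clopen set $S_{\de_B(a)}$ corresponds to the element $\de_B(a)$ of $B$; that is, $a_{\pde(a)_B}=\de_B(a)$. Hence $\eps(\pde(a))=\bmeet_{B\in\BcL}\de_B(a)$. Since $\de_B(a)=\bmeet\{b\in B\mid b\geq a\}\geq a$ for every $B$, the meet over all $B$ is $\geq a$. Conversely, the Boolean algebra $B_0$ generated by $\{a,a^{\bot}\}$ (a complete Boolean subalgebra, containing $a$) satisfies $\de_{B_0}(a)=a$, so the meet is $\leq a$. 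Therefore $\eps(\pde(a))=a$. Alternatively, and more slickly, one can avoid the lemma entirely: from the adjunction $\pde\dashv\eps$ one has the unit inequality $a\leq\eps(\pde(a))$, and since $\pde$ is injective (Proposition~\ref{th:delta}(i)) and $\pde(\eps(\pde(a)))\geq\pde(a)$ must be compared with the counit — actually the cleanest purely formal route is: $\pde$ injective plus $\pde\circ\eps\leq\id$ (the counit, proved next) gives $\pde(\eps(\pde(a)))\leq\pde(a)$, while the unit gives $\pde(a)\leq\pde(\eps(\pde(a)))$, so $\pde(\eps(\pde(a)))=\pde(a)$ and injectivity of $\pde$ yields $\eps(\pde(a))=a$.

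For the second statement, $\pde\circ\eps\leq\id$ on $\Subcl(\SigL)$ is exactly the counit of the adjunction $\pde\dashv\eps$: for a right adjoint $\eps$ to a functor $\pde$ between posets, the counit at $\ps S$ is the inequality $\pde(\eps(\ps S))\leq\ps S$. Concretely, $\eps(\ps S)=\bjoin\{a\in L\mid \pde(a)\leq\ps S\}$, and since $\pde$ preserves joins (Proposition~\ref{th:delta}(iv)), $\pde(\eps(\ps S))=\bjoin\{\pde(a)\mid \pde(a)\leq\ps S\}\leq\ps S$. So this direction is immediate from join-preservation and the definition of $\eps$.

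I do not anticipate a genuine obstacle here — the result is a packaging of the adjunction. The only point requiring a moment's care is the identification $a_{\pde(a)_B}=\de_B(a)$ used in the lemma-based proof of the first claim, i.e.\ that the clopen set assigned to $a$ in context $B$ corresponds, under the inverse Stone isomorphism in $B$, precisely to $\de_B(a)$; this is immediate from the definition of $\pde$ and the fact that Stone representation is an isomorphism $B\xrightarrow{\sim}\cl(\Sigma_B)$. I would present the formal (adjunction-plus-injectivity) argument as the main proof since it is shortest and makes clear that the statement is structural, and remark that the lemma gives the explicit computation $\eps(\pde(a))=\bmeet_{B\in\BcL}\de_B(a)=a$ as an alternative.
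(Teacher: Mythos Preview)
Your proposal is correct. The paper gives no explicit proof of this corollary, treating it as an immediate consequence of the preceding lemma (Constantin's formula $\eps(\ps S)=\bmeet_{B\in\BcL}a_{\ps S_B}$) together with the adjunction $\pde\dashv\eps$; your lemma-based computation $\eps(\pde(a))=\bmeet_B\de_B(a)=a$ is exactly the argument the placement of the corollary suggests, and your derivation of $\pde\circ\eps\leq\id$ as the counit (via join-preservation of $\pde$) is the standard one.
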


$\eps$ can be used to define an equivalence relation on $\Subcl(\Sig)$, given by $\ps S \sim \ps T$ if and only if $\eps(\ps S) = \eps(\ps T)$. We let $E$ denote the set of all equivalence classes of $\Subcl(\Sig)$ under this equivalence relation. $E$ can be turned into a complete lattice by defining $\bmeet_{i \in I} [\ps S_{i}] = [\bmeet_{i \in I} \ps S_{i}]$, $[\ps S] \leq [\ps T]$ if and only if $[\ps S] \meet [\ps T] = [\ps S]$ and $\bjoin_{i \in I} [\ps S_{i}] = \bmeet\{[\ps T] \mid \forall i \in I: [\ps S_{i}] \leq [\ps T]\}$. Then it is straightforward to show

\begin{theorem}	\label{Thm_EAndLIsomorphicAsComplLattices}
With the notation above, $E$ and $L$ are isomorphic as complete lattices. In particular, the maps $g: E \rightarrow L$ and $f: L \rightarrow E$ defined by $g([\ps S]) = \eps (\ps S)$ and $f(a) = [\pde(a)]$ are an inverse pair of complete lattice isomorphisms. 
\end{theorem}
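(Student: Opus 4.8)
The plan is to verify that the maps $g([\ps S]) = \eps(\ps S)$ and $f(a) = [\pde(a)]$ are well defined, mutually inverse, and order-preserving in both directions, so that they constitute an isomorphism of complete lattices. First I would check well-definedness: $g$ is well defined by the very definition of the equivalence relation $\sim$ (if $\ps S \sim \ps T$ then $\eps(\ps S) = \eps(\ps T)$), and $f$ is well defined trivially since it is defined on elements of $L$, not on equivalence classes.

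Next I would establish that $f$ and $g$ are inverse to each other. For $g \circ f = \id_L$, take $a \in L$; then $g(f(a)) = g([\pde(a)]) = \eps(\pde(a)) = a$ by the Corollary ($\eps \circ \pde = \id$ on $L$). For $f \circ g = \id_E$, take $[\ps S] \in E$; then $f(g([\ps S])) = f(\eps(\ps S)) = [\pde(\eps(\ps S))]$, and it remains to see that $\pde(\eps(\ps S)) \sim \ps S$, \ie $\eps(\pde(\eps(\ps S))) = \eps(\ps S)$. This follows again from $\eps \circ \pde = \id$: applying it to the element $\eps(\ps S) \in L$ gives $\eps(\pde(\eps(\ps S))) = \eps(\ps S)$, as required. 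So $f$ and $g$ are mutually inverse bijections between $E$ and $L$.

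Then I would check that both maps are monotone. Since $\eps$ is monotone (Proposition on $\eps$, part (ii)) and the order on $E$ is defined so that $[\ps S] \leq [\ps T]$ whenever this holds for suitable representatives, monotonicity of $g$ needs a small argument: if $[\ps S] \leq [\ps T]$ in $E$, \ie $[\ps S] \meet [\ps T] = [\ps S]$, then since meets in $E$ are defined by $[\ps S] \meet [\ps T] = [\ps S \meet \ps T]$, we get $\eps(\ps S \meet \ps T) = \eps(\ps S)$; but $\eps$ preserves meets (part (iv)), so $\eps(\ps S) \meet \eps(\ps T) = \eps(\ps S)$, \ie $\eps(\ps S) \leq \eps(\ps T)$ in $L$. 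Conversely, $f$ is monotone because $\pde$ is monotone (Proposition \ref{th:delta}(ii)): if $a \leq b$ then $\pde(a) \leq \pde(b)$, hence $[\pde(a)] \leq [\pde(b)]$ in $E$ (using that $\pde(a) \meet \pde(b) = \pde(a)$ would follow from $\pde$ preserving... actually more directly, $\pde(a) \leq \pde(b)$ gives $\pde(a)\meet\pde(b)=\pde(a)$, so $[\pde(a)\meet\pde(b)]=[\pde(a)]$, \ie $[\pde(a)]\meet[\pde(b)]=[\pde(a)]$). A monotone bijection with monotone inverse is an order isomorphism, hence automatically preserves all existing meets and joins; since $L$ is a complete OML and $E$ was equipped with a complete lattice structure, $f$ and $g$ are isomorphisms of complete lattices.

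\textbf{Main obstacle.} The only genuinely delicate point is the identity $\pde(\eps(\ps S)) \sim \ps S$ in the computation of $f \circ g$; it is tempting to hope for $\pde(\eps(\ps S)) = \ps S$, which is \emph{false} in general (that is exactly why one passes to equivalence classes, since $\pde$ is not surjective and $\pde \circ \eps \leq \id$ only). The resolution is to use $\eps \circ \pde = \id$ rather than $\pde \circ \eps = \id$, applying it to the element $\eps(\ps S) \in L$; once one sees this, everything else is routine adjunction bookkeeping. A secondary but minor care point is making sure the definition of the order (and of joins) on $E$ is compatible with the meet-based definition $[\ps S]\meet[\ps T] = [\ps S\meet\ps T]$, which is needed for the monotonicity arguments above; this is immediate from the stated construction of $E$.
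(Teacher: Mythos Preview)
Your proposal is correct and is precisely the routine verification the paper has in mind; the paper itself gives no proof beyond the phrase ``it is straightforward to show'', so your argument (well-definedness via the definition of $\sim$, mutual inverses via $\eps\circ\pde=\id$, and order-preservation in both directions via monotonicity of $\eps$ and $\pde$ together with the meet-based definition of the order on $E$) is exactly what the authors leave to the reader.
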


This result can be seen as an alternative generalisation of the Stone representation theorem to orthomodular lattices. It tells us that an orthomodular lattice can be represented isomorphically by the clopen subobjects of the spectral presheaf \emph{modulo the equivalence relation induced by $\eps$}. Of course, the lattice $E=\Subcl(\SigL)/\simeq$ is non-distributive, while $\Subcl(\SigL)$ is distributive. In the following, we will mostly consider $\Subcl(\SigL)$ as the algebra of propositions suggested by the topos approach.

So far, we have seen that TQT allows for an alternative formalisation of the logical structure of QM that does not require us to surrender distributivity. Furthermore, we have also seen that the spectral presheaf allows for the derivation of an analogue of Stone's theorem for the orthomodular setting. However, we have not yet addressed the problem of how we can assign truth values to the newly formalised physical propositions in a coherent way. 

\subsection{Presheaf Topoi over Posets}
We briefly recall some basic definitions from topos theory \cite{McLMoe92,Bel08,Joh0203}, in particular that of a subobject classifier. We then specialise to presheaf topoi and further to presheaf topoi over posets and consider the subobject classifier and truth values in such a topos.

In the following, let $\One$ denote the terminal object in a category (if it exists).
 
\begin{definition}
In a category $\cC$ with finite limits, a \emph{subobject classifier} is a monic, $\true:\One\rightarrow\Omega$, such that to every monic $S\rightarrow X$ in $\cC$ there is a unique arrow $\chi$ which, with the given monic, forms a pullback square
\begin{diagram}
	S & \rTo^! & \One \\
	\dInto^{} & & \dInto_{\true} \\
	X & \rDashto_{\chi} & \Omega. \\
\end{diagram}
\end{definition}
This means that in a category with subobject classifier, every monic is the pullback of the special monic $\true$. In a slight abuse of language, we will often call the object $\Omega$ the subobject classifier. In $\CSet$, we have $\Omega=\{0,1\}$ 
{and $\chi$ is the characteristic function of the set $S$.}

\begin{definition}
An \emph{elementary topos} is a category $\cE$ with the following properties:
\begin{itemize}
	\item $\cE$ has all finite limits and colimits;
	\item $\cE$ has exponentials;
	\item $\cE$ has a subobject classifier.
\end{itemize}
\end{definition}
The structural similarity with the category $\CSet$ of sets and functions -- which is itself a topos, of course -- should be obvious. Some further examples of topoi are:
\begin{itemize}
	\item[(1)] $\CSet\times\CSet$, the category of all pairs of sets, with morphisms pairs of functions;
	\item[(2)] $\CSet^{\mathbf{2}}$, where $\mathbf{2}=\bullet\longrightarrow\bullet$. Objects are all functions from one set $X$ to another set $Y$, with commutative squares as arrows;
	\item[(3)] $\mathbf{B}G$, the category of $G$-sets: objects are sets with a right (or left) action by $G$, and arrows are $G$-equivariant functions;
	\item[(4)] $\CSet^{\cC^{op}}$, where $\cC$ is a small category. This functor category has functors $\cC^{op}\rightarrow\CSet$ as objects (also called \emph{presheaves over $\cC$}) and natural transformations between them as arrows. In fact, all the examples so far are functor categories.
	\item[(5)] The category $Sh(X)$ of sheaves over a topological space $X$.
\end{itemize}
We will now specialise to presheaf topoi, \ie topoi of the form $\CSet^{\cC^{op}}$, where $\cC$ is a small category. The terminal object $\One$ in $\CSet^{\cC^{op}}$ is given by the presheaf that assigns the one-element set $\One_C=\{*\}$ to every object $C$ in $\cC$ and the constant function $\{*\}\rightarrow\{*\}$ to every morphism in $\cC$.

It is a standard result (see \eg \cite{McLMoe92}) that the subobject classifier $\Omega$ in a presheaf topos is given by the \emph{presheaf of sieves}, which we now define. First, let $C$ be an object in $\cC$. A \emph{sieve on $C$} is a collection $\sigma$ of arrows with codomain $C$ such that if $(f:B\rightarrow C)\in\sigma$ and $g:A\rightarrow B$ is any other arrow in $\cC$, then $f\circ g:A\rightarrow C$ is in $\sigma$, too.

If $h:C\rightarrow D$ is an arrow in $\cC$ and $\sigma$ is a sieve on $D$, then $\sigma\cdot h=\{f \mid h\circ f\in \sigma\}$ is a sieve on $C$, the \emph{pullback of $\sigma$ along $h$}. 

\begin{definition}
The presheaf of sieves is given
\begin{itemize}
	\item [(a)] on objects: for all $C\in Ob(\cC)$, $\Omega(C)=\{\sigma \mid \sigma \text{ sieve on }C\}$;
	\item [(b)] on arrows: for all $(h:C\rightarrow D)\in Arr(\cC)$, the mapping $\Omega(h):\Omega(D)\rightarrow\Omega(C)$ is given by the pullback along $h$.
\end{itemize}
\end{definition}

Clearly, this is an object in the topos $\CSet^{\cC^{op}}$. The arrow $\true:\One\rightarrow\Omega$ is given by
\begin{align*}
			\forall C\in Ob(\cC): \true_C: \One_C &\longrightarrow \Omega_C\\
			* &\longmapsto \sigma_m(C),
\end{align*}
where $\sigma_m(C)$ denotes the \emph{maximal sieve} on $C$, \ie the collection of all morphisms in $\cC$ with codomain $C$.

Each topos, and in particular each presheaf topos, comes equipped with an internal, higher-order intuitionistic logic. This logic is multi-valued in general, and the truth values form a partially ordered set. In fact, the truth values in a topos are given by the \emph{global elements} of the subobject classifier $\Omega$, \ie by morphisms
\[
		\One \longrightarrow \Omega
\]
from the terminal object $\One$ to the subobject classifier $\Omega$. Clearly, the arrow $\true:\One\rightarrow\Omega$ represents one such truth value, and it is interpreted as `totally true'. In a presheaf topos $\CSet^{\cC^{op}}$, there also is an arrow $\mathsf{false}:\One\rightarrow\Omega$, which assigns the empty sieve to each $\One_C=\{*\}$, $C\in Ob(\cC)$. The arrow $\mathsf{false}$ represents the truth value `totally false'. In general, there exist other morphisms from $\One$ to $\Omega$, representing truth values between `totally true' and `totally false'.

We now specialise further to presheaf topoi for which the base category $\cC$ is a (small) poset.\footnote{Regarding the poset $\cC$ as a category, there is an arrow $B\rightarrow C$ if and only if $B\leq C$. Hence, in a poset there is at most one arrow from any object $B$ to any object $C$.}

Recall that a sieve on an object $C$ is a collection of morphisms with codomain $C$ that is `downward closed' under composition. In a poset, an arrow $B\rightarrow C$ with codomain $C$ means that $B\leq C$, and if $A\rightarrow B$ is another arrow in the poset category $\cC$ (\ie if $A\leq B$), then the composite arrow $A\rightarrow C$ is also contained in the sieve. Keeping the codomain $C$ fixed, an arrow in the poset category $\cC$ of the form $\tilde B\rightarrow C$ can be identified with its domain $\tilde B$. This means that in a poset $\cC$, a sieve $\sigma$ on $C$ can be identified with a \emph{lower set} (or \emph{downward closed set}) in $\downarrow C=\{\tilde B\in\cC \mid \tilde B\leq C\}$, the downset of $C$. The maximal sieve on $C$ is simply $\downarrow C$.

If $C\rightarrow D$ is an arrow in $\cC$ (\ie if $C\leq D$) and $\sigma$ is a sieve on $D$, then the pullback of $\sigma$ along the arrow $C\rightarrow D$ is simply the lower set in $\downarrow C$ given by $\sigma\;\cap\downarrow C$, as can be checked by inserting into the definition.

We saw that the truth values in a presheaf topos $\CSet^{\cC^{op}}$ are given by the arrows of the form $v:\One\rightarrow\Omega$. Since $\cC$ now is a poset by assumption, the sieve $v_C$ is a lower set in $\downarrow C$, and whenever $C\leq D$, then $v_C=v_D\;\cap\downarrow C$. This means that, overall, the arrow $v:\One\rightarrow\Omega$ defines a lower set in $\cC$. We have shown:
\begin{lemma}
In a presheaf topos $\CSet^{\cC^{op}}$ over a poset $\cC$, the truth values are given by lower sets in the poset $\cC$.
\end{lemma}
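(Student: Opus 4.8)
The plan is to unwind the definitions that have already been assembled in this subsection, so that the statement becomes essentially a bookkeeping exercise. Recall that a truth value in $\CSet^{\cC^{\op}}$ is a global element of the subobject classifier, \ie an arrow $v:\One\rightarrow\Omega$, which amounts to a choice of sieve $v_C\in\Omega(C)$ for each $C\in Ob(\cC)$, subject to the naturality condition: for every arrow $h:C\rightarrow D$ in $\cC$ we must have $v_C=\Omega(h)(v_D)=v_D\cdot h$. Since $\cC$ is a poset, the paragraphs just above the statement have already established two facts I would simply cite: (a) a sieve $\sigma$ on $C$ is the same thing as a lower set contained in $\downarrow C$ (identifying each arrow $\tilde B\rightarrow C$ with its domain $\tilde B$), and (b) the pullback of a sieve $\sigma$ on $D$ along the arrow $C\rightarrow D$ (which exists iff $C\leq D$) is $\sigma\cap\downarrow C$.

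With these in hand, the proof has two directions. First, given a truth value $v:\One\rightarrow\Omega$, I would define $U_v:=\bigcup_{C\in\cC}v_C\subseteq\cC$ and check that $U_v$ is a lower set: if $x\in U_v$, say $x\in v_C$, and $y\leq x$, then since $v_C$ is itself a lower set in $\downarrow C$ we get $y\in v_C\subseteq U_v$. Moreover one should observe that $v_C=U_v\cap\downarrow C$: the inclusion $v_C\subseteq U_v\cap\downarrow C$ is immediate, and for the reverse, if $x\in U_v\cap\downarrow C$ then $x\in v_D$ for some $D$ and $x\leq C$; naturality applied to the arrow $x\rightarrow D$ (noting $x\le D$ since $x\in v_D\subseteq\downarrow D$) gives $v_x=v_D\cap\downarrow x\ni x$, and then naturality applied to $x\rightarrow C$ gives $v_x=v_C\cap\downarrow x$, so $x\in v_C$. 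Hence $v$ is completely determined by the single lower set $U_v$. Conversely, given any lower set $U\subseteq\cC$, I would define $v_C:=U\cap\downarrow C$, note that each such set is a lower set inside $\downarrow C$ and hence a sieve on $C$, and verify the naturality square: for $C\leq D$, the pullback of $v_D=U\cap\downarrow D$ along $C\rightarrow D$ is $(U\cap\downarrow D)\cap\downarrow C=U\cap\downarrow C=v_C$, using $\downarrow C\subseteq\downarrow D$. These two assignments are visibly mutually inverse, which yields the claimed bijection between truth values and lower sets in $\cC$.

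I do not anticipate a genuine obstacle here; the content of the lemma is really the \emph{recognition} that the cocone data of a global element of $\Omega$ over a poset collapses to one downward-closed set, and all the needed ingredients (sieves-as-lowersets, pullback-as-intersection-with-downset) have been spelled out in the immediately preceding discussion. The one point that deserves a careful sentence rather than a hand-wave is the verification that $v_C=U_v\cap\downarrow C$ in the forward direction, \ie that no information beyond the union $U_v$ is retained; this is exactly where naturality of $v$ is used, and it is the step I would write out in full. Everything else is routine, so in the actual text I would keep the argument to a short paragraph, emphasizing the correspondence $v\leftrightarrow\bigcup_C v_C$ and its inverse $U\mapsto(C\mapsto U\cap\downarrow C)$.
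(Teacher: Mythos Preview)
Your proposal is correct and follows essentially the same approach as the paper: the paper's argument is the short paragraph immediately preceding the lemma, which simply observes that each $v_C$ is a lower set in $\downarrow C$ and that naturality gives $v_C=v_D\cap\downarrow C$ for $C\leq D$, concluding that $v$ ``defines a lower set in $\cC$''. Your write-up is a more careful and complete version of this same idea, in particular making explicit the inverse assignment $U\mapsto(C\mapsto U\cap\downarrow C)$ and the verification $v_C=U_v\cap\downarrow C$, which the paper leaves implicit.
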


\subsection{The Topos $\CSet^{\BcPH^{op}}$, States and Truth Values}
For simplicity, and in order to make closer contact with the usual Hilbert space formalism again, we assume $L=\PH$ in this subsection, where $\PH$ is the projection lattice on a Hilbert space $\cH$ of dimension $3$ or greater. We now show how to employ the topos of presheaves over $\BcPH$, the poset of complete Boolean subalgebras of $\PH$, and its internal logic to assign truth values to propositions.

Let $\ket\psi\in\cH$ be a unit vector, representing a pure state of $\BH$. We apply daseinisation to the rank-$1$ projection $P_\psi=\ket\psi\bra\psi$ and obtain a clopen subobject of $\SigPH$, the spectral presheaf of the complete OML $\PH$. This subobject is denoted as
\begin{align*}
			\wpsi:=\pde(P_\psi)
\end{align*}
and is called the \emph{pseudostate} corresponding to $\ket\psi$. While the spectral presheaf $\SigPH$ has no global sections, a fact that is equivalent to the Kochen-Specker theorem \cite{IshBut98,DoeIsh11}, $\SigPH$ still has plenty of subobjects, and just like a point in the state space of a classical system represents a (pure) state, here the pseudostate $\wpsi$ represents the pure quantum state.

In classical physics, truth values of propositions arise in a simple manner: let $S\subset\mathcal{S}$ be a (Borel) subset of the state space $\mathcal{S}$ of the classical system. $S$ represents some proposition, \eg if $f_A:\mathcal{S}\rightarrow\mathbb{R}$ is the Borel function representing a physical quantity $A$ of the system, and $\Delta\subset\mathbb{R}$ is a Borel subset of the real line, then $S=f_A^{-1}(\Delta)$ is the representative of the proposition ``the physical quantity $A$ has a value in the Borel set $\Delta$''. A (pure) state is represented by a point $p\in\mathcal{S}$. The truth value of the proposition represented by $S$ in the state represented by $p$ is
\begin{align*}
			\|S\|_{p} &= true\text{ if }p\in S\text{ and }\\
			\|S\|_{p} &= false\text{ if }p\notin S.
\end{align*}
Note that this is equivalent to
\begin{align*}
			\|S\|_{p}=(p\in S)
\end{align*}
if we read the right-hand side as a set-theoretic proposition that is either true or false.
Analogously, let $\ps S\in\Subcl(\SigL)$ be a clopen subobject that represents a proposition about a quantum system. Then the truth value of the proposition represented by $\ps S$ in the state represented by $\wpsi$ is
\begin{align*}
			\|\ps S\|_{\wpsi} = (\wpsi\subseteq\ps S).
\end{align*}
Here, the right-hand side must be interpreted as a `set-theoretic' proposition in the topos $\CSet^{\BcPH^{op}}$ of presheaves over $\BcPH$. This is done using the Mitchell-Benabou language of the topos. In our case, where  the base category $\BcPH$ is simply a poset, and $\CSet^{\BcPH^{op}}$ is a presheaf topos, this boils down to something straightforward: for every context $B\in\BcPH$, we consider the set-theoretic expression
\begin{align*}
			\wpsi_B\subseteq\ps S_B
\end{align*}
`locally' at $B$, which can be either $true$ or $false$. We collect all those $B\in\BcPH$ for which the expression $\wpsi_B\subseteq\ps S_B$ is $true$. It is easy to see that in this way we obtain a lower set in $\BcPH$. As we saw in the previous subsection, such a lower set in the poset $\BcPH$, which is the base category of the presheaf topos $\CSet^{\BcPH^{op}}$, can be identified with a global section of the presheaf of sieves on $\BcPH$, and hence with a truth value in the (multi-valued logic provided by the) topos $\CSet^{\BcPH^{op}}$. Hence, we showed how to determine the truth value $\|\ps S\|_{\wpsi}$ of the proposition represented by $\ps S$ in the state represented by $\wpsi$.

There is no obstacle to assigning these truth values in a coherent, non-contextual and structure-preserving way. The Kochen-Specker theorem is not violated, of course: it is a result that applies specifically to the representation of physical propositions as projection operators and truth values as elements of the two-element Boolean algebra. The KS theorem simply does not apply to the reformulated propositions and truth values of TQT, and so it is perfectly possible to simultaneously assign truth values to all physical propositions in TQT.\footnote{For further discussion of the philosophical interpretation of the truth values in TQT, see Eva \cite{Eva16}.}

At this stage, we have sketched the basic formal and conceptual ideas behind QST and TQT. Both approaches offer new Q-worlds in which we can hope to reformulate QM in a conceptually illuminating way. However, beyond this superficial analogy, it is not obvious that there is any deep connection between the two projects. In the next section, we will explore the relationship between the distributive logic of TQT and traditional orthomodular quantum logic. This will subsequently allow us to establish rich and interesting connections between TQT and QST in section 6. 

\section{Paraconsistency and Distributivity in Quantum Logic}   \label{Sec_ParaconsAndDistribInQL}

\subsection{Paraconsistent Negation}
It is interesting to note that theorem \ref{Thm_EAndLIsomorphicAsComplLattices} establishes that $E$ and $L$ are isomorphic \emph{as complete lattices}, but says nothing about the negation operations defined on $E$ and $L$. In section 6, it will be important for our purposes that the isomorphism between $E$ and $L$ preserves negations as well as the lattice structure. There is an obvious way of defining a negation operation on $E$ that allows us to extend the isomorphism from theorem \ref{Thm_EAndLIsomorphicAsComplLattices} to include negations. Specifically, Eva \cite{Eva15} suggests the following definition.

\begin{definition} Given $\ps S \in \Subcl(\Sig)$, define $\ps S^{*} = \pde(\eps(\ps S)^{\bot})$, \ie $\ps S^{*}$ is the daseinisation of the orthocomplement of $\eps(\ps S)$ (where $\bot$ denotes the orthocomplement of $L$).
\end{definition}

The idea is that $\ps S^{*}$ is obtained by translating $\ps S$ into an element of $L$ via $\eps$, negating that element by $L$'s classical orthocomplementation operation, and then translating the negated element back into a clopen subobject via $\pde$. Eva \cite{Eva15} notes that defining $[\ps S]^{*} = [\ps S^{*}]$ implies that $E$ and $L$ are isomorphic not just as complete lattices, but also as complete \emph{ortholattices}. So we can extend the correspondence to cover the full logical structure of $E$. This fact will turn out to be important in our attempts to obtain a connection between TQT and QST. 

However, the negation operation $*$ that we used to extend this isomorphism turns out to have some unexpected properties. Most importantly, $*$ is \emph{paraconsistent}. To see this, recall that for any $\ps S$, $\pde(\eps(\ps S)) \leq \ps S$.
This means that

\[
\ps S \meet \ps S^{*} = \ps S \meet \pde(\eps(\ps S)^{\bot}) \geq \pde(\eps(\ps S)) \meet \pde(\eps(\ps S)^{\bot}) \geq \pde(\eps(\ps S) \meet \eps(\ps S)^{\bot}) = \bot.
\]

These inequalities can all be strict, so $S \meet S^{*}$ will not generally be minimal, \ie the $*$ operation is paraconsistent. So the natural `translation' of the orthocomplement of an orthomodular lattice into the distributive setting is inherently paraconsistent. Eva \cite{Eva15} establishes the following basic properties of the $*$ negation.

\begin{theorem}
The $*$ operation has the following properties:

\bigskip
{\centering
\begin{varwidth}{\textwidth}
\begin{enumerate}[{\rm (i)}]
\item $\ps S \join \ul{S^{*}} = \top$,
\item $\ps S^{**} = \delta(\eps(\ul{S}))\leq \ps S$,
\item $\ps S^{***} = \ps S^{*}$,
\item $\ps S \meet \ps S^{*} \geq \bot$,
\item $\bigvee_j \ps S_j^*=(\bigwedge_j \ps S_j)^*$ for any family $\{\ps S_j\} \subseteq \Subcl(\Sig)$,
\item $\bigwedge_j \ps S_j^*\geq (\bigvee_j \ps S_j)^*$ for any family $\{\ps S_j\} \subseteq \Subcl(\Sig)$,
\item $\eps(\ps S) \join \eps(\ps S^{*}) = \top$,
\item $\eps(\ps S) \meet \eps(\ps S^{*}) =\bot$,
\item $\ps S \leq \ps T$ implies $\ps S^{*} \geq \ps T^{*}$, \ie $*$ is an involution.
\end{enumerate}
\end{varwidth}
\par}
\bigskip

\end{theorem}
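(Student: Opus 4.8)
The plan is to verify the nine properties by systematically unwinding the definition $\ps S^* = \pde(\eps(\ps S)^\bot)$ and exploiting the adjunction $\pde \dashv \eps$ together with the basic facts $\eps\circ\pde = \id$ and $\pde\circ\eps \leq \id$ on $\Subcl(\Sig)$ recorded in the Corollary, plus the join-preservation of $\pde$ and meet-preservation of $\eps$. Property (iv) is already established in the text immediately preceding the statement, so it can simply be cited. For (vii) and (viii), I would apply $\eps$ to $\ps S^* = \pde(\eps(\ps S)^\bot)$: since $\eps\circ\pde = \id$, we get $\eps(\ps S^*) = \eps(\ps S)^\bot$, and then (vii), (viii) are immediate from the fact that $L$ is an ortholattice, so $\eps(\ps S)\join\eps(\ps S)^\bot = \top$ and $\eps(\ps S)\meet\eps(\ps S)^\bot = \bot$.

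Next I would tackle (ii) and (iii). For (ii), compute $\ps S^{**} = \pde(\eps(\ps S^*)^\bot) = \pde((\eps(\ps S)^\bot)^\bot) = \pde(\eps(\ps S))$ using the already-derived $\eps(\ps S^*) = \eps(\ps S)^\bot$ and the involutivity of $\bot$ in $L$; then $\pde(\eps(\ps S)) \leq \ps S$ is exactly $\pde\circ\eps \leq \id$. For (iii), apply (ii) to $\ps S^*$: $\ps S^{***} = \pde(\eps(\ps S^*)) = \pde(\eps(\ps S)^\bot) = \ps S^*$. Property (i) follows because $\ps S^* \geq \pde(\eps(\ps S))^\bot$-type reasoning won't quite work directly; instead I would argue $\ps S \join \ps S^* \geq \pde(\eps(\ps S)) \join \pde(\eps(\ps S)^\bot) = \pde(\eps(\ps S)\join\eps(\ps S)^\bot) = \pde(\top) = \top$, using join-preservation of $\pde$ (Proposition \ref{th:delta}(iv)) and preservation of the top element — wait, this needs $\ps S \geq \pde(\eps(\ps S))$, which holds, so the chain is valid. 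Property (ix) is the statement that $*$ is order-reversing: if $\ps S \leq \ps T$ then $\eps(\ps S) \leq \eps(\ps T)$ by monotonicity of $\eps$, hence $\eps(\ps S)^\bot \geq \eps(\ps T)^\bot$ since $\bot$ reverses order in $L$, hence $\pde(\eps(\ps S)^\bot) \geq \pde(\eps(\ps T)^\bot)$ by monotonicity of $\pde$, i.e. $\ps S^* \geq \ps T^*$.

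The remaining properties (v) and (vi) concern infinitary De Morgan behaviour. For (v), I would compute $\bigvee_j \ps S_j^* = \bigvee_j \pde(\eps(\ps S_j)^\bot) = \pde\big(\bigvee_j \eps(\ps S_j)^\bot\big)$ using join-preservation of $\pde$; then since $\bot$ is an orthocomplementation on the complete lattice $L$, $\bigvee_j \eps(\ps S_j)^\bot = \big(\bigwedge_j \eps(\ps S_j)\big)^\bot = \eps\big(\bigwedge_j \ps S_j\big)^\bot$ using meet-preservation of $\eps$ (the previous Proposition, part (iv)); this last expression is exactly $\big(\bigwedge_j \ps S_j\big)^*$. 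For (vi), the inequality $\bigwedge_j \ps S_j^* \geq \big(\bigvee_j \ps S_j\big)^*$ follows from (ix): each $\ps S_j \leq \bigvee_j \ps S_j$ gives $\ps S_j^* \geq \big(\bigvee_j \ps S_j\big)^*$, and taking the meet over $j$ on the left preserves this lower bound.

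I expect no serious obstacle here; the entire proof is a bookkeeping exercise in the adjunction and the ortholattice structure of $L$. The one point requiring mild care is making sure the direction of each inequality coming from $\pde\circ\eps \leq \id$ (versus the equality $\eps\circ\pde = \id$) is tracked correctly — in particular that (vi) is only an inequality precisely because $\pde$ does not preserve arbitrary meets (Proposition \ref{th:delta}(v)), so one cannot run the argument of (v) with meets and joins swapped. I would present the proof as a short sequence of displayed computations grouped as: first establish $\eps(\ps S^*) = \eps(\ps S)^\bot$, then derive (vii), (viii), then (ii), (iii), then (i), (ix), and finally (v), (vi), citing (iv) directly from the preceding discussion.
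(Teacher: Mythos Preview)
The paper does not actually supply its own proof of this theorem; it simply attributes the result to Eva \cite{Eva15} and states the nine properties without argument. Your proposal therefore cannot be compared against a proof in the paper, but it stands on its own as a correct and complete verification: the key identity $\eps(\ps S^*) = \eps(\ps S)^\bot$ (obtained from $\eps\circ\pde = \id$) together with $\pde\circ\eps \leq \id$, the join-preservation of $\pde$, and the meet-preservation of $\eps$ are exactly the tools needed, and you deploy them correctly for each item. The only remark is that (iv) as stated is trivially true (every element dominates $\bot$); the substantive content, that the inequality can be strict, is indeed what the paragraph preceding the theorem establishes, so your citation there is apt.
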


Note that before we defined $*$, $\Subcl(\Sig)$, as a complete bi-Heyting algebra, was already equipped with a canonical intuitionistic negation operation and a canonical paraconsistent negation operation. However, these negations cannot be included in the isomorphism between $E$ and $L$, which motivates the study of $*$ as an independent negation operation on $\Subcl(\Sig)$.\footnote{Note that $\Subcl(\Sig)$, equipped with $*$, is still distributive since we assume the same lattice operations as before.} It is important to note that the Heyting implication operation $\Rightarrow$ will not interact in any nice way with $*$. In order to establish a connection between TQT and QST, we will need to define implication operations on $\Subcl(\Sig)$ that are compatible with $*$. Before doing this, it is useful to prove a couple of lemmas.

From now on, we will stop underlining presheaves to improve readability.
\renewcommand{\Sig}{\Sigma}
\begin{lemma} 
$\eps(S^{*})$ = $\eps(S)^{\bot}$, for any $S \in \Subcl(\Sig)$.
\end{lemma}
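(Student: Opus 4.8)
The goal is to show $\eps(S^{*}) = \eps(S)^{\bot}$ for all $S \in \Subcl(\Sigma)$, where by definition $S^{*} = \pde(\eps(S)^{\bot})$. The key observation is that this is an immediate consequence of the adjunction $\eps \circ \pde = \id_L$ recorded in the Corollary following Carmen Constantin's lemma. Indeed, substituting the definition of $S^*$ gives
\[
\eps(S^{*}) = \eps\bigl(\pde(\eps(S)^{\bot})\bigr) = (\eps \circ \pde)(\eps(S)^{\bot}) = \eps(S)^{\bot},
\]
using that $\eps(S)^{\bot}$ is an element of $L$ and that $\eps \circ \pde$ is the identity on $L$. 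So the entire proof is a one-line unfolding of definitions together with the already-established retraction property of the daseinisation-adjoint pair.

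First I would state that we use the definition $S^{*} = \pde(\eps(S)^{\bot})$ directly. Then I would apply $\eps$ to both sides and invoke $\eps \circ \pde = \id$ (Corollary, via Proposition \ref{th:delta}(iv) and the adjunction producing $\eps$). There is genuinely no obstacle here: the only thing to be careful about is that $\eps(S)^{\bot} \in L$ so that $\eps \circ \pde$ applies to it — but this holds since $\eps: \Subcl(\Sigma) \to L$ and $\bot$ is the orthocomplement on $L$.

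Since the argument is so short, in the write-up I would simply present the displayed chain of equalities above with a parenthetical pointer to the Corollary ($\eps \circ \pde = \id$ on $L$) as justification, and conclude. No case analysis, no induction, and no appeal to the lattice-theoretic structure of $\Subcl(\Sigma)$ beyond what is packaged into $\eps$ and $\pde$ is required.
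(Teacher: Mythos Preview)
Your proof is correct and is essentially identical to the paper's own proof: both simply unfold the definition $S^{*}=\pde(\eps(S)^{\bot})$ and apply the identity $\eps\circ\pde=\id$ on $L$ to obtain $\eps(S^{*})=\eps(\pde(\eps(S)^{\bot}))=\eps(S)^{\bot}$.
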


\begin{proof} 
$\eps(S^{*}) = \eps(\de(\eps(S)^{\bot})) = \eps(S)^{\bot}$.
\end{proof}

\begin{lemma}
$\de(a)^{*} = \de(a^{\bot})$, for any $a \in \PH$.
\end{lemma}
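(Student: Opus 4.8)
The goal is to show $\de(a)^{*} = \de(a^{\bot})$ for any $a \in \PH$, where by definition $\de(a)^{*} = \pde(\eps(\pde(a))^{\bot})$. The plan is simply to unwind the definition of $*$ and use the adjunction identity $\eps \circ \pde = \id$ established in the corollary following Constantin's lemma.

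First I would write $\de(a)^{*} = \pde\bigl(\eps(\pde(a))^{\bot}\bigr)$, directly from the definition of the $*$ operation applied to the clopen subobject $\pde(a)$. Next, I would invoke $\eps \circ \pde = \id$ on $L$ (part of the corollary), which gives $\eps(\pde(a)) = a$. Substituting, we get $\de(a)^{*} = \pde(a^{\bot})$, which is exactly the claim. So the proof is a one-line computation:
\[
\de(a)^{*} = \pde\bigl(\eps(\pde(a))^{\bot}\bigr) = \pde(a^{\bot}).
\]

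There is essentially no obstacle here; the lemma is an immediate consequence of the retraction property $\eps \circ \pde = \id$, which is why it is stated as a lemma rather than a proposition. The only thing to be careful about is notational consistency — the paper uses both $\de$ and $\pde$ (underlined delta) for the daseinisation map, and I would make sure to match whichever is in force at this point in the text. If one wanted to be fully explicit, one could note that this lemma together with the preceding one ($\eps(S^{*}) = \eps(S)^{\bot}$) shows that $*$ restricted to the image of daseinisation behaves exactly like the orthocomplement transported along $\pde$, which is the content that will make the isomorphism of Theorem \ref{Thm_EAndLIsomorphicAsComplLattices} an isomorphism of ortholattices.
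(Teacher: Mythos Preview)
Your proof is correct and is essentially identical to the paper's own argument: both simply unwind the definition $\de(a)^{*}=\de(\eps(\de(a))^{\bot})$ and apply $\eps\circ\de=\id$ to obtain $\de(a^{\bot})$. The only difference is notational (the paper has dropped the underline by this point and writes $\de$ rather than $\pde$).
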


\begin{proof} 
$\de(a)^{*} = \de(\eps(\de(a))^{\bot}) = \de(a^{\bot})$.
\end{proof}

\begin{lemma}
$\de(a)^{**} = \de(a)$, for any $a \in \PH$, \ie the image of $\de$ is `$*$-regular'.
\end{lemma}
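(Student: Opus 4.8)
The plan is to reduce the claim to the previous two lemmas. We want to show $\de(a)^{**} = \de(a)$ for any $a \in \PH$. Applying the previous lemma (which states $\de(b)^{*} = \de(b^{\bot})$ for any $b \in \PH$) twice, first with $b = a$ and then with $b = a^{\bot}$, gives
\[
\de(a)^{**} = (\de(a)^{*})^{*} = (\de(a^{\bot}))^{*} = \de((a^{\bot})^{\bot}) = \de(a),
\]
where the last equality uses the fact that the orthocomplement $^{\bot}$ on $\PH$ is an involution, \ie $(a^{\bot})^{\bot} = a$. That is the entire argument.

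Alternatively, one could argue directly from the definition of $*$ together with the lemma $\eps(\de(a)^{*}) = \eps(\de(a))^{\bot}$ and the identity $\eps \circ \de = \id$ (the corollary above): unwinding $\de(a)^{**} = \de(\eps(\de(a)^{*})^{\bot}) = \de((\eps(\de(a))^{\bot})^{\bot}) = \de(\eps(\de(a))) = \de(a)$. Either route is a one-line computation.

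There is no real obstacle here; the only thing to be careful about is invoking the involutivity of the orthocomplement on $\PH$, which is part of the standard structure of an orthomodular lattice as recalled in Section \ref{Sec_OrthomodQLogic}. The statement that "the image of $\de$ is $*$-regular" is then immediate: every element of the form $\de(a)$ satisfies $\de(a)^{**} = \de(a)$, which is exactly the defining property of a $*$-regular element.
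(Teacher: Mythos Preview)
Your proof is correct and matches the paper's own argument essentially verbatim: the paper writes $\de(a)^{**} = \de(a^{\bot})^{*} = \de(a^{\bot\bot}) = \de(a)$, which is precisely your first route applying the previous lemma twice together with $a^{\bot\bot}=a$. Your alternative unwinding via the definition of $*$ and $\eps\circ\de=\id$ is also valid, though the paper does not spell it out.
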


\begin{proof} 
$\de(a)^{**} = \de(a^{\bot})^{*} = \de(a^{\bot \bot}) = \de(a)$.
\end{proof}

\begin{lemma}
$\de(\eps(S))$ is the smallest member of the equivalence class $[S]$ of $S$ under the $\eps$ equivalence relation, and $S^{**}=\de(\eps(S))$ for all $S$.
\end{lemma}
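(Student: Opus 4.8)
The plan is to run everything off the adjunction $\de\dashv\eps$ together with the two identities already recorded: $\eps\circ\de=\id$ on $L$, the Corollary's $\de\circ\eps\le\id$ on $\Subcl(\Sigma)$, and the Lemma $\eps(S^{*})=\eps(S)^{\bot}$. Both halves of the statement then fall out with essentially no computation.

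For the first claim I would proceed in three short steps. First, check that $\de(\eps(S))$ actually belongs to $[S]$: applying $\eps$ and using $\eps\circ\de=\id$ gives $\eps(\de(\eps(S)))=\eps(S)$, so $\de(\eps(S))\sim S$. Second, $\de(\eps(S))\le S$ is immediate from $\de\circ\eps\le\id$. Third, to see that it is the \emph{least} element of $[S]$, take any $T$ with $\eps(T)=\eps(S)$; then $\de(\eps(S))=\de(\eps(T))\le T$, again by $\de\circ\eps\le\id$. So $\de(\eps(S))$ is a lower bound for $[S]$ that lies in $[S]$, hence it is the minimum of $[S]$.

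For the second claim I would just unwind the definition of $*$: by definition $S^{**}=\de\big(\eps(S^{*})^{\bot}\big)$, and substituting $\eps(S^{*})=\eps(S)^{\bot}$ (the earlier Lemma) and using that the orthocomplement is an involution on $L$ gives $S^{**}=\de\big((\eps(S)^{\bot})^{\bot}\big)=\de(\eps(S))$. (This also simply re-derives item (ii) of the Theorem, which could alternatively just be cited.)

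There is no genuine obstacle here; the only points requiring care are getting the direction of the adjunction inequality right — it is $\de\circ\eps\le\id$, not its reverse — and making sure one first verifies $\de(\eps(S))\in[S]$ before concluding it is the \emph{minimum} of $[S]$ rather than merely a lower bound for it.
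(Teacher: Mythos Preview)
Your proof is correct and follows essentially the same route as the paper: the paper shows $\de(\eps(S))=\de(\eps(T))\leq T$ for arbitrary $T\in[S]$ using $\de\circ\eps\le\id$, and then computes $S^{**}=\de(\eps(S^{*})^{\bot})=\de(\eps(S)^{\bot\bot})=\de(\eps(S))$ exactly as you do. Your additional step verifying $\de(\eps(S))\in[S]$ via $\eps\circ\de=\id$ is a point the paper leaves implicit, so if anything your argument is slightly more complete.
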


\begin{proof}
Let $T \in [S]$, then $\eps(T) = \eps(S)$. So $\de(\eps(S)) = \de(\eps(T)) \leq T$. Since $T$ was arbitrary, this proves the first part of the lemma. Moreover, we have
\begin{align*}
			S^{**}=\de(\eps(S^*)^{\bot})=\de(\eps(S)^{\bot\bot})=\de(\eps(S)).
\end{align*}
\end{proof}

We are now ready to translate the three quantum material implication operations 
into the distributive setting. We say that an implication operation $\Rightarrow$ on $\Subcl(\Sig)$ `mirrors' a corresponding operation $\rightarrow$ on $L$ if and only if for any $S, T \in \Subcl(\Sig)$ and for any $a, b \in L$, $\de(a \rightarrow b) = \de(a) \Rightarrow \de(b)$ and $\eps(S \Rightarrow T) = \eps(S) \rightarrow \eps(T)$.

\begin{theorem}\label{th:mirror-S} The operation $\Rightarrow_{{\rm S}}$ on $\Subcl(\Sig)$ defined by \[S \Rightarrow_{{\rm S}} T := S^{*} \join (S^{*} \join T^{*})^{*}\] mirrors the Sasaki conditional $\rightarrow_{{\rm S}}$ on $\PH$. 
\end{theorem}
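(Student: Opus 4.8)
The plan is to verify the two defining conditions of ``mirroring'' separately, reducing everything to the elementary lemmas just established, namely $\de(a)^* = \de(a^\bot)$, $\de(a)^{**} = \de(a)$, $\eps(S^*) = \eps(S)^\bot$, and the adjunction identity $\eps\circ\de = \id$. Recall that the Sasaki conditional is $a \to_{\rm S} b = a^\bot \join (a \meet b)$, which by de Morgan in the ortholattice $L$ can be rewritten as $a^\bot \join (a^{\bot\bot} \meet b^{\bot\bot}) = a^\bot \join (a^\bot \join b^\bot)^\bot$. The formula defining $\Rightarrow_{\rm S}$ on $\Subcl(\Sig)$ is the literal transcription of this, with $\de(\cdot)^\bot$ replaced by $(\cdot)^*$ and meet replaced by the ``double-star de Morgan'' expression. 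So the proof is really just chasing $\de$ and $\eps$ through this syntactic translation.

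First I would check the condition $\de(a \to_{\rm S} b) = \de(a) \Rightarrow_{\rm S} \de(b)$. Using that $\de$ preserves joins (Proposition~\ref{th:delta}(iv)) I expand the right-hand side: $\de(a) \Rightarrow_{\rm S} \de(b) = \de(a)^* \join (\de(a)^* \join \de(b)^*)^*$. By Lemma (``$\de(a)^* = \de(a^\bot)$'') this equals $\de(a^\bot) \join (\de(a^\bot) \join \de(b^\bot))^*$. Now $\de$ preserves joins, so $\de(a^\bot) \join \de(b^\bot) = \de(a^\bot \join b^\bot)$, and applying the same lemma once more (with $a^\bot \join b^\bot$ in place of $a$) gives $(\de(a^\bot \join b^\bot))^* = \de((a^\bot \join b^\bot)^\bot) = \de(a \meet b)$, using de Morgan and involutivity in $L$. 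Hence the right-hand side is $\de(a^\bot) \join \de(a \meet b) = \de(a^\bot \join (a \meet b)) = \de(a \to_{\rm S} b)$, as required.

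Next I would check $\eps(S \Rightarrow_{\rm S} T) = \eps(S) \to_{\rm S} \eps(T)$. Here the key point is that $\eps$ does \emph{not} preserve arbitrary joins, only meets — but I can route around this by first rewriting $S \Rightarrow_{\rm S} T$ in a form adapted to $\eps$. Applying Lemma (``$\eps(S^*) = \eps(S)^\bot$'') directly: $\eps(S \Rightarrow_{\rm S} T) = \eps(S^* \join (S^* \join T^*)^*)$. The outermost operation is a join of two terms, the second of which is starred; write $U = S^* \join T^*$, so the expression is $\eps(S^* \join U^*)$. I expect the cleanest route is to observe that $S^* \join U^* = (S^{**} \meet U^{**})^*$ by property (v) of the $*$-theorem (which states $\bigvee_j S_j^* = (\bigwedge_j S_j)^*$), so $\eps(S^* \join U^*) = \eps((S^{**}\meet U^{**})^*) = (\eps(S^{**} \meet U^{**}))^\bot = (\eps(S^{**}) \meet \eps(U^{**}))^\bot$ using that $\eps$ preserves meets. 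Since $\eps(S^{**}) = \eps(S)$ (as $S^{**} = \de(\eps(S))$ and $\eps\circ\de=\id$), and similarly $\eps(U^{**}) = \eps(U) = \eps(S^* \join T^*)$, I then need to compute $\eps(S^* \join T^*)$ — again via (v): $S^* \join T^* = (S^{**} \meet T^{**})^*$, so $\eps(S^* \join T^*) = (\eps(S) \meet \eps(T))^\bot = \eps(S)^\bot \join \eps(T)^\bot$. Substituting back and simplifying with de Morgan in $L$ yields $(\eps(S) \meet (\eps(S)^\bot \join \eps(T)^\bot))^\bot = \eps(S)^\bot \join (\eps(S) \meet \eps(T)) = \eps(S) \to_{\rm S} \eps(T)$.

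The main obstacle is the second half: because $\eps$ fails to preserve joins, one cannot simply distribute $\eps$ over the formula for $\Rightarrow_{\rm S}$ termwise, and the naive computation breaks down. The trick that makes it work is repeatedly converting joins of $*$-images into $*$-images of meets via property (v) of the $*$-theorem, so that all the joins one needs to push $\eps$ through are actually meets in disguise; then $\eps$'s meet-preservation and the identities $\eps\circ\de = \id$, $\eps(S^*) = \eps(S)^\bot$ close the computation. I would take care to state explicitly at the start that the formula for $\Rightarrow_{\rm S}$ was \emph{designed} as the de Morgan dual transcription of $a^\bot \join (a \meet b)$, so that the reader sees the two verifications as two sides of the same coin rather than as unrelated calculations.
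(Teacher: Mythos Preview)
Your proposal is correct and follows essentially the same route as the paper: both halves rely on exactly the lemmas you cite ($\de(a)^*=\de(a^\bot)$, $\eps(S^*)=\eps(S)^\bot$, join-preservation of $\de$, meet-preservation of $\eps$), and in the second half you circumvent the failure of $\eps$ to preserve joins by the same trick the paper uses, namely rewriting joins of $*$-terms as a single $*$ of a meet via property~(v). The only cosmetic difference is that the paper applies~(v) directly as $S^*\join U^*=(S\meet U)^*$ rather than your $(S^{**}\meet U^{**})^*$; these are equal since $\eps(S\meet U)=\eps(S^{**}\meet U^{**})$, so your version is just slightly roundabout.
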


\begin{proof} The assertion follows from the calculations shown below.
\begin{align*}
\de(a \rightarrow_{{\rm S}} b) 
&= \de(a^{\bot} \join (a \meet b)) = \de(a^{\bot}) \join \de(a \meet b) = \de(a)^{*} \join \de(a \meet b)\\
&= \de(a)^{*} \join \de(a \meet b)^{**} = \de(a)^{*} \join \de((a \meet b)^{\bot})^{*}\\
&= \de(a)^{*} \join \de(a^{\bot} \join b^{\bot})^{*} = \de(a)^{*} \join (\de(a^{\bot}) \join \de (b^{\bot}))^{*}\\
&= \de(a)^{*} \join (\de(a)^{*} \join \de (b)^{*})^{*}\\
& = \de(a) \Rightarrow_{{\rm S}} \de(b).
\end{align*}
\begin{align*}
\eps(S \Rightarrow_{{\rm S}} T) 
&= \eps(S^{*} \join (S^{*} \join T^{*})^{*}) = \eps((S \meet (S^{*} \join T^{*}))^{*})\\
&= \eps(S \meet (S^{*} \join T^{*}))^{\bot} = (\eps(S) \meet \eps(S^{*} \join T^{*}))^{\bot} \\
&=\eps(S)^{\bot} \join \eps(S^{*} \join T^{*})^{\bot} = \eps(S)^{\bot} \join \eps((S \meet T)^{*})^{\bot}\\
&= \eps(S)^{\bot} \join \eps(S \meet T)^{\bot \bot} = \eps(S)^{\bot} \join \eps(S \meet T)\\
&= \eps(S) \rightarrow_{{\rm S}} \eps(T). 
\end{align*}
\end{proof}

\begin{theorem}\label{th:mirror-C} The operation $\Rightarrow_{{\rm C}}$ on $\Subcl(\Sig)$ defined by \[S \Rightarrow_{{\rm C}} T := T^{*} \Rightarrow_{{\rm S}} S^{*}\] mirrors the contrapositive Sasaki conditional $\rightarrow_{{\rm C}}$ on $\PH$. 
\end{theorem}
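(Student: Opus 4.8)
The plan is to reduce everything to Theorem \ref{th:mirror-S} together with the two identities $\de(a^{\bot})=\de(a)^{*}$ and $\eps(S^{*})=\eps(S)^{\bot}$ established in the lemmas above, plus the definitions $a\rightarrow_{{\rm C}}b=b^{\bot}\rightarrow_{{\rm S}}a^{\bot}$ and $S\Rightarrow_{{\rm C}}T=T^{*}\Rightarrow_{{\rm S}}S^{*}$. There are two things to verify: the $\de$-intertwining and the $\eps$-intertwining.

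For the $\de$-side, I would start from $\de(a\rightarrow_{{\rm C}}b)=\de(b^{\bot}\rightarrow_{{\rm S}}a^{\bot})$, apply Theorem \ref{th:mirror-S} to rewrite this as $\de(b^{\bot})\Rightarrow_{{\rm S}}\de(a^{\bot})$, then use the lemma $\de(x^{\bot})=\de(x)^{*}$ to get $\de(b)^{*}\Rightarrow_{{\rm S}}\de(a)^{*}$, which is exactly $\de(a)\Rightarrow_{{\rm C}}\de(b)$ by the definition of $\Rightarrow_{{\rm C}}$. For the $\eps$-side, symmetrically, I would start from $\eps(S\Rightarrow_{{\rm C}}T)=\eps(T^{*}\Rightarrow_{{\rm S}}S^{*})$, apply the $\eps$-part of Theorem \ref{th:mirror-S} to obtain $\eps(T^{*})\rightarrow_{{\rm S}}\eps(S^{*})$, then use the lemma $\eps(X^{*})=\eps(X)^{\bot}$ to get $\eps(T)^{\bot}\rightarrow_{{\rm S}}\eps(S)^{\bot}$, which equals $\eps(S)\rightarrow_{{\rm C}}\eps(T)$ by definition.

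Writing it out, the proof is essentially a two-line computation in each direction:
\begin{align*}
\de(a\rightarrow_{{\rm C}}b)&=\de(b^{\bot}\rightarrow_{{\rm S}}a^{\bot})=\de(b^{\bot})\Rightarrow_{{\rm S}}\de(a^{\bot})=\de(b)^{*}\Rightarrow_{{\rm S}}\de(a)^{*}=\de(a)\Rightarrow_{{\rm C}}\de(b),\\
\eps(S\Rightarrow_{{\rm C}}T)&=\eps(T^{*}\Rightarrow_{{\rm S}}S^{*})=\eps(T^{*})\rightarrow_{{\rm S}}\eps(S^{*})=\eps(T)^{\bot}\rightarrow_{{\rm S}}\eps(S)^{\bot}=\eps(S)\rightarrow_{{\rm C}}\eps(T).
\end{align*}

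There is no real obstacle here: the only subtlety worth flagging is that one must be careful to apply Theorem \ref{th:mirror-S} in the right slots — note the variables get swapped and orthocomplemented in passing from $\rightarrow_{{\rm C}}$ to $\rightarrow_{{\rm S}}$, so the $\de$-computation uses the instance of Theorem \ref{th:mirror-S} with inputs $b^{\bot},a^{\bot}$ and the $\eps$-computation uses the instance with inputs $T^{*},S^{*}$. Once the bookkeeping is right, the result is immediate, and the same template will clearly also handle the relevance conditional $\rightarrow_{{\rm R}}=\rightarrow_{{\rm S}}\meet\rightarrow_{{\rm C}}$ by combining Theorems \ref{th:mirror-S} and \ref{th:mirror-C} with the fact that $\de$ and $\eps$ both preserve finite meets up to the appropriate inequalities (or exactly, on the relevant elements).
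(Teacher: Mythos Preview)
Your proof is correct and is essentially identical to the paper's own argument: both unfold the definitions of $\rightarrow_{{\rm C}}$ and $\Rightarrow_{{\rm C}}$, apply Theorem \ref{th:mirror-S}, and then use the lemmas $\de(a^{\bot})=\de(a)^{*}$ and $\eps(S^{*})=\eps(S)^{\bot}$. (Your closing aside about $\Rightarrow_{{\rm R}}$ is a bit loose---the paper actually handles it via the $**$-regularisation rather than meet-preservation---but that is outside the scope of the present theorem.)
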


\begin{proof} 
The assertion follows from the calculations shown below.
\begin{align*}
\de(a \rightarrow_{{\rm C}} b) 
&= \de(b^{\bot} \rightarrow_{{\rm S}} a^{\bot}) = \de(b^{\bot}) \Rightarrow_{{\rm S}} \de(a^{\bot}) = \de(b)^{*} \Rightarrow_{{\rm S}} \de(a)^{*}\\ 
&= \de(a) \Rightarrow_{{\rm C}} \de(b).
\end{align*}
\begin{align*}
\eps(S \Rightarrow_{{\rm C}} T) 
&= \eps(T^{*} \Rightarrow_{{\rm S}} S^{*}) = \eps(T^{*}) \rightarrow_{{\rm S}} \eps(S^{*}) = \eps(T)^{\bot} \rightarrow_{{\rm S}} \eps(S)^{\bot}\\
&= \eps(S) \rightarrow_{{\rm C}} \eps(T).
\end{align*}
\end{proof}

\begin{theorem}\label{th:mirror-R} The operation $\Rightarrow_{{\rm R}}$ on $\Subcl(\Sig)$ defined by \[S \Rightarrow_{{\rm R}} T := ((S \Rightarrow_{{\rm S}} T) \meet (S \Rightarrow_{{\rm C}} T))^{**}\] mirrors the relevance conditional $\rightarrow_{{\rm R}}$ on $\PH$. 
\end{theorem}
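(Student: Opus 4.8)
The plan is to reduce everything to Theorems \ref{th:mirror-S} and \ref{th:mirror-C} together with two elementary identities governing the operation $**$: that $X^{**} = \de(\eps(X))$ for all $X \in \Subcl(\Sig)$ (the last lemma above) and that $\eps\circ\de = \id$ on $\PH$ while $\eps$ preserves all meets (the proposition on $\eps$). As in the previous two theorems, there are two things to verify: $\de(a \rightarrow_{{\rm R}} b) = \de(a) \Rightarrow_{{\rm R}} \de(b)$ for all $a,b\in\PH$, and $\eps(S \Rightarrow_{{\rm R}} T) = \eps(S) \rightarrow_{{\rm R}} \eps(T)$ for all $S,T\in\Subcl(\Sig)$.

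For the $\eps$-direction I would first observe that $\eps(X^{**}) = \eps(\de(\eps(X))) = \eps(X)$, so the outer $**$ in the definition of $\Rightarrow_{{\rm R}}$ is invisible to $\eps$. Then, since $\eps$ preserves meets, $\eps(S \Rightarrow_{{\rm R}} T) = \eps(S \Rightarrow_{{\rm S}} T) \meet \eps(S \Rightarrow_{{\rm C}} T)$, and applying the $\eps$-halves of Theorems \ref{th:mirror-S} and \ref{th:mirror-C} gives $(\eps(S) \rightarrow_{{\rm S}} \eps(T)) \meet (\eps(S) \rightarrow_{{\rm C}} \eps(T)) = \eps(S) \rightarrow_{{\rm R}} \eps(T)$, as required.

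For the $\de$-direction, the $\de$-halves of Theorems \ref{th:mirror-S} and \ref{th:mirror-C} let me rewrite $(\de(a) \Rightarrow_{{\rm S}} \de(b)) \meet (\de(a) \Rightarrow_{{\rm C}} \de(b)) = \de(a \rightarrow_{{\rm S}} b) \meet \de(a \rightarrow_{{\rm C}} b)$. The only remaining point is the identity $(\de(x) \meet \de(y))^{**} = \de(x \meet y)$, which holds because $(\de(x) \meet \de(y))^{**} = \de(\eps(\de(x) \meet \de(y))) = \de(\eps(\de(x)) \meet \eps(\de(y))) = \de(x \meet y)$, using once more that $\eps$ preserves meets and $\eps\circ\de = \id$. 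Applying this with $x = a \rightarrow_{{\rm S}} b$ and $y = a \rightarrow_{{\rm C}} b$ yields $\de(a) \Rightarrow_{{\rm R}} \de(b) = \de((a \rightarrow_{{\rm S}} b) \meet (a \rightarrow_{{\rm C}} b)) = \de(a \rightarrow_{{\rm R}} b)$.

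There is no real obstacle here; the one thing worth flagging is that $\de$ does not preserve meets in general (Proposition \ref{th:delta}(v) is only an inequality), so one cannot simply push $\de$ through the meet defining $a \rightarrow_{{\rm R}} b$. The role of the outer $**$ in the definition of $\Rightarrow_{{\rm R}}$ is precisely to project back onto the image of $\de$ — equivalently, onto the $*$-regular elements, the smallest representatives of their $\eps$-classes — where the interaction between $\de$, $\eps$ and meets is exact. Once this is recognised, the verification is a short computation in each direction.
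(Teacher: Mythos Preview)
Your proof is correct and follows essentially the same approach as the paper: both directions reduce to Theorems \ref{th:mirror-S} and \ref{th:mirror-C} together with the fact that $\eps$ kills the outer $**$ and preserves meets. The only minor difference is in the $\de$-direction: the paper derives the key identity $(\de(x)\meet\de(y))^{**}=\de(x\meet y)$ by a De Morgan-style computation using $\de(c)^{*}=\de(c^{\bot})$, join-preservation of $\de$, and $(X^{*}\vee Y^{*})^{*}=(X\wedge Y)^{**}$, whereas you obtain it more directly from $X^{**}=\de(\eps(X))$, meet-preservation of $\eps$, and $\eps\circ\de=\id$. Your route is slightly cleaner, but the two arguments are interderivable from the same stock of lemmas and amount to the same proof.
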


\begin{proof}
The assertion follows from the calculations shown below.
\begin{align*}
 \de(a \rightarrow_{{\rm R}} b) 
&= \de((a \rightarrow_{{\rm S}} b) \meet (a \rightarrow_{{\rm C}} b)) = \de((a \rightarrow_{{\rm S}} b) \meet (a \rightarrow_{{\rm C}} b))^{**} \\
& = \de((a \rightarrow_{{\rm S}} b)^{\bot} \join (a \rightarrow_{{\rm C}} b)^{\bot})^{*}= (\de((a \rightarrow_{{\rm S}} b)^{\bot}) \join \de((a \rightarrow_{{\rm C}} b)^{\bot}))^{*} \\
&= (\de(a \rightarrow_{{\rm S}} b)^{*} \join \de(a \rightarrow_{{\rm C}} b)^{*})^{*}= (\de(a \rightarrow_{{\rm S}} b) \meet \de(a \rightarrow_{{\rm C}} b))^{**}\\
& = ((\de(a) \Rightarrow_{{\rm S}} \de(b)) \meet (\de(a) \Rightarrow_{{\rm C}} \de(b)))^{**}\\ 
& = \de(a) \Rightarrow_{{\rm R}} \de(b).
\end{align*}
\begin{align*}
\eps(S \Rightarrow_{{\rm R}} T)
&= \eps(((S \Rightarrow_{{\rm S}} T) \meet (S \Rightarrow_{{\rm C}} T))^{**}) = \eps((S \Rightarrow_{{\rm S}} T) \meet (S \Rightarrow_{{\rm C}} T))^{\bot \bot}\\
& = \eps((S \Rightarrow_{{\rm S}} T) \meet (S \Rightarrow_{{\rm C}} T)) = \eps(S \Rightarrow_{{\rm S}} T) \meet \eps(S \Rightarrow_{{\rm C}} T) \\
&= (\eps(S) \rightarrow_{{\rm S}} \eps(T)) \meet (\eps(S) \rightarrow_{{\rm C}} \eps(T)) \\
&= \eps(S) \rightarrow_{{\rm R}} \eps(T).
\end{align*}
\end{proof}

{For $j=\rS,\rC,\rR$ we define the operation $\IFF_j$ on $\Sub$ by  \[S\IFFJ T:= (S\THENJ T)\And(T\THENJ S)\] for any $S,T\in\Sub$.}

We have successfully translated the three quantum material implication operations 
into corresponding operations on the distributive lattice of clopen subobjects of the spectral presheaf equipped with the paraconsistent negation $*$. Of course, the induced implications on $\Subcl(\Sig)$ will have different logical properties to their orthomodular counterparts. For example, they will generally violate modus ponens. This is not surprising given that modus ponens is a problematic inference rule for paraconsistent logics in general (for instance, it is known that modus ponens fails in Priest's \cite{Pri79} paraconsistent `logic of paradox'). At any rate, we can show that all three implication operations share some basic structural properties.

\begin{proposition} \label{Prop_SCRProperties}
For $j = {\rm S}, {\rm C}, {\rm R}$, the operation $\Rightarrow_{j}$ 
satisfies the following properties:
\benum
\item $\top \Rightarrow_{j} S = S^{**}$,
\item $\bot \Rightarrow_{j} S = \top$,
\item $S \Rightarrow_{j} \top = \top$,
\item $S \Rightarrow_{j} \bot = S^*$,
\item $S \Rightarrow_{j} T = \top$ ~if and only if~ $S^{**} \leq T^{**}$,
{\item $S \IFF_{j} T = \top$ ~if and only if~ $S^{**}= T^{**}$.}
\eenum
\end{proposition}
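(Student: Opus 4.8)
The plan is to reduce all six items to a single structural identity: for $j={\rm S},{\rm C},{\rm R}$ and all $S,T\in\Subcl(\Sig)$,
\[
S\Rightarrow_{j}T \;=\; S^{**}\Rightarrow_{j}T^{**} \;=\; \de(\eps(S)\rightarrow_{j}\eps(T)).
\]
The first equality holds because each of the defining expressions for $\Rightarrow_{{\rm S}}$, $\Rightarrow_{{\rm C}}$, $\Rightarrow_{{\rm R}}$ mentions $S$ and $T$ only through $S^{*}$ and $T^{*}$ (together with joins and, in the relevance case, an outer double-star), and $(S^{**})^{*}=S^{***}=S^{*}$ by property (iii) of the $*$-operation; hence replacing $S$ by $S^{**}$ and $T$ by $T^{**}$ does not change the value. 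The second equality is then just the mirror theorems \ref{th:mirror-S}--\ref{th:mirror-R} applied with $a=\eps(S)$, $b=\eps(T)$, using $S^{**}=\de(\eps(S))$, $T^{**}=\de(\eps(T))$ (the lemma just before theorem \ref{th:mirror-S}). Once this identity is available, everything else is a short computation inside the orthomodular lattice $L$.

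For (i)--(iv) I would substitute the extreme elements and use that $\eps$ preserves top and bottom, so $\eps(\top)$ is the top of $L$ and $\eps(\bot)$ is the bottom of $L$, together with the elementary identities $\top\rightarrow_{j}b=b$, $\ \bot\rightarrow_{j}b=\top$, $\ a\rightarrow_{j}\top=\top$, $\ a\rightarrow_{j}\bot=a^{\bot}$ in any OML, each immediate from definition \ref{Def_MinImplications}. This gives $\top\Rightarrow_{j}S=\de(\eps(S))=S^{**}$; $\ \bot\Rightarrow_{j}S=\de(\top)=\top$; $\ S\Rightarrow_{j}\top=\de(\top)=\top$; and $S\Rightarrow_{j}\bot=\de(\eps(S)^{\bot})=S^{*}$, the last step being the definition of $S^{*}$. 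For (v), note that $\de$ is injective and $\de(\top)=\top$, so $S\Rightarrow_{j}T=\top$ iff $\eps(S)\rightarrow_{j}\eps(T)=\top$; by Hardegree's order-reflecting criterion this holds iff $\eps(S)\leq\eps(T)$, and since $\de$ is monotone with $\eps\circ\de=\id$, that is equivalent to $\de(\eps(S))\leq\de(\eps(T))$, i.e.\ $S^{**}\leq T^{**}$. Finally (vi) follows from (v) together with $S\IFF_{j}T=(S\Rightarrow_{j}T)\meet(T\Rightarrow_{j}S)$: a meet of two elements equals $\top$ iff both summands equal $\top$, so $S\IFF_{j}T=\top$ iff $S^{**}\leq T^{**}$ and $T^{**}\leq S^{**}$, i.e.\ $S^{**}=T^{**}$.

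The one step that needs genuine care, and which I expect to be the main obstacle, is the first equality of the reduction identity in the relevance case: one must check that absorbing the double-star into $S$ and $T$ remains legitimate in the presence of the meet and the outer $(\cdot)^{**}$ in the definition of $\Rightarrow_{{\rm R}}$. This is handled by first establishing the identity for $\Rightarrow_{{\rm S}}$ and $\Rightarrow_{{\rm C}}$, where only $S^{*},T^{*}$ and joins appear, and then observing that $(S\Rightarrow_{{\rm S}}T)\meet(S\Rightarrow_{{\rm C}}T)$ therefore already depends on $S,T$ only through $S^{**},T^{**}$, so its double-star does too. A more pedestrian alternative is to bypass the reduction identity and verify (i)--(vi) directly from the lemmas on $\de$, $\eps$ and $*$ that precede theorem \ref{th:mirror-S}, but this essentially re-runs the bookkeeping already carried out in the proofs of the mirror theorems.
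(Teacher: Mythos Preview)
Your proposal is correct and follows essentially the same route as the paper: both reduce (i)--(iv) to elementary OML identities via $S\Rightarrow_{j}T=\de(\eps(S)\rightarrow_{j}\eps(T))$, and both derive (vi) from (v). You are in fact more explicit than the paper, which simply writes $\top\Rightarrow_{j}S=\de(\eps(\top)\rightarrow_{j}\eps(S))$ etc.\ without justifying that reduction identity; your argument that the defining expressions depend on $S,T$ only through $S^{*},T^{*}$ (hence $S\Rightarrow_{j}T=S^{**}\Rightarrow_{j}T^{**}$), combined with the mirror theorems, fills that gap. The only minor divergence is in (v): you use injectivity of $\de$ to pass from $\de(\eps(S)\rightarrow_{j}\eps(T))=\top$ to $\eps(S)\rightarrow_{j}\eps(T)=\top$, while the paper instead applies $\eps$ to both sides (using the second half of the mirror theorems) and, for the converse, uses $S\Rightarrow_{j}T\geq(S\Rightarrow_{j}T)^{**}$; both arguments are sound.
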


\begin{proof}
 (i)--(iv) follow from the following calculations.
\begin{align*}
\top \Rightarrow_{j} S &=\de(\eps(\top) \rightarrow_{j} \eps(S))=\de(\top\rightarrow_{j} \eps(S))
=\de(\eps(S))=S^{**},\\
\bot \Rightarrow_{j} S&=\de(\eps(\bot) \rightarrow_{j} \eps(S))=\de(\bot \rightarrow_{j}  \eps(S))=\de(\top)=\top,\\
S \Rightarrow_{j} \top&=\de(\eps(S) \rightarrow_{j} \eps(\top))=\de(\eps(S) \rightarrow_{j} \top)=\de(\eps(S)^{\bot} \vee \top)=
\de(\top)=\top,\\
S \Rightarrow_{j} \bot&=\de(\eps(S) \rightarrow_{j} \eps(\bot))=\de(\eps(S) \rightarrow_{j} \bot)=
\de(\eps(S)^{\bot} \vee \bot)=\de(\eps(S)^{\bot})=S^{*}.
\end{align*}

To prove (v), suppose  $S \Rightarrow_{j} T=\top$.  Then  $\eps(S) \rightarrow_{j} \eps(T)=\eps(S \Rightarrow_{j} T)=\eps(\top)=\top$,
so $\eps(S)\le \eps(T)$, and hence $S^{**}=\de(\eps(S)) \le \de(\eps(T)) =T^{**}$.  Conversely, suppose $S^{**}\le T^{**}$.
Then,  $\eps(S)=\eps(S^{**})\le\eps(T^{**})=\eps(T)$, so that $\eps(S \Rightarrow_{j} T)=\eps(S) \rightarrow_{j} \eps(T)=\top$, and hence $S \Rightarrow_{j} T\ge(S \Rightarrow_{j} T)^{**}= \de(\eps(S \Rightarrow_{j} T))=\top$.  Therefore, we have $S \Rightarrow_{j} T=\top$.
{Assertion (vi) follows easily from assertion (v).}
\end{proof}

\subsection{Commutativity and Paraconsistency}

Since if an orthomodular lattice is distributive, it is a Boolean algebra and it represents a classical
logic, it is often considered that a distributive lattice represents a sublogic of the classical logic
or a logic of simultaneously determinate (or commuting) propositions.
However, this idea may conflicts with our embedding of an orthomodular lattice $\PH$,
a typical logic for simultaneously indeterminate (or noncommuting) propositions, 
into a distributive lattice $\Sub$.  In order to resolve this conflict, we introduce the notion of 
commutativity using the paraconsistent negation, and show that noncommutativity 
can be be expressed in a distributive logic with paraconsistent negation.
Thus,  without specifying what negation is considered with it,
it may not be answered whether a distributive lattice represents a classical logic.

\begin{definition} Given $S,T \in \Subcl(\Sig)$, we say that $S$ commutes with $T$ 
(in symbols $S\commutes  T$) if 
$\ep(S)$ commutes with $\ep(T)$ in $\PH$, \ie $\ep(S)=(\ep(S)\And\ep(T))\Or(\ep(S)\And\ep(T)\p)$.
\end{definition}

\bProposition\label{th:com-para}
For any $a, b\in\PH$, we have $a\commutes  b$ if and only if $\de(a)\commutes \de(b)$.
\eProposition
\bProof
Since $\de(a),\de(b)\in\Sub$, we have  $\de(a)\commutes \de(b)$ if and only if
$\ep(\de(a))\commutes \ep(\de(b))$ if and only if $a\commutes  b$.
\eProof
The following theorem shows that the commutativity can be expressed only by lattice
operations and the paraconsistent negation $*$.
 
\bTheorem
For any $S,T \in \Subcl(\Sig)$, we have $S\commutes  T$ if and only if
\begin{align}\label{eq:C-P}
S^{**}=(S^{*}\Or(T^{*}\And T^{**}))^{*}.
\end{align}
\eTheorem
\bProof
Let $S,T \in \Subcl(\Sig)$ and let $f(S,T)=(\ep(S)\And\ep(T))\Or(\ep(S)\And\ep(T)\p)$.
We have 
\begin{align*}
\de\circ f(S,T)
&=\de(\ep(S)\And\ep(T))\Or\de(\ep(S)\And\ep(T)\p)\\
&=\de\circ\ep(S\And T)\Or\de\circ\ep(S\And T^{*})\\
&=(S\And T)^{**}\Or(S\And T^{*})^{**}\\
&=(S^{*}\Or T^{*})^{*}\Or(S^{*}\Or T^{**})^{*}\\
&=((S^{*}\Or T^{*})\And(S^{*}\Or T^{**}))^{*}\\
&=(S^{*}\Or (T^{*}\And T^{**}))^{*}.
\end{align*}
 Suppose $S\commutes  T$.
 Then, we have $\de\circ f(S,T)=\de\circ\ep(S)=S^{**}$, and we obtain relation (\ref{eq:C-P}).
 Conversely, suppose that relation (\ref{eq:C-P}) holds.
 Then, we have $\de\circ f(S,T)=S^{**}=\de\circ\ep(S)$ and hence we have
 $f(S,T)=\ep\circ\de\circ f(S,T)=\ep\circ\de\circ\ep(S)=\ep(S)$, so that $S\commutes  T$ holds.
\eProof

The following theorems show that {the} 
distributive logic $\Sub$ with the negation $*$ 
is properly paraconsistent in the sense that $S\And S^*=\bot$  only if $S=\top$ or $S=\bot$
for all $S\in\Sub$.

\bTheorem	\label{thm:de(a)Andde(a)Star}
For any $a\in \PH$, we have $\de(a)\And\de(a)^*=\bot$  if and only if $a=\top$ or $a=\bot$. 
\eTheorem
\bProof
It is obvious that $\de(a)\And\de(a)^*=\bot$ if  $a=\top$ or $a=\bot$.
Suppose $\de(a)\And\de(a)^*=\bot$.
Let $x\in\PH$.  Then we have
\begin{align*}
(\de(x)^{*}\Or(\de(a)^{*}\And \de(a)^{**}))^{*}
=
(\de(x)^{*}\Or(\de(a)^{*}\And \de(a)))^{*}
=
\de(x)^{**}.
\end{align*}
Thus, $\de(a)\commutes \de(x)$ so that  $a\commutes  x$ by Proposition \ref{th:com-para}.
It follows that $a\commutes  x$ for all $x\in\PH$,
and hence $a=\top$ or $a=\bot$.
\eProof

\bTheorem
For any $S\in \Sub$, we have $S\And S^*=\bot$  if and only if $S=\top$ or $S=\bot$. 
\eTheorem
\bProof
It is obvious that $S\And S^*=\bot$ if  $S=\top$ or $S=\bot$.
Suppose that $S\And S^*=\bot$. Since $S^{**}\le S$, we have $S^{**}\And S^{*}=\bot$. Since $S^{**}=\de(\eps(S))=\de(a)$, where $a=\eps(S)$, we have $S^{*}=S^{***}=\de(a)^{*}$, so $S^{**}\And S^{*}=\de(a)\And\de(a)^{*}=\bot$. By Theorem \ref{thm:de(a)Andde(a)Star}, $a=\bot$ or $a=\top$, so $S^{**}=\de(a)=\bot$ or $S^{**}=\top$. If $S^{**}=\top$, then $S=\top$ since $S^{**}\le S$.  If $S^{**}=\bot$, then $S^{*}=S^{***}
=\top$, so that $S=S\And S^{*}=\bot$.  Thus, if   $S\And S^*=\bot$ then  $S=\top$ or $S=\bot$.
\eProof

The idea now is to study the logical structure of $\Subcl(\Sig)$ equipped with the paraconsistent negation $*$ and the translated orthomodular implications $\Rightarrow_{{\rm S}}$, $\Rightarrow_{{\rm C}}$, $\Rightarrow_{{\rm R}}$ (it turns out that $\Rightarrow_{{\rm S}}$ has a special role here). As we will see in the next section, this new logical structure allows us to develop rich new connections between TQT and QST.

\section{Bridging the Gap}    \label{Sec_Bridge}
\subsection{$V^{(\Subcl(\Sigma))}$}	\label{Sec_VSubclSig}

Eva \cite{Eva15} suggests the possibility of connecting TQT and QST via the set-theoretic structure $V^{(\Subcl(\Sigma))}$, where $\Subcl(\Sigma)$ is equipped with the negation $*$ rather than the Heyting negation. However, he stops short of translating the orthomodular implication connectives into the distributive setting and does not provide any characterisation of the extent to which $V^{(\Subcl(\Sigma))}$ is able to model any interesting mathematics.

\bDefinition
For $j={\rm S},{\rm C},{\rm R}$, we define the $\Subcl(\Sigma)$-valued truth value $\valj{\ph}$ of any formula
$\ph$ in the language of set theory augmented by the names of elements of $\VS$
as the $(\Subcl(\Sigma),\THENJ,\IFF_j)$-interpretation of $\phi$
introduced in definition \ref{intrepretation}.
\eDefinition

\bDefinition
The universe $V$ of the ZFC set theory is embedded into $\VS$ by 
\begin{align*}
\hat{}: V &\rightarrow \VS,\\
x &\mapsto \hat{x}, 
\end{align*}
where $\hat{x} = \{\langle \hat{y}, \top \rangle \: | \: y \in x \} $, \ie 
$\dom({\hat{x}}) = \{ \hat{y} \: | \: y \in x \}$
 and $\hat{x}$ assigns the value $\top$ to every element of its domain. 
\eDefinition

{Note that the relation
 \begin{equation*}
\val{x\in \ck{u}}_j
=\Sup_{u'\in u}\val{\ck{u}'=x}_j
\end{equation*}
follows from the above definition immediately.}
This embedding satisfies the following properties.
\bProposition
The following relations hold for any $u,v\in V$.
\benum
\item $\valj{\ck{u}\in\ck{v}}=\top$  if $u\in v$.
\item $\valj{\ck{u}\in\ck{v}}=\bot$  if $u\not\in v$.
\item $\valj{\ck{u}=\ck{v}}=\top$  if $u=v$.
\item $\valj{\ck{u}=\ck{v}}=\bot$  if $u\not=v$.
\eenum
\eProposition
\bProof
Let $u,v\in{V}$.  Suppose that relations (iii) and (iv) hold for any
$u'\in{u}$ and $v'\in{v}$.  
\begin{align*}
\valj{\ck{u'}\in\ck{v}}&=\Sup_{{v''}\in\dom({\ck{v}})}{\ck{v}}({v''})\And \valj{\ck{u'}={v''}}
=\Sup_{v'\in v}
{\ck{v}}(\ck{v'})\And\valj{\ck{u'}=\ck{v'}}
=\Sup_{v'\in v}\valj{\ck{u'}=\ck{v'}},\\
\valj{\ck{u}=\ck{v}}&=
\Inf_{{u''}\in\dom({\ck{u}})}{\ck{u}}({u''})\THENJ \valj{{u''}\in {\ck{v}}}\And
\Inf_{{v''}\in\dom({\ck{v}})}{\ck{v}}({v''})\THENJ \valj{{v''}\in {\ck{u}}}\\
&=\Inf_{u'\in u}{\ck{u}}(\ck{u'})\THENJ \valj{\ck{u'}\in {\ck{v}}}\And
\Inf_{v'\in v}{\ck{v}}(\ck{v'})\THENJ \valj{\ck{v'}\in {\ck{u}}}\\
&=
{
\Inf_{u'\in u}\top\THENJ \valj{\ck{u'}\in {\ck{v}}}\And
\Inf_{v'\in v}\top\THENJ \valj{\ck{v'}\in {\ck{u}}}}\\
&=\Inf_{u'\in u}\valj{\ck{u'}\in \ck{v}}^{**}\And
\Inf_{v'\in v}\valj{\ck{v'}\in \ck{u}}^{**}\\
&=\Inf_{u'\in u}(\Sup_{v'\in v}\valj{\ck{u'}=\ck{v'}})^{**}
\And
\Inf_{v'\in v}(\Sup_{u'\in u}\valj{\ck{v'}=\ck{u'}})^{**},
\end{align*}
{where Proposition \ref{Prop_SCRProperties} (i) was used in the penultimate equality. }
Thus, $\valj{\ck{u}=\ck{v}}=\top$ if $u=v$, and 
$\valj{\ck{u}=\ck{v}}=\bot$ if $u\not=v$ for all $u,v\in{V}$ 
{from the relations $\top^{**}=\top$ and $\bot^{**}=\bot$.}
Consequently, relations (iii) and (iv) have been proved by induction.
Then, relations (i) and (ii) follow straightforwardly.
\eProof

\bCorollary
The following statements hold.
\begin{enumerate}[{\rm (1)}]
\item
Given a $\De_{0}$-formula $\phi$ with $n$ free variables, and $x_{1},...,x_{n} \in V$, $\phi (x_{1}, ... , x_{n}) \leftrightarrow 
\VS \models \phi (\hat{x}_{1}, ... , \hat{x}_{n})$. 
\item
Given a $\Sigma_{1}$-formula $\phi$ with $n$ free variables, and $x_{1},...,x_{n} \in V$, $\phi (x_{1}, ... , x_{n}) \rightarrow 
\VS \models \phi (\hat{x}_{1}, ... , \hat{x}_{n})$. 
\end{enumerate}
\eCorollary

The first thing to note is that, assuming $*$, $V^{(\Subcl(\Sigma))}$ is a set-theoretic structure with a \emph{paraconsistent} internal logic. In recent years, Weber \cite{Web10,Web12}, Brady \cite{Bra89}, L{\"o}we and Tarafder \cite{LowTar15} and others have done exciting work in exploring the possibility of developing a non-trivial set theory built over a paraconsistent logic. However, there is still no well established model theory for paraconsistent set theory, despite some promising recent developments (see \eg Libert \cite{Lib05}, L{\"o}we and Tarafder \cite{LowTar15}). So it is not currently possible to characterise $V^{(\Subcl(\Sigma))}$ as a full model of any particular set theory. However, we will now show that it is possible to translate Ozawa's $\De_{0}$ transfer principle for orthomodular valued models to the paraconsistent structure $V^{(\Subcl(\Sigma))}$. This guarantees that $V^{(\Subcl(\Sigma))}$ is able to model significant fragments of classical mathematics. The following definition will play an important role. 

\begin{definition} The maps $\alpha: V^{(\PH)} \rightarrow V^{(\Subcl(\Sig))}$ and 
$\omega: V^{(\Subcl(\Sig))} \rightarrow V^{(\PH)}$ are given by the following recursive definitions:
\begin{enumerate}[{\rm (1)}]
\item Given $u \in V^{(\PH)}$, $\alpha(u) = \{\langle \alpha(x), \de(u(x))\rangle| x \in \dom(u)\}$.
\item Given $u \in V^{(\Subcl(\Sig))}, \omega(u) = \{\langle \omega(x), \eps(u(x))\rangle| x \in \dom(u)\}$.
\end{enumerate}
\end{definition}

Intuitively, $\alpha$ is an embedding of the orthomodular valued structure $V^{(\PH)}$ into the paraconsistent and distributive structure $V^{(\Subcl(\Sig))}$. It allows us to translate the constructions of QST into a new setting whose logic is closely connected to TQT. $\alpha$ and $\omega$ can be seen as `higher level' versions of the morphisms $\de$, $\eps$ that map between whole Q-worlds rather than simple lattices.  

Before proving a fundamental theorem concerning the implications of $\alpha$, recall the induction principle for algebraic valued models of set theory, which says that for any algebra $A$,
\[
{\forall u \in V^{(A)}(\forall u' \in \dom(u)\phi(u') \rightarrow \phi(u)) \rightarrow \forall u \in V^{(A)}\phi(u),}
\]
\ie if we want to prove that $\phi$ holds for every element of the structure $V^{(A)}$, it is sufficient to show that for arbitrary $u \in V^{(A)}$, $\phi$ holding for everything in $u$'s domain implies that $\phi$ holds for $u$. We use this inductive principle in the proof of the following key result. 
In the following, a `negation-free $\De_{0}$-formula' is any formula constructed from atomic formulae of the form $x=y$ or $x\in y$ by adding conjunction $\wedge$, disjunction $\vee$, and bounded quantifiers $(\forall x\in y)$ and $(\exists x\in y)$, where $x$ and $y$ denote arbitrary variables.

\begin{theorem}	\label{Thm_deltaalpha}
For any negation-free $\De_{0}$-formula $\phi(x_{1}, ..., x_{n})$ and any $u_{1}, ..., u_{n} \in V^{(\PH)}$, 
\[
\de(\|\phi(u_{1}, ..., u_{n})\|_{{\rm S}}) \leq \|\phi(\alpha(u_{1}), ..., \alpha(u_{n}))\|_{{\rm S}}.
\]
\end{theorem}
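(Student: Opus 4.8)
The plan is to carry out a structural induction on the complexity of the negation-free $\De_0$-formula $\ph$, using the induction principle for algebraic-valued models recalled just above to handle the recursion on ranks that is implicit in the atomic cases. Throughout I would keep in mind that $\de$ is monotone and join-preserving (Proposition \ref{th:delta}), that $\eps\circ\de=\id$ and $\de\circ\eps\le\id$ (the Corollary following Constantin's lemma), and that $\Then_{\rm S}$ \emph{mirrors} $\to_{\rm S}$ in the sense of Theorem \ref{th:mirror-S}, i.e. $\de(a\to_{\rm S}b)=\de(a)\Then_{\rm S}\de(b)$; this last identity is what makes the $j=\rm S$ case work and is the reason the statement is restricted to the Sasaki implication. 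I would also record at the outset the compatibility of $\alpha$ with $\de$ at the level of membership functions: by definition $\alpha(u)(\alpha(x))=\de(u(x))$ for $x\in\dom(u)$, and $\dom(\alpha(u))=\{\alpha(x)\mid x\in\dom(u)\}$.

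The heart of the argument is the simultaneous base case for the two atomic forms $x\in y$ and $x=y$, proved by induction on $\max({\rm rank}(u),{\rm rank}(v))$. Using clauses (xii) and (xiii) of Definition \ref{intrepretation} one has, on the $\PH$ side,
\deqs{
\|u\in v\|_{\rm S}=\bjoin_{v'\in\dom(v)} v(v')\meet\|v'=u\|_{\rm S},\qquad
\|u=v\|_{\rm S}=\bmeet_{u'\in\dom(u)}\big(u(u')\Then_{\rm S}\|u'\in v\|_{\rm S}\big)\meet\bmeet_{v'\in\dom(v)}\big(v(v')\Then_{\rm S}\|v'\in u\|_{\rm S}\big),
}
and the analogous formulas on the $\Subcl(\Sig)$ side with $\alpha(u),\alpha(v)$. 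For the $\in$-clause I would push $\de$ through the join using join-preservation, through the binary meet using $\de(a\meet b)\le\de(a)\meet\de(b)$ (Proposition \ref{th:delta}(v)), then replace $\de(v(v'))$ by $\alpha(v)(\alpha(v'))$ and apply the inner induction hypothesis to $\de(\|v'=u\|_{\rm S})\le\|\alpha(v')=\alpha(u)\|_{\rm S}$. For the $=$-clause the key move is the mirror identity: $\de\big(u(u')\Then_{\rm S}\|u'\in v\|_{\rm S}\big)=\de(u(u'))\Then_{\rm S}\de(\|u'\in v\|_{\rm S})$; one then uses monotonicity of $\Then_{\rm S}$ in its second argument together with the induction hypothesis, pushes $\de$ through the outer meets (losing $\le$, which is fine), and identifies $\de(u(u'))=\alpha(u)(\alpha(u'))$. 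The induction step for $\meet$ and $\join$ is then immediate from $\de(a\meet b)\le\de(a)\meet\de(b)$, $\de(a\join b)=\de(a)\join\de(b)$, and the outer induction hypothesis; the bounded quantifiers $(\forall x\in u)$ and $(\exists x\in u)$ are handled exactly like the two conjuncts in the $=$-clause, using join-preservation for $\exists$ and the mirror identity plus second-argument monotonicity of $\Then_{\rm S}$ for $\forall$, ranging over $u'\in\dom(u)$ and replacing $\de(u(u'))$ by $\alpha(u)(\alpha(u'))$.

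The main obstacle, and the reason negation and unbounded quantifiers are excluded, is precisely that $\de$ does \emph{not} commute with meets on the nose and that $*$ is paraconsistent: any appearance of $\Not$ would force a step $\de(a^\perp)=\de(a)^*$ together with a direction-reversal that breaks the one-sided inequality, and an unbounded $\forall$ would require a meet over all of $V^{(\PH)}$ versus all of $V^{(\Subcl(\Sig))}$ with no control relating the two index sets via $\alpha$. Within the negation-free bounded fragment the only inequalities incurred are the benign $\de(a\meet b)\le\de(a)\meet\de(b)$ and the passage of $\de$ inside an infimum, both pointing the right way; the genuinely delicate point is ensuring the mirror identity is applied in exactly the spot where a $\Then_{\rm S}$ sits with the quantified membership function on its left, so that no spurious $*$ is introduced. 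Once the atomic simultaneous induction is set up cleanly, everything else is bookkeeping, and the theorem follows by the induction principle for $V^{(\PH)}$.
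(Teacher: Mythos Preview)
Your proposal is correct and follows essentially the same approach as the paper: a simultaneous induction on rank for the atomic cases using join-preservation of $\de$, the mirror identity $\de(a\to_{\rm S}b)=\de(a)\Rightarrow_{\rm S}\de(b)$, and monotonicity of $\Rightarrow_{\rm S}$ in its second argument, followed by the routine induction on formula complexity. Your framing of the rank induction via $\max({\rm rank}(u),{\rm rank}(v))$ is slightly cleaner than the paper's one-variable induction principle, but the content is identical.
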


\begin{proof} In the proof, we omit the symbol $\rS$ and simply assume that $\Rightarrow$ always denotes $\Rightarrow_{{\rm S}}$. 
Argue by induction. Let $u \in V^{(\PH)}$.
{
We begin with proving the following relations for atomic formulas.
\bigskip

{\centering
\begin{varwidth}{\textwidth}
\begin{enumerate}\item[(i)] $\|\alpha(u) = \alpha(v)\| \geq \de(\|u = v\|)$,
\item[(ii)] $\|\alpha(u) \in \alpha(v)\| \geq \de(\|u \in v\|)$,
\item[(iii)] $\|\alpha(v) \in \alpha(u)\| \geq \de(\|v \in u\|)$ for any $v \in V^{(\PH)}$.
\end{enumerate}
\end{varwidth}
\par}
\bigskip

\noindent
Suppose $u' \in \dom(u)$.  By induction hypothesis we have

\bigskip
{\centering
\begin{varwidth}{\textwidth}
\begin{enumerate}\item[(i')] $\|\alpha(u') = \alpha(v)\| \geq \de(\|u' = v\|)$,
\item[(ii')] $\|\alpha(u') \in \alpha(v)\| \geq \de(\|u' \in v\|)$ for any $v \in V^{(\PH)}$.
\end{enumerate}
\end{varwidth}
\par}

\bigskip

\noindent
Let  $v \in V^{(\PH)}$.  It follows from (i') that 
\deqs{
\|\alpha(v) \in \alpha(u)\|
&= \bjoin_{u' \in \dom(u)}( \alpha(u)(\alpha(u')) \meet \|\alpha(u') = \alpha(v)\|)\\
&= \bjoin_{u' \in \dom(u)} (\de(u(u')) \meet \|\alpha(u') = \alpha(v)\|)\\
&\geq \bjoin_{u' \in \dom(u)} [\de(u(u')) \meet \de(\|u' = v\|)]\\
&\geq \bjoin_{u' \in \dom(u)} \de(u(u') \meet \|u' = v\|)\\
&= \de\left(\bjoin_{u' \in \dom(u)} (u(u') \meet \|u' = v\|)\right)\\
&= \de(\|v \in u\|).
}
So we have shown (iii) for any $v\in \VPH$.

Using the easily observed fact that $b \leq c$ implies $a \Rightarrow_{{\rm S}} b \leq a \Rightarrow_{{\rm S}} c$, as well as  (ii') for an arbitrary $v \in V^{(\PH)}$ and (i) with substituting an arbtrary $v' \in \dom(v)$ for $v$,  we have
\deqs{
\lefteqn{\|\alpha(u) = \alpha(v)\|}\quad\\ 
&= \bmeet_{u' \in \dom(u)} (\alpha(u)(\alpha(u')) \Rightarrow \|\alpha(u') \in \alpha(v)\| )\meet \bmeet_{v' \in \dom(v)}( \alpha(v)(\alpha(v')) \Rightarrow \|\alpha(v') \in \alpha(u)\|)\\
&= \bmeet_{u' \in \dom(u)} (\de(u(u')) \Rightarrow \|\alpha(u') \in \alpha(v)\|) \meet \bmeet_{v' \in \dom(v)} (\de(v(v')) \Rightarrow \|\alpha(v') \in \alpha(u)\|)\\
&\geq \bmeet_{u' \in \dom(u)}[ \de(u(u')) \Rightarrow \de(\|u' \in v\|)] \meet \bmeet_{v' \in \dom(v)}[ \de(v(v')) \Rightarrow \de(\|v' \in u\|)]\\
&= \bmeet_{u' \in \dom(u)} \de(u(u') \rightarrow \|u' \in v\|) \meet \bmeet_{v' \in \dom(v)} \de(v(v') \rightarrow \|v' \in u\|)\\
&\geq \de\left(\bmeet_{u' \in \dom(u)} \left(u(u') \rightarrow \|u' \in v\|\right) \meet \bmeet_{v' \in \dom(v)}\left(v(v') \rightarrow \|v' \in u\|\right)\right)\\
&=\|u=v\|.
}
So we have shown (i) for any $v\in\VPH$.

It remains to show (ii).  Let $v\in\VPH$.
It follows from  (i) with substituting an arbitrary $v' \in \dom(v)$ for $v$ that
\deqs{
\| \alpha(u) \in \alpha(v)\| 
&= \bjoin_{v' \in \dom(v)} \left(\alpha(v)(\alpha(v')) \meet \|\alpha(u) = \alpha(v')\|\right)\\
&\geq \bjoin_{v' \in \dom(v)} [\de(v(v')) \meet \de(\|u = v'\|)]\\
&\geq \bjoin_{v' \in \dom(v)}\de( v(v') \meet \|u = v'\|)\\
&= \de\left(\bjoin_{v' \in \dom(v)} (v(v') \meet \|u = v'\|)\right)\\
&= \de(\|u \in v\|).
}
So we have derived (ii) for any $v\in\VPH$.
}
Thus, the assertion holds for all atomic formulae. 

Suppose that negation-free $\De_{0}$-formulae $\phi_j(\vec{x})$ for $j=1,2$ satisfy 
\[
\de(\|\phi_j(\vec{u})\|) \leq \|\phi_j(\alpha(\vec{u})))\|
\]
for any $u_1,\ldots,u_n\in V^{(\PH)}$, where $\vec{x}=(x_1,\ldots,x_n)$, $\vec{u}=(u_1,\ldots,u_n)$, and $\alpha(\vec{u})=(\alpha(u_1),\ldots,\alpha(u_n))$.  Then we have
\begin{align*}
\|\phi_1(\alpha(\vec{u}))\meet\phi_2(\alpha(\vec{u}))\|
&=\|\phi_1(\alpha(\vec{u}))\|\meet \|\phi_2(\alpha(\vec{u}))\|\\
&\geq \de(\|\phi_1(\vec{u})\|)\meet \de(\|\phi_2(\vec{u})\|)\\
&\geq \de(\|\phi_1(\vec{u})\|\meet\|\phi_2(\vec{u})\|)\\
&=\de(\|\phi_1(\vec{u})\meet\phi_2(\vec{u})\|).
\end{align*}
We also have 
\begin{align*}
\|\phi_1(\alpha(\vec{u}))\join\phi_2(\alpha(\vec{u}))\|
&=\|\phi_1(\alpha(\vec{u}))\|\join \|\phi_2(\alpha(\vec{u}))\|\\
&\geq \de(\|\phi_1(\vec{u})\|)\join \de(\|\phi_2(\vec{u})\|)\\
&=\de(\|\phi_1(\vec{u})\|\join\|\phi_2(\vec{u})\|)\\
&=\de(\|\phi_1(\vec{u})\join\phi_2(\vec{u})\|).
\end{align*}

Suppose that a negation-free $\De_{0}$-formula $\phi(x,\vec{x})$ satisfies
\[
\de(\|\phi(u,\vec{u})\|) \leq \|\phi(\alpha(u),\alpha(\vec{u}))\|
\]
for any $u,u_1,\ldots.u_n \in V^{(\PH)}$. Then, we have
\begin{align*}
\|(\forall x\in \alpha(u))\phi(x,\alpha(\vec{u}))\|
&=\bmeet_{w\in \dom(\alpha(u))}(\alpha(u)(w)\Rightarrow\|\phi(w,\alpha(\vec{u}))\|)\\
&=\bmeet_{u'\in \dom(u)}(\alpha(u)(\alpha(u'))\Rightarrow\|\phi(\alpha(u'),\alpha(\vec{u}))\|)\\
&\geq\bmeet_{u'\in \dom(u)}[\alpha(u)(\alpha(u'))\Rightarrow\de(\|\phi(u',\vec{u})\|)]\\
&=\bmeet_{u'\in \dom(u)}[\de(u(u'))\Rightarrow\de(\|\phi(u',\vec{u})\|)]\\
&{=\bmeet_{u'\in \dom(u)}\de(u(u')\rightarrow\|\phi(u',\vec{u})\|)}\\
&\geq\de\left(\bmeet_{u'\in \dom(u)}(u(u')\rightarrow\|\phi(u',\vec{u})\|)\right)\\
&\geq\de(\|(\forall x\in u)\phi(x,\vec{u})\|),
\end{align*}
and we also have
\begin{align*}
\|(\exists x\in \alpha(u))\phi(x,\alpha(\vec{u})\|
&=\bjoin_{w\in \dom(\alpha(u))}(\alpha(u)(w)\meet\|\phi(w,\alpha(\vec{u})\|)\\
&=\bjoin_{u'\in \dom(u)}(\alpha(u)(\alpha(u'))\meet\|\phi(\alpha(u'),\alpha(\vec{u}))\|)\\
&\geq\bjoin_{u'\in \dom(u)}[\alpha(u)(\alpha(u'))\meet\de(\|\phi(u',\vec{u})\|)]\\
&=\bjoin_{u'\in \dom(u)}[\de(u(u'))\meet\de(\|\phi(u',\vec{u})\|)]\\
&\geq\bjoin_{u'\in \dom(u)}\de(u(u')\meet\|\phi(u',\vec{u})\|)\\
&=\de\left(\bjoin_{u'\in \dom(u)}(u(u')\meet\|\phi(u',\vec{u})\|)\right)\\
&=\de(\|(\exists x\in u)\phi(x,\vec{u})\|).
\end{align*}

This completes the proof by induction on the complexity of negation-free $\De_{0}$-formulae.
\end{proof}

{The above proof uses the monotonicity property of the Sasaki arrow:
$b \leq c$ implies $a \Rightarrow_{{\rm S}} b \leq a \Rightarrow_{{\rm S}} c$. 
In the case where $j={\rm R}$ or ${\rm C}$, the corresponding property does not hold,
so that the proof does not work in those cases.}
 
Combined with theorem \ref{Thm_ZFCTransfer} and the monotonicity of $\de$, theorem \ref{Thm_deltaalpha} immediately leads to the following important corollary,
{for which we recall definition \ref{def:commutator}.}

\begin{theorem}\label{Thm_ZFCdeltaalpha} For any negation-free $\De_{0}$-formula $\phi(x_{1}, ..., x_{n})$ and any $u_{1}, ..., u_{n} \in V^{(\PH)}$, if $\phi(x_{1}, ..., x_{n})$ is provable in ZFC then
\[
\de(\ul{\join} (u_{1}, ..., u_{n})) \leq \|\phi(\alpha(u_{1}), ... \alpha(u_{n}))\|_{{\rm S}}.
\]
\end{theorem}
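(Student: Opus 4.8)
The plan is to combine Theorem~\ref{Thm_deltaalpha} (the transfer of truth values along $\alpha$ for negation-free $\De_0$-formulae) with Theorem~\ref{Thm_ZFCTransfer} (Ozawa's $\De_0$ transfer principle for $V^{(\PH)}$), using monotonicity of $\de$ as the glue. The key observation is that these two theorems chain together: the first says $\de(\|\phi(\vec u)\|_{\rS}) \leq \|\phi(\alpha(\vec u))\|_{\rS}$, and the second says $\ul{\join}(u_1,\dots,u_n) \leq \|\phi(\vec u)\|_{\rS}$ whenever $\phi$ is provable in ZFC (note that $\phi$ being provable in ZFC certainly includes the case of negation-free $\De_0$-formulae, so Theorem~\ref{Thm_ZFCTransfer} applies with $j=\rS$).

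First I would take $\phi(x_1,\dots,x_n)$ to be a negation-free $\De_0$-formula provable in ZFC, and arbitrary $u_1,\dots,u_n \in V^{(\PH)}$. By Theorem~\ref{Thm_ZFCTransfer} with $j=\rS$, we have
\[
\ul{\join}(u_1,\dots,u_n) \leq \|\phi(u_1,\dots,u_n)\|_{\rS}.
\]
Then I would apply the daseinisation map $\de$ to both sides. Since $\de$ is monotone (Proposition~\ref{th:delta}(ii)), this yields
\[
\de(\ul{\join}(u_1,\dots,u_n)) \leq \de(\|\phi(u_1,\dots,u_n)\|_{\rS}).
\]
Finally, by Theorem~\ref{Thm_deltaalpha} applied to the negation-free $\De_0$-formula $\phi$,
\[
\de(\|\phi(u_1,\dots,u_n)\|_{\rS}) \leq \|\phi(\alpha(u_1),\dots,\alpha(u_n))\|_{\rS},
\]
and chaining the three inequalities gives
\[
\de(\ul{\join}(u_1,\dots,u_n)) \leq \|\phi(\alpha(u_1),\dots,\alpha(u_n))\|_{\rS},
\]
which is exactly the assertion.

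Honestly, there is no real obstacle here — this is a genuine corollary in the sense that all the work has already been done in the two cited theorems, and the only new ingredient is monotonicity of $\de$, already recorded in Proposition~\ref{th:delta}. The one point worth stating carefully is that the hypothesis "$\phi$ provable in ZFC" is exactly the hypothesis needed to invoke Theorem~\ref{Thm_ZFCTransfer}, while the hypothesis "$\phi$ negation-free $\De_0$" is exactly what Theorem~\ref{Thm_deltaalpha} requires; since the statement of the corollary assumes both, the chaining is unobstructed. (If one wanted to be slightly more careful, one could remark that the restriction to negation-free formulae is essential because, as noted after the proof of Theorem~\ref{Thm_deltaalpha}, the argument there relies on monotonicity of the Sasaki arrow in its second argument, which fails for $\Rightarrow_{\rC}$ and $\Rightarrow_{\rR}$ and would fail in the presence of negation.)
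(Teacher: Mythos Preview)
Your proposal is correct and matches the paper's approach exactly: the paper presents this result as an immediate corollary, writing just before the statement that ``Combined with theorem~\ref{Thm_ZFCTransfer} and the monotonicity of $\de$, theorem~\ref{Thm_deltaalpha} immediately leads to the following important corollary,'' and gives no further proof. Your chaining of the two theorems via monotonicity of $\de$ is precisely what is intended.
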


Equipped with theorem \ref{Thm_ZFCdeltaalpha}, we can guarantee that $V^{(\Subcl(\Sigma))}$ will model significant fragments of classical mathematics. Crucially, we can now translate many of Ozawa's \cite{Oza07} results characterising the behaviour of the real numbers in orthomodular valued models to the paraconsistent structure $V^{(\Subcl(\Sigma))}$.\footnote{It is also worth briefly pointing out that $V^{(\Subcl(\Sigma))}$ is very clearly a non-trivial structure. It is not the case that every $\Subcl(\Sigma)$-sentence is satisfied by $V^{(\Subcl(\Sigma))}$. To give a simple example, let $e \in V^{(\Subcl(\Sigma))}$ be any element such that for every $x \in \dom(e)$, $e(x) = \bot$. Then it is easy to see that for any $u \in V^{(\Subcl(\Sigma))}$, $\|u \in e\| = \bot$.}

\subsection{Real Numbers in $V^{(\Subcl(\Sigma))}$ }
We are now ready to use the paraconsistent structure $V^{(\Subcl(\Sigma))}$ to establish a rich and powerful connection between TQT and QST. First of all, we can define the Dedekind reals in $V^{(\Subcl(\Sigma))}$ in the usual way. Eva \cite{Eva15} shows that, in the case where the original orthomodular lattice is a projection lattice $\PH$, {the set} $\mathbb{R}^{(\Subcl(\Sigma))}$ of all Dedekind reals in $V^{(\Subcl(\Sigma))}$
is closely related to the set $SA(\cH)$ of all self-adjoint operators on $\cH$.
{Here, we render this connection more precisely. }

Recall first that $\mathbb{R}^{(\Subcl(\Sigma))}$ is defined by 

\bc
$\mathbb{R}^{(\Subcl(\Sigma))} = \{u \in V^{(\Subcl(\Sigma))}| \dom(u) = \dom( \hat{\mathbb{Q}}) \meet \valj{\mathbf{R}(u)} = \top\}$,
\ec
\noindent where $\mathbf{R}(u)$ is the formula 

\vspace{-0.4cm}

\begin{align*}
\mathbf{R}(u) := \forall y \in u (y \in \hat{\mathbb{Q}}) \meet &\exists y \in \hat{\mathbb{Q}} (y \in u) \meet 
\exists y \in \hat{\mathbb{Q}} (y \notin u) \meet\\
&\forall y \in \hat{\mathbb{Q}} (y \in u \leftrightarrow 
\exists z \in \hat{\mathbb{Q}} (z < y \meet z \in u)),
\end{align*}
meaning that $u$ is the upper segment of a Dedekind cut of the rational numbers without endpoint.
\begin{proposition}
{Let  $u,v\in\RS$. The following statements hold.}
\begin{enumerate}[{\rm (1)}]
\item
$\valj{\hat{r} \in u} = u(\hat{r})$
{ for all $r \in \mathbb{Q}$.}
{
\item 
$\valj{u=v}=\top$ if and only if 
$u(\ck{r})^{**}=v(\ck{r})^{**}$ for all $r\in\Q$.}
\end{enumerate}
\end{proposition}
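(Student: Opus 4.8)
The plan is to peel off the two parts separately, using the definitions of the bounded-quantifier truth values (clauses (viii)–(ix) of Definition \ref{intrepretation}) together with the structural facts about $*$ collected in Proposition \ref{Prop_SCRProperties} and the lemmas on $\de,\eps$.

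For part (1), I would start from clause (xiii), which gives $\valj{\ck{r}\in u}=\Sup_{s\in\Q}u(\ck{s})\And\valj{\ck{s}=\ck{r}}$, using that $\dom(u)=\dom(\ck{\Q})=\{\ck{s}\mid s\in\Q\}$. By the Proposition on the embedding (relations (iii), (iv)), $\valj{\ck{s}=\ck{r}}=\top$ when $s=r$ and $=\bot$ otherwise, so the supremum collapses to the single term $u(\ck{r})\And\top=u(\ck{r})$. This is the easy half; the only thing to be slightly careful about is that $\And$ with $\top$ really is the identity in $\Sub$, which is immediate since $\Sub$ is a lattice with top $\top$.

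For part (2), the key observation is that, by clause (xii), $\valj{u=v}=\top$ iff for every $r\in\Q$ we have $u(\ck{r})\THENJ\valj{\ck{r}\in v}=\top$ and $v(\ck{r})\THENJ\valj{\ck{r}\in u}=\top$. By part (1), $\valj{\ck{r}\in v}=v(\ck{r})$ and $\valj{\ck{r}\in u}=u(\ck{r})$, so this says $u(\ck{r})\THENJ v(\ck{r})=\top$ and $v(\ck{r})\THENJ u(\ck{r})=\top$ for all $r$. Now invoke Proposition \ref{Prop_SCRProperties}(v): $S\THENJ T=\top$ iff $S^{**}\le T^{**}$. Hence the biconditional $u(\ck{r})\IFFJ v(\ck{r})=\top$ holds for all $r$ iff $u(\ck{r})^{**}\le v(\ck{r})^{**}$ and $v(\ck{r})^{**}\le u(\ck{r})^{**}$, i.e. iff $u(\ck{r})^{**}=v(\ck{r})^{**}$ for all $r$. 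One can either run the argument through the two one-sided implications, or more cleanly note that $\valj{u=v}$ is, by the combined form, a meet of terms each of which is a two-sided implication, and apply Proposition \ref{Prop_SCRProperties}(vi) directly.

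The main obstacle is essentially bookkeeping: one must make sure the reduction via part (1) is legitimately applicable inside clause (xii) (it is, since part (1) holds for all $r\in\Q$ and all elements of $\RS$, and both $u,v$ are such), and one must handle the meet over all $r\in\Q$ correctly — a meet of elements of $\Sub$ equals $\top$ iff each summand does, which holds because $\top$ is the top element. No genuine difficulty arises from the choice of $j\in\{\rS,\rC,\rR\}$ here, since Proposition \ref{Prop_SCRProperties} is uniform in $j$; this is worth remarking on, in contrast to Theorem \ref{Thm_deltaalpha}, where only $j=\rS$ worked.
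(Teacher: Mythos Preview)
Your proposal is correct and follows essentially the same approach as the paper: for (1) you collapse the join using the embedding properties $\valj{\ck{s}=\ck{r}}\in\{\top,\bot\}$, and for (2) you unwind clause (xii), substitute part (1), and then invoke Proposition \ref{Prop_SCRProperties}(v) or (vi). The paper does exactly this, writing out the chain of equalities explicitly to arrive at $\valj{u=v}=\Inf_{r\in\Q}(u(\ck{r})\IFFJ v(\ck{r}))$ and then applying Proposition \ref{Prop_SCRProperties}(vi).
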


\begin{proof}
Statement (1) follows from
\[
\valj{\hat{r} \in u} = \bigvee \limits_{s \in \mathbb{Q}} u(\hat{s}) \wedge \valj{\hat{s} = \hat{r}} = u(\hat{r}).
\]
{
To show statement (2), let $u,v\in\RS$.  We obtain
\deqs{
\valj{u=v}
&=
\Inf_{u'\in\dom(u)}
(u(u')\THENJ\valj{u'\in v})\And
\Inf_{v'\in\dom(v)}(v(v')\THENJ\valj{v'\in u})\\
&=
\Inf_{u'\in\dom(\ck{\mathbb{Q}})}
(u(u')\THENJ\valj{u'\in v})\And
\Inf_{v'\in\dom(\ck{\mathbb{Q}})}(v(v')\THENJ\valj{v'\in u})\\
&=
\Inf_{r\in\mathbb{Q}}
(u(\ck{r})\THENJ\valj{\ck{r}\in v})\And
\Inf_{r\in\mathbb{Q}}(v(\ck{r})\THENJ\valj{\ck{r}\in u})\\
&=
\Inf_{r\in\mathbb{Q}}
[(u(\ck{r})\THENJ\valj{\ck{r}\in v})\And
(v(\ck{r})\THENJ\valj{\ck{r}\in u})]\\
&=
\Inf_{r\in\mathbb{Q}}
[(u(\ck{r})\THENJ v(\ck{r}))\And
(v(\ck{r})\THENJ u(\ck{r}))]\\
&=
\Inf_{r\in\mathbb{Q}}
(u(\ck{r})\IFF_j v(\ck{r})).
}
Thus, $\valj{u=v}=\top$ if and only if 
$u(\ck{r})\IFF_j v(\ck{r})=\top$ for all $r\in\Q$
if and only if $u(\ck{r})^{**}=v(\ck{r})^{**}$ for all $r\in\Q$
by proposition \ref{Prop_SCRProperties} (vi).
}
\end{proof}

\begin{proposition}	\label{Prop_uLikeSpecFam}
For any $u \in V^{(\Subcl(\Sigma))}$ with $dom(u) = \hat{\mathbb{Q}}$, {we have} 
$u \in \mathbb{R}^{(\Subcl(\Sigma))}$ if and only if the following conditions hold:

\bigskip
{\centering
\begin{varwidth}{\textwidth}
\begin{enumerate}[{\rm (i)}]
\item$ \bigvee \limits_{r \in \mathbb{Q}} u(\hat{r}) = \top$.
\item$ \bigvee \limits_{r \in \mathbb{Q}} u(\hat{r})^* = \top$.
\item$ \left(\bigvee \limits_{s \in \mathbb{Q}: s < r} u(\hat{s})\right)^{**} = u(\hat{r})^{**}$
{for all $r\in\Q$.}
\end{enumerate}
\end{varwidth}
\par}
\bigskip
\end{proposition}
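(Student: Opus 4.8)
The plan is to unfold the definition of $\RS$ and compute $\valj{\mathbf{R}(u)}$ clause by clause. Write $\mathbf{R}(u)=\rho_1\meet\rho_2\meet\rho_3\meet\rho_4$, where $\rho_1\equiv\forall y\in u\,(y\in\ck\Q)$, $\rho_2\equiv\exists y\in\ck\Q\,(y\in u)$, $\rho_3\equiv\exists y\in\ck\Q\,(y\notin u)$ and $\rho_4\equiv\forall y\in\ck\Q\,(y\in u\Iff\exists z\in\ck\Q\,(z<y\meet z\in u))$. Since the $(\Subcl(\Sig),\THENJ,\IFF_j)$-interpretation sends $\meet$ to a meet in $\Subcl(\Sig)$, and a meet in $\Subcl(\Sig)$ equals $\top$ exactly when every meetand does, it suffices to establish: $\valj{\rho_1}=\top$ always; $\valj{\rho_2}=\top$ iff (i); $\valj{\rho_3}=\top$ iff (ii); $\valj{\rho_4}=\top$ iff (iii). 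Throughout I would use $\dom(u)=\dom(\ck\Q)=\{\ck r\mid r\in\Q\}$, $\ck\Q(\ck r)=\top$, and the identity $\valj{\ck r\in u}=\bjoin_{s\in\Q}(u(\ck s)\meet\valj{\ck s=\ck r})=u(\ck r)$, which needs only $\dom(u)=\dom(\ck\Q)$ and is exactly the computation in the preceding proposition.

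The first three clauses are immediate. Clause $\rho_1$: the bounded universal unfolds to $\bmeet_{r\in\Q}(u(\ck r)\THENJ\valj{\ck r\in\ck\Q})$, and since $r\in\Q$ the embedding properties give $\valj{\ck r\in\ck\Q}=\top$, so each meetand is $u(\ck r)\THENJ\top=\top$ by Proposition~\ref{Prop_SCRProperties}(iii). Clause $\rho_2$: unfolding the bounded existential gives $\valj{\rho_2}=\bjoin_{r\in\Q}(\ck\Q(\ck r)\meet\valj{\ck r\in u})=\bjoin_{r\in\Q}u(\ck r)$, which is $\top$ iff (i). Clause $\rho_3$: since $\valj{\ck r\notin u}=\valj{\ck r\in u}^*=u(\ck r)^*$, the same unfolding yields $\valj{\rho_3}=\bjoin_{r\in\Q}u(\ck r)^*$, which is $\top$ iff (ii).

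The substantive step is clause $\rho_4$. Unfolding the bounded universal gives $\valj{\rho_4}=\bmeet_{r\in\Q}(\ck\Q(\ck r)\THENJ\valj{\chi_r})$ with $\chi_r\equiv(\ck r\in u\Iff\exists z\in\ck\Q\,(z<\ck r\meet z\in u))$; since $\ck\Q(\ck r)=\top$, Proposition~\ref{Prop_SCRProperties}(i) rewrites each meetand as $\valj{\chi_r}^{**}$, and because $S^{**}\le S$ we have $S^{**}=\top$ iff $S=\top$, so $\valj{\rho_4}=\top$ iff $\valj{\chi_r}=\top$ for every $r$. Now $\valj{\chi_r}=\valj{\ck r\in u}\IFF_j\valj{\exists z\in\ck\Q\,(z<\ck r\meet z\in u)}$, which by Proposition~\ref{Prop_SCRProperties}(vi) equals $\top$ precisely when $\valj{\ck r\in u}^{**}=\valj{\exists z\in\ck\Q\,(z<\ck r\meet z\in u)}^{**}$. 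I would then compute the existential: it unfolds to $\bjoin_{s\in\Q}(\valj{\ck s<\ck r}\meet u(\ck s))$, and since ``$z<y$'' abbreviates $(z,y)\in\ck R$ with $R$ the order on $\Q$, $\Delta_0$-absoluteness (the corollary to the embedding proposition) gives $\valj{\ck s<\ck r}=\top$ when $s<r$ and $\bot$ otherwise, so the join collapses to $\bjoin_{s\in\Q,\,s<r}u(\ck s)$. Substituting $\valj{\ck r\in u}=u(\ck r)$, the condition becomes $u(\ck r)^{**}=\big(\bjoin_{s<r}u(\ck s)\big)^{**}$ for all $r$, which is exactly (iii). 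Combining the four equivalences finishes the proof.

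I do not expect a genuine obstacle. The one point requiring care is that the bounded quantifier over $\ck\Q$ forces every clause to be read ``up to $(-)^{**}$'' --- because $\top\THENJ S=S^{**}$, not $S$ --- which is precisely why (iii) is stated with double-stars and why Proposition~\ref{Prop_SCRProperties}(i),(vi) together with the inequality $S^{**}\le S$ are the facts doing the real work. It is also worth recording that every ingredient here is uniform in $j\in\{\rS,\rC,\rR\}$, so the argument incidentally shows that $\RS$ does not depend on the choice of quantum material implication connective.
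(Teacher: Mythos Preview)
Your proposal is correct and follows essentially the same approach as the paper: both unfold $\valj{\mathbf{R}(u)}$ clause by clause, reduce the existential clauses to (i) and (ii) via $\valj{\ck r\in u}=u(\ck r)$, and handle the biconditional clause by combining $\top\THENJ S=S^{**}$ with Proposition~\ref{Prop_SCRProperties}(vi) and the collapse of $\valj{\ck s<\ck r}$ to $\top$ or $\bot$. You are in fact slightly more careful than the paper, which silently drops the clause $\forall y\in u\,(y\in\ck\Q)$, whereas you verify it explicitly; your closing remark on uniformity in $j$ is also a nice addition not present in the paper.
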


\begin{proof}
By definition, $u \in \mathbb{R}^{(\Subcl(\Sigma))}$ if and only if $\valj{\mathbf{R}(u)} = \top$ if and only if the following conditions hold:

\bigskip
{\centering
\begin{varwidth}{\textwidth}
\begin{enumerate}  
\item[(P1)]$ \valj{\exists y \in \hat{\mathbb{Q}} (y \in u)} = \top$.
\item[(P2)]$ \valj{\exists y \in \hat{\mathbb{Q}} (y \notin u)} = \top$.
\item[(P3)]$ \valj{\forall y \in \hat{\mathbb{Q}} [y \in u \leftrightarrow \exists z \in \hat{\mathbb{Q}} (z < y \wedge z \in u)]} = \top$.
\end{enumerate}
\end{varwidth}
\par}
\bigskip
We have that 
\bc
$\valj{\exists y \in \hat{\mathbb{Q}} (y \in u)} = \bigvee \limits_{r \in \mathbb{Q}} \hat{\mathbb{Q}}(\hat{r}) \wedge \valj{\hat{r} \in u} = \bigvee \limits_{r \in \mathbb{Q}}  \valj{\hat{r} \in u} = \bigvee \limits_{r \in \mathbb{Q}}  u(\hat{r})$
\ec
\bc
$\valj{\exists y \in \hat{\mathbb{Q}} (y \notin u)} = \bigvee \limits_{r \in \mathbb{Q}} \hat{\mathbb{Q}}(\hat{r}) \wedge \valj{\hat{r} \notin u} = \bigvee \limits_{r \in \mathbb{Q}}  \valj{\hat{r} \in u}^{*} = \bigvee \limits_{r \in \mathbb{Q}}  u(\hat{r})^{*}$
\ec
Thus, (P1) $\Leftrightarrow$ (i) and (P2) $\Leftrightarrow$ (ii). Furthermore, 
{
\deqs{
\lefteqn{
\valj{\forall y\in \ck{\mathbb{Q}}
[y\in u \Iff
\exists z\in \ck{\mathbb{Q}} (z<y\And z\in u)]}}\qquad\\
&={\Inf_{r\in\mathbb{Q}}\ck{\mathbb{Q}}(\ck{r})\THENJ
\left(\valj{\ck{r}\in u}\IFFJ
\Sup_{s\in\Q}(\ck{\Q}(\ck{s})\And\valj{\ck{s}<\ck{r}}\And\valj{\ck{s}\in u})\right)}\\
&={\Inf_{r\in\mathbb{Q}}\top\THENJ
\left(\valj{\ck{r}\in u}\IFFJ
\Sup_{s\in\Q}(\top\And\valj{\ck{s}<\ck{r}}\And\valj{\ck{s}\in u})\right)}\\
&={\Inf_{r\in\mathbb{Q}}
\left(u(\ck{r})\IFFJ
\Sup_{s\in\Q}(\valj{\ck{s}<\ck{r}}\And u(\ck{s}))\right)^{**}}\\
&={\Inf_{r\in\mathbb{Q}}
\left(u(\ck{r})\IFFJ
\Sup_{s\in\Q:s<r}u(\ck{s})\right)^{**}}.
}
Since $S^{**}=\top \IFFJ S=\top$,  (P3) holds if and only if  $[u(\ck{r})\IFFJ
\Sup_{s\in\Q:s<r}u(\ck{s})]=\top$ holds for all $r\in\Q$.
And since $S \Leftrightarrow T = \top$ if and only if $S^{**}=T^{**}$ by proposition \ref{Prop_SCRProperties} (vi), 
}
it follows that (P3) $\Leftrightarrow$ (iii), as desired. 
\end{proof}

\begin{definition}
Define the map $H: SA(\cH) \rightarrow V^{(\Subcl(\Sigma))}$ by dom$(H(X)) = \hat{\mathbb{Q}}$ and 
\[
H(X)(\hat{r}) = \delta(E^{X}_{r})
\]
for all $r \in \mathbb{Q}$.
\end{definition}

Intuitively, the map $H$ allows us to represent self-adjoint operators within the structure $V^{(\Subcl(\Sigma))}$. Since $\delta$ is injective and the correspondence between left-continuous spectral families and self-adjoint operators is bijective, it follows that $H$ is also injective. 

\begin{proposition}	\label{Prop_XLikeSpecFamPlusRegular}
For any $X \in SA(\cH)$, the following properties hold. 

\bigskip
{\centering
\begin{varwidth}{\textwidth}
\begin{enumerate}[{\rm (i)}]
\item
$ \bigvee \limits_{r \in \mathbb{Q}} H(X)(\hat{r}) = \top$,
\item
$ \bigvee \limits_{r \in \mathbb{Q}} H(X)(\hat{r})^* = \top$,
\item
$ \bigvee \limits_{s \in \mathbb{Q}: s < r} H(X)(\hat{s}) = H(X)(\hat{r})$,
\item
$H(X)(\hat{r}) = H(X)(\hat{r})^{**}$, $\forall r \in \mathbb{Q}$.
\end{enumerate}
\end{varwidth}
\par}
\bigskip
\end{proposition}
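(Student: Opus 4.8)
The plan is to reduce each of the four claims to a property of the left-continuous spectral family $\{E^X_\la\}_{\la\in\R}$ of $X$, using only that $\de$ is injective, monotone, join-preserving, and preserves $\top$ and $\bot$ (Proposition~\ref{th:delta}), the lemmas $\de(a)^*=\de(a^{\bot})$ and $\de(a)^{**}=\de(a)$ proved above, and the density of $\Q$ in $\R$.

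First I would dispatch (iv): since $H(X)(\hat r)=\de(E^X_r)$ lies in the image of $\de$, the lemma $\de(a)^{**}=\de(a)$ gives $H(X)(\hat r)^{**}=\de(E^X_r)^{**}=\de(E^X_r)=H(X)(\hat r)$ immediately. For (i), pulling the join through $\de$ yields $\bjoin_{r\in\Q}H(X)(\hat r)=\de\bigl(\bjoin_{r\in\Q}E^X_r\bigr)$, so it suffices to check $\bjoin_{r\in\Q}E^X_r=\top$: as $\{E^X_\la\}$ is monotone and $\Q$ is dense in $\R$, every $\la\in\R$ has a rational $r>\la$ with $E^X_r\ge E^X_\la$, so $\bjoin_{r\in\Q}E^X_r\ge\bjoin_{\la\in\R}E^X_\la=\top$ by property~(i) of the spectral family, and then $\de(\top)=\top$. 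For (ii), the lemma $\de(a)^*=\de(a^{\bot})$ together with join-preservation gives $\bjoin_{r\in\Q}H(X)(\hat r)^*=\de\bigl(\bjoin_{r\in\Q}(E^X_r)^{\bot}\bigr)=\de\bigl((\bmeet_{r\in\Q}E^X_r)^{\bot}\bigr)$, and the dual density argument gives $\bmeet_{r\in\Q}E^X_r\le\bmeet_{\la\in\R}E^X_\la=\bot$ by property~(ii), so the argument of $\de$ is $\bot^{\bot}=\top$. Finally, for (iii), join-preservation gives $\bjoin_{s\in\Q:s<r}H(X)(\hat s)=\de\bigl(\bjoin_{s\in\Q:s<r}E^X_s\bigr)$; since every real $\mu<r$ admits a rational $s\in(\mu,r)$ with $E^X_\mu\le E^X_s$, the rational join over $s<r$ equals $\bjoin_{\mu\in\R:\mu<r}E^X_\mu$, which is $E^X_r$ by the left-continuity property~(iii), and applying $\de$ returns $H(X)(\hat r)$.

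I do not expect a genuine obstacle here: every step is either an appeal to an already-established structural property of $\de$ or an elementary cofinality argument using density of $\Q$ in $\R$ and monotonicity of $\{E^X_\la\}$. The only point worth a moment's care is that join-preservation of $\de$ must be used for \emph{arbitrary} (not merely finite) families, which is legitimate because $L=\PH$ is a complete OML and $\Subcl(\Sig)$ is complete, so Proposition~\ref{th:delta}(iv) applies as stated. It is also worth noting that (iii) is an \emph{honest} equality, unlike the $(\cdot)^{**}$-decorated condition in Proposition~\ref{Prop_uLikeSpecFam}(iii); this stronger form is exactly what the $*$-regularity of the image of $\de$ (the lemma $\de(a)^{**}=\de(a)$) buys us, and it makes $H(X)$ a particularly well-behaved representative of a Dedekind real in $V^{(\Subcl(\Sigma))}$.
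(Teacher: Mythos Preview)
Your proof is correct and follows essentially the same approach as the paper: reduce each claim to a property of the spectral family via join-preservation of $\de$, the identity $\de(a)^*=\de(a^{\bot})$, and the $*$-regularity lemma $\de(a)^{**}=\de(a)$. The paper's proof is terser and does not spell out the density-of-$\Q$ cofinality arguments you give, but those are exactly the details needed to pass from the $\R$-indexed spectral family properties to the $\Q$-indexed statements, so your version simply makes explicit what the paper leaves implicit.
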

\begin{proof}
Properties (i) and (iii) follow immediately from the defining properties of left-continuous spectral families and the fact that $\delta$ preserves joins. Property (ii) follows from join preservation and the fact that $\delta(a^{\bot}) = \delta(a)^{*}$. To prove (iv), note that since the image of $\delta$ is *-regular (Lemma 15), we have
\bc
$H(X)(\hat{r})^{**} = \delta(E^{X}_{r})^{**} = \delta(E^{X}_{r}) = H(X)(\hat{r})$.
\ec
\end{proof}

\begin{definition}
For any $u \in V^{(\Subcl(\Sigma))}$, define $u^*$ by dom$(u^*)$ = dom$(u)$ and $u^*(v) = u(v)^*$, for all $v \in $dom$(u)$. We call u$^*$ the \emph{complement of $u$}, and $u^{**}$ the \emph{regularisation of $u$}. A real number $u \in \mathbb{R}^{(\Subcl(\Sigma))}$ is called \emph{regular} if $u^{**} = u$. Denote by $\RSR$ the set of regular elements of $\mathbb{R}^{(\Subcl(\Sigma))}$.
\end{definition}

\begin{proposition}\label{th:daseinisation-H}
The mapping $H:X \mapsto H(X)$ maps $SA(\cH)$ to $\RSR$ , \ie $H[SA(\cH)] \subseteq \RSR$. 
\end{proposition}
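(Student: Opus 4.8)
The plan is to show that for any $X\in SA(\cH)$, the element $H(X)$ lies in $\RSR$, \ie that it is both a regular real number and satisfies $H(X)^{**}=H(X)$. The second part is immediate: by Proposition \ref{Prop_XLikeSpecFamPlusRegular}(iv), $H(X)(\ck r)=H(X)(\ck r)^{**}$ for all $r\in\Q$, and since $u^{**}$ is defined componentwise ($u^{**}(v)=u(v)^{**}$), this says precisely $H(X)^{**}=H(X)$, so $H(X)$ is regular \emph{as soon as we know it is a real number}. Hence the real content is to verify $H(X)\in\RS$, \ie $\valj{\mathbf{R}(H(X))}=\top$.

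To do this, I would invoke Proposition \ref{Prop_uLikeSpecFam}, which characterises membership in $\RS$ (for elements $u$ with $\dom(u)=\ck{\mathbb{Q}}$) by the three conditions (i) $\bjoin_{r\in\Q}u(\ck r)=\top$, (ii) $\bjoin_{r\in\Q}u(\ck r)^{*}=\top$, and (iii) $\bigl(\bjoin_{s<r}u(\ck s)\bigr)^{**}=u(\ck r)^{**}$ for all $r\in\Q$. Now compare with Proposition \ref{Prop_XLikeSpecFamPlusRegular}: conditions (i) and (ii) there are literally conditions (i) and (ii) of Proposition \ref{Prop_uLikeSpecFam}. For condition (iii), Proposition \ref{Prop_XLikeSpecFamPlusRegular}(iii) gives the \emph{stronger} equality $\bjoin_{s<r}H(X)(\ck s)=H(X)(\ck r)$ without any $*^{**}$; applying $(-)^{**}$ to both sides yields $\bigl(\bjoin_{s<r}H(X)(\ck s)\bigr)^{**}=H(X)(\ck r)^{**}$, which is exactly condition (iii) of Proposition \ref{Prop_uLikeSpecFam}. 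Since $\dom(H(X))=\ck{\mathbb{Q}}$ by definition, all hypotheses of Proposition \ref{Prop_uLikeSpecFam} are met, so $H(X)\in\RS$.

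Combining the two observations: $H(X)\in\RS$ and $H(X)^{**}=H(X)$, so $H(X)\in\RSR$ by the definition of the regular reals, and since $X$ was arbitrary, $H[SA(\cH)]\subseteq\RSR$. There is essentially no obstacle here — the proposition is a bookkeeping corollary of the two preceding propositions, the only thing to be careful about being that the componentwise definition of $u^{**}$ correctly matches up the pointwise regularity statement in Proposition \ref{Prop_XLikeSpecFamPlusRegular}(iv) with the global statement $H(X)^{**}=H(X)$, and that applying $(-)^{**}$ to the equality in Proposition \ref{Prop_XLikeSpecFamPlusRegular}(iii) is legitimate (it is, since $(-)^{**}$ is a well-defined operation on $\Subcl(\Sigma)$). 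So the "hard part" is merely to notice that the work has already been done.
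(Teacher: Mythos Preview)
Your proposal is correct and follows exactly the paper's approach: it invokes Proposition~\ref{Prop_uLikeSpecFam} together with Proposition~\ref{Prop_XLikeSpecFamPlusRegular}(i)--(iii) to get $H(X)\in\RS$, and then uses Proposition~\ref{Prop_XLikeSpecFamPlusRegular}(iv) for regularity. The only difference is that you spell out explicitly why (iii) of Proposition~\ref{Prop_XLikeSpecFamPlusRegular} implies (iii) of Proposition~\ref{Prop_uLikeSpecFam} (by applying $(-)^{**}$), a step the paper leaves implicit.
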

\begin{proof}
From proposition \ref{Prop_uLikeSpecFam} and proposition \ref{Prop_XLikeSpecFamPlusRegular} (i)-(iii), we have that $H[SA(\cH)] \subseteq \mathbb{R}^{(\Subcl(\Sigma))}$. Proposition \ref{Prop_XLikeSpecFamPlusRegular} (iv) shows that $H[SA(\cH)] \subseteq \RSR$.
\end{proof}

\begin{proposition}	\label{Prop_epsPlaysNicelyWithRegularJoins}
For any family $\{S_{j}\} \subseteq \Subcl(\Sigma)$ satisfying $S_{j}^{**} = S_{j}$ for all $j$, \
\bc
$\varepsilon(\bigvee_{j} S_{j}) = \bigvee_{j} \varepsilon(S_{j})$.
\ec
\end{proposition}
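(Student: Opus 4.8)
The plan is to use the regularity hypothesis to rewrite each $S_j$ as a daseinisation, and then to invoke only two facts established above: that $\de$ preserves arbitrary joins (Proposition \ref{th:delta}(iv)), and that $\eps\circ\de=\id$ (the corollary recorded above).

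First I would recall the lemma asserting that $S^{**}=\de(\eps(S))$ for every $S\in\Subcl(\Sig)$. Thus the hypothesis $S_j^{**}=S_j$ says precisely that each $S_j$ lies in the image of $\de$; setting $a_j:=\eps(S_j)$, we have $S_j=\de(a_j)$ for every $j$. Since $\de$ preserves all joins, it follows that
\[
\bjoin_j S_j=\bjoin_j\de(a_j)=\de\Bigl(\bjoin_j a_j\Bigr).
\]
Applying $\eps$ to both sides and using $\eps\circ\de=\id$ yields $\eps\bigl(\bjoin_j S_j\bigr)=\bjoin_j a_j$. On the other hand, the same identity gives $\bjoin_j\eps(S_j)=\bjoin_j\eps(\de(a_j))=\bjoin_j a_j$. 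Comparing the two expressions gives $\eps\bigl(\bjoin_j S_j\bigr)=\bjoin_j\eps(S_j)$, as required.

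There is no genuine obstacle here. The one point worth flagging is that the identification $S_j=\de(\eps(S_j))$ really does depend on regularity: in general one only has $\de\circ\eps\le\id$, and correspondingly $\eps$ need not preserve joins, satisfying merely $\eps\bigl(\bjoin_j S_j\bigr)\ge\bjoin_j\eps(S_j)$ (item (v) of the proposition on $\eps$). So the content of the statement is that the regular clopen subobjects form precisely a subcollection on which $\eps$ restricts to a join-preserving map, dual to the fact that $\de$ is the join-preserving half of the adjunction.
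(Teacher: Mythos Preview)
Your proof is correct and arguably cleaner than the paper's, but it proceeds by a genuinely different route. You exploit the adjunction directly: regularity says $S_j=\de(\eps(S_j))$, so each $S_j$ lies in the image of $\de$; since $\de$ preserves all joins and $\eps\circ\de=\id$, the conclusion is immediate. The paper instead works through the paraconsistent negation, using the De~Morgan law $\bjoin_j S_j^{**}=(\bmeet_j S_j^{*})^{*}$, then the lemma $\eps(S^{*})=\eps(S)^{\bot}$, and finally the fact that $\eps$ preserves all \emph{meets}. Your argument is more conceptual---it makes transparent that the statement is nothing more than the observation that regular subobjects are exactly the image of the join-preserving left adjoint---whereas the paper's calculation stays closer to the $*$-calculus developed in Section~5 and shows that machinery at work. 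Both are short and valid; yours perhaps better isolates why regularity is the right hypothesis.
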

\begin{proof}
Let $\{S_{j}\} \subseteq \Subcl(\Sigma)$ satisfy the condition. Then 

\begin{align*}
&\varepsilon(\bigvee_{j} S_{j}) =\varepsilon(\bigvee_{j} S_{j}^{**}) = \varepsilon((\bigwedge_{j} S_{j}^{*})^{*}) = \varepsilon(\bigwedge_{j} S_{j}^{*})^{\bot}\\
=\; &(\bigwedge_{j} \varepsilon(S_{j}^{*}))^{\bot} = (\bigwedge_{j} \varepsilon(S_{j})^{\bot})^{\bot} = (\bigvee_{j} \varepsilon(S_{j}))^{\bot \bot} = \bigvee_{j} \varepsilon(S_{j}).
\end{align*}

\end{proof}

\begin{definition}
Given $u \in \RSR$, 
define the $P(\cH)$-valued function $F^{u}$ on {$\mathbb{Q}$} by
\[
F^{u}(r) = \varepsilon(u(\hat{r})).
\]
{for all $r\in\Q$.}
\end{definition}

\begin{proposition}	\label{Prop_PropertiesOfF}
For any $u \in \RSR$
the following properties hold: 

\bigskip
{\centering
\begin{varwidth}{\textwidth}
\begin{enumerate}[{\rm (i)}]
\item
$\bigvee \limits_{r \in \mathbb{Q}} F^{u}(r) = \top.$
\item
$ \bigwedge \limits_{r \in \mathbb{Q}} F^{u}(r) = \bot.$
\item
$ \bigvee \limits_{s \in \mathbb{Q}: s < r} F^{u}(s) = F^{u}(r)$
\quad for all $r \in \mathbb{Q}$.
\end{enumerate}
\end{varwidth}
\par}
\end{proposition}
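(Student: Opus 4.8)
The plan is to reduce all three claims to facts already in hand: the characterisation of Dedekind reals in Proposition \ref{Prop_uLikeSpecFam}, the basic properties of $\eps$ (it preserves the top element and all meets, $\eps\circ\de=\id$, and the lemmas $\eps(S^{*})=\eps(S)^{\bot}$ and $S^{**}=\de(\eps(S))$), and—crucially—Proposition \ref{Prop_epsPlaysNicelyWithRegularJoins}, which says $\eps$ commutes with joins of $*$-regular families. The key preliminary observation is that since $u\in\RSR$ is regular we have $u(\hat{r})=u(\hat{r})^{**}$ for every $r\in\Q$, and independently of that $u(\hat{r})^{*}=u(\hat{r})^{***}$ is $*$-regular for every $r$. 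Hence every join of values of $u$ (or of their complements) that appears below is a join of a $*$-regular family, so Proposition \ref{Prop_epsPlaysNicelyWithRegularJoins} is applicable throughout.

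For (i): by Proposition \ref{Prop_epsPlaysNicelyWithRegularJoins}, $\bigvee_{r\in\Q}F^{u}(r)=\bigvee_{r\in\Q}\eps(u(\hat{r}))=\eps\!\left(\bigvee_{r\in\Q}u(\hat{r})\right)$, which equals $\eps(\top)=\top$ by Proposition \ref{Prop_uLikeSpecFam}(i) and the fact that $\eps$ preserves the top element. For (iii): again by Proposition \ref{Prop_epsPlaysNicelyWithRegularJoins}, $\bigvee_{s\in\Q:s<r}F^{u}(s)=\eps\!\left(\bigvee_{s<r}u(\hat{s})\right)$; now $\eps(S)=\eps(\de(\eps(S)))=\eps(S^{**})$ for any $S$, and Proposition \ref{Prop_uLikeSpecFam}(iii) gives $\left(\bigvee_{s<r}u(\hat{s})\right)^{**}=u(\hat{r})^{**}=u(\hat{r})$ (the last step by regularity of $u$), so $\eps\!\left(\bigvee_{s<r}u(\hat{s})\right)=\eps(u(\hat{r}))=F^{u}(r)$, as required.

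For (ii): apply Proposition \ref{Prop_epsPlaysNicelyWithRegularJoins} to the $*$-regular family $\{u(\hat{r})^{*}\}_{r\in\Q}$ together with the lemma $\eps(S^{*})=\eps(S)^{\bot}$ to obtain $\bigvee_{r\in\Q}F^{u}(r)^{\bot}=\bigvee_{r\in\Q}\eps(u(\hat{r})^{*})=\eps\!\left(\bigvee_{r\in\Q}u(\hat{r})^{*}\right)=\eps(\top)=\top$, using Proposition \ref{Prop_uLikeSpecFam}(ii) for the penultimate equality. Since orthocomplementation on $\PH$ is an order-reversing involution, it interchanges arbitrary meets and joins, so $\bigwedge_{r\in\Q}F^{u}(r)=\left(\bigvee_{r\in\Q}F^{u}(r)^{\bot}\right)^{\bot}=\top^{\bot}=\bot$. (One could alternatively use that $\eps$ preserves all meets and compute $\bigwedge_{r}F^{u}(r)=\eps\!\left(\bigwedge_{r}u(\hat{r})\right)$, but the orthocomplement route is cleaner because Proposition \ref{Prop_uLikeSpecFam} records the ``no lower endpoint'' condition in the form $\bigvee_{r}u(\hat{r})^{*}=\top$ rather than $\bigwedge_{r}u(\hat{r})=\bot$.)

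The only genuine subtlety—and thus the main obstacle—is precisely clause (ii): the lower-unboundedness of the cut is stored in the paraconsistency-adapted form $\bigvee_{r}u(\hat{r})^{*}=\top$, so one must pass through the orthocomplement and exploit that complements $u(\hat{r})^{*}$ are automatically $*$-regular in order to legitimately pull $\eps$ through the join via Proposition \ref{Prop_epsPlaysNicelyWithRegularJoins}. Clauses (i) and (iii) are then routine bookkeeping with the established properties of $\de$ and $\eps$.
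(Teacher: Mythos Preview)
Your proposal is correct and follows essentially the same route as the paper: all three clauses are reduced to Proposition \ref{Prop_uLikeSpecFam} via Proposition \ref{Prop_epsPlaysNicelyWithRegularJoins}, with clause (ii) handled by passing to orthocomplements and using that the $u(\hat r)^{*}$ form a $*$-regular family. The only difference is cosmetic: in (iii) you spell out the intermediate step $\eps(S)=\eps(S^{**})$ to bridge from $(\bigvee_{s<r}u(\hat s))^{**}=u(\hat r)^{**}$ to $\eps(\bigvee_{s<r}u(\hat s))=\eps(u(\hat r))$, whereas the paper takes this as understood.
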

\begin{proof}
Let $u \in \RSR$ and $r \in \mathbb{Q}$. Since $u(\hat{r})$ is regular, it follows (from proposition \ref{Prop_epsPlaysNicelyWithRegularJoins}) that
\begin{align*}
\text{(i) } &\bigvee \limits_{r \in \mathbb{Q}} F^{u}(r) = \bigvee \limits_{r \in \mathbb{Q}} \varepsilon(u(\hat{r})) = \varepsilon(\bigvee \limits_{r \in \mathbb{Q}} u(\hat{r})) = \varepsilon(\top) = \top,\\
\text{(ii) } &\bigwedge \limits_{r \in \mathbb{Q}} F^{u}(r) = \bigwedge \limits_{r \in \mathbb{Q}} \varepsilon(u(\hat{r})) = (\bigvee \limits_{r \in \mathbb{Q}} \varepsilon(u(\hat{r}))^{\bot})^{\bot} =(\bigvee \limits_{r \in \mathbb{Q}} \varepsilon(u(\hat{r})^{*}))^{\bot}\\
=\; &\varepsilon(\bigvee \limits_{r \in \mathbb{Q}} u(\hat{r})^{*})^{\bot} = \eps(\top)^{\bot} = \bot,\\
\text{(iii) } &\bigvee \limits_{s \in \mathbb{Q}: s < r} F^{u}(s) = \bigvee \limits_{s \in \mathbb{Q}: s < r} \eps(u(\hat{s})) =  \eps(\bigvee \limits_{s \in \mathbb{Q}: s < r} u(\hat{s})) = \eps(u(\hat{r})) = F^{u}(r).
\end{align*}
\end{proof}

\begin{proposition}	\label{Thm_uGivesSpecFam}
Each $u \in \RSR$ uniquely determines a corresponding self-adjoint operator $G(u) \in SA(\cH)$ defined by the left-continuous spectral family
{
\[E^{G(u)}_{\la} = \Sup_{r\in\Q:r<\la}F^{u}(r)\] 
for each $\la \in {\mathbb{R}}$.
}
\end{proposition}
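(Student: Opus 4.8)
The plan is to verify that the family $\{E^{G(u)}_\la\}_{\la\in\R}$ prescribed in the statement is a genuine left-continuous spectral family of projections on $\cH$, and then to appeal to the spectral theorem in the projection-valued-measure form recalled above to conclude that there is a unique self-adjoint operator $G(u)\in SA(\cH)$ having exactly this spectral family.

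First I would note that for $u\in\RSR$ the function $F^u\colon\Q\to\PH$, $F^u(r)=\eps(u(\ck{r}))$, is monotone nondecreasing: if $s<r$ then $F^u(s)\le\bjoin_{s'\in\Q:s'<r}F^u(s')=F^u(r)$ by Proposition \ref{Prop_PropertiesOfF}(iii). Combined with parts (i) and (ii) of that proposition, this is all the input needed. Setting $E^{G(u)}_\la:=\Sup_{r\in\Q:r<\la}F^u(r)$, one checks the three defining properties. Property (i):
\[
\bjoin_{\la\in\R}E^{G(u)}_\la=\bjoin_{\la\in\R}\Sup_{r\in\Q:r<\la}F^u(r)=\bjoin_{r\in\Q}F^u(r)=\top,
\]
by Proposition \ref{Prop_PropertiesOfF}(i). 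Property (ii): by monotonicity of $F^u$ one has $E^{G(u)}_\la\le F^u(r)$ whenever $r\in\Q$ and $r>\la$, hence
\[
\bmeet_{\la\in\R}E^{G(u)}_\la\le\bmeet_{r\in\Q}F^u(r)=\bot
\]
by Proposition \ref{Prop_PropertiesOfF}(ii). Property (iii), left-continuity:
\[
\bjoin_{\mu<\la}E^{G(u)}_\mu=\bjoin_{\mu<\la}\Sup_{r\in\Q:r<\mu}F^u(r)=\Sup_{r\in\Q:r<\la}F^u(r)=E^{G(u)}_\la,
\]
since for every rational $r<\la$ there is a real $\mu$ with $r<\mu<\la$. (Each $E^{G(u)}_\la$ lies in $\PH$ because $\PH$ is complete.)

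Having established (i)--(iii), the spectral theorem extends $\{E^{G(u)}_\la\}_{\la\in\R}$ uniquely to a countably additive projection-valued measure $\tE\colon\cB(\R)\to\PH$ with $\tE((-\infty,\la))=E^{G(u)}_\la$, and this measure is the spectral measure of a unique $G(u)\in SA(\cH)$. That $G(u)$ is uniquely determined by $u$ follows because a self-adjoint operator is determined by its left-continuous spectral family, which is in turn determined by its restriction to $\Q$ --- here by $F^u$, hence by $u$.

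The only slightly delicate point is property (ii): the bound $E^{G(u)}_\la\le F^u(r)$ for $r>\la$ relies on the monotonicity of $F^u$ derived from Proposition \ref{Prop_PropertiesOfF}(iii), after which it reduces immediately to Proposition \ref{Prop_PropertiesOfF}(ii). The remaining verifications are routine manipulations of joins in the complete lattice $\PH$, and the passage from the spectral family to the operator is exactly the spectral theorem recalled earlier.
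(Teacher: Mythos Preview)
Your proof is correct and follows essentially the same approach as the paper: both derive the spectral-family properties from Proposition \ref{Prop_PropertiesOfF} and then invoke the spectral theorem. You have simply spelled out in detail the verifications (i)--(iii) that the paper leaves implicit.
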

\begin{proof}
That $\{E^{G(u)}_{\la}|\la \in \mathbb{{R}}\}$ is a left-continuous spectral family follows from proposition \ref{Prop_PropertiesOfF}, and the spectral theorem entails that this family uniquely defines a corresponding self-adjoint operator. 
\end{proof}

{
\bProposition\label{th:P25-6}
The following relations hold.
\benum
\item $G\circ H(A)=A$\quad for all $A\in SA(\cH)$.
\item $G[\RSR]=SA(\cH)$.
\item $H\circ  G(u)=u$\quad for all $u\in\RSR$.
\item $H[SA(\cH)]=\RSR$.
\eenum
\eProposition
\bProof
Let $A\in SA(\cH)$ and $r\in\Q$.  Then we have
\[
E^{G\circ H(A)}_r=F^{H(A)}(r)=\ep(H(A)(\hat{r}))=
\ep\circ \de(E^{A}_r)=E^{A}_r,
\]
and hence assertion (ii) holds. From proposition \ref{Thm_uGivesSpecFam}
 we have $G[\RSR]\subseteq SA(\cH)$
and (i) concludes assertion (ii).
Let $u\in\RSR$ and $r\in\Q$.  Then we have
\[
H\circ  G(u)(\ck{r})=\de(E^{G(u)}_r)=\de\circ \ep(u(\ck{r}))=u(\ck{r})^{**}=u(\ck{r}),
\]
and hence assertion (iii) holds. 
From proposition \ref{th:daseinisation-H},
in order to show (iv)  it suffices to show the relation 
$\RSR\subseteq H[SA(\cH)]$.
Let $u\in\RSR$ and $A=G(u)\in SA(\cH)$. Then we have
\[
H(A)(\ck{r})=\de(E^{G(u)}_r)=\de\circ \ep(u(\ck{r}))=u(\ck{r})^{**}=u(\ck{r})
\]
for all $r\in\Q$.  Therefore, the relation $H[SA(\cH)]=\RSR$ holds.
\eProof
}

Propositon \ref{Thm_uGivesSpecFam} shows {that mutually inverse mappings
$G$ and $H$ establish a one-to-one correspondence between $SA(\cH)$ and $\RSR$,
so} that we can faithfully represent all regular reals in $V^{(\Subcl(\Sigma))}$ as self-adjoint operators. 
{The following theorem summarizes the above results.}

\begin{theorem}	\label{Thm_SAOPsBijectiveRegReals}
$SA(\cH)$ and $\RSR$ are in bijective correspondence
{under mutually inverse bijections $G$ and $H$.} 
\end{theorem}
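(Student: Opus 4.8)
The plan is to read the statement off from Proposition~\ref{th:P25-6}, which has been assembled precisely for this purpose: parts (i)--(iv) of that proposition already say that $G$ and $H$ are mutually inverse bijections between $SA(\cH)$ and $\RSR$, so essentially nothing remains but to phrase this as a single statement.

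In more detail, first I would recall that $H: SA(\cH)\to V^{(\Subcl(\Sigma))}$, $H(X)(\hat r)=\de(E^X_r)$, is well defined and, by Proposition~\ref{th:daseinisation-H}, takes values in $\RSR$; dually, $G:\RSR\to SA(\cH)$ is well defined by Proposition~\ref{Thm_uGivesSpecFam}, sending $u$ to the self-adjoint operator whose left-continuous spectral family is $E^{G(u)}_\la=\Sup_{r\in\Q:\,r<\la}\ep(u(\hat r))$. Next I would record the two composition identities. For $A\in SA(\cH)$ and $r\in\Q$ one has $E^{G\circ H(A)}_r=F^{H(A)}(r)=\ep(H(A)(\hat r))=\ep\circ\de(E^A_r)=E^A_r$, using that $\ep\circ\de=\id$ on $\PH$ and that a left-continuous spectral family is determined by its values on $\Q$; hence $G\circ H=\id_{SA(\cH)}$. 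For $u\in\RSR$ and $r\in\Q$ one has $H\circ G(u)(\hat r)=\de(E^{G(u)}_r)=\de\circ\ep(u(\hat r))=u(\hat r)^{**}=u(\hat r)$, the last equality because $u$ is regular; hence $H\circ G=\id_{\RSR}$. These are exactly Proposition~\ref{th:P25-6}(i) and (iii).

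Finally, the two identities $G\circ H=\id$ and $H\circ G=\id$ immediately give that $H$ is injective (it has a left inverse) and surjective onto $\RSR$ (every $u\in\RSR$ equals $H(G(u))$), so $H$ is a bijection with inverse $G$, which is then a bijection as well; this is the content of the theorem, and also recovers Proposition~\ref{th:P25-6}(ii) and (iv).

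There is no genuine obstacle here: all the analytic and algebraic content --- the bijection between left-continuous spectral families and self-adjoint operators, the adjunction $\de\dashv\ep$ with $\ep\circ\de=\id$, the behaviour of $\de$ and $\ep$ on the relevant joins, and the identification of $\RSR$ with the image of $H$ --- has already been proved in Propositions~\ref{Prop_uLikeSpecFam} through \ref{th:P25-6}. The only point requiring care is that the statement concerns $\RSR$ and not the full $\mathbb{R}^{(\Subcl(\Sigma))}$: it is precisely regularity of $u$ that makes $\de\circ\ep(u(\hat r))=u(\hat r)$ hold, and hence that upgrades $H\circ G$ from an idempotent to the identity, so I would keep that hypothesis in play throughout.
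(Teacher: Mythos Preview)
Your proposal is correct and matches the paper's approach exactly: the paper presents this theorem as a summary of Proposition~\ref{th:P25-6} with no further proof, and you likewise read off the bijection from parts (i)--(iv) of that proposition (indeed, you recapitulate those computations in more detail than the paper itself does). The emphasis you place on regularity being essential for $H\circ G=\id$ is well taken and aligns with how the paper sets things up.
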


Theorem \ref{Thm_SAOPsBijectiveRegReals} provides a precise clarification of the relationship between real numbers in $V^{(\Subcl(\Sigma))}$ and self-adjoint operators on the relevant Hilbert space. Specifically, it shows that the `regular reals' in $V^{(\Subcl(\Sigma))}$ can always be used to represent the set $SA(\cH)$.

{
Recall that the relation
\[
v(\ck{r})=E^{X}_r
\]
for all $r\in\Q$ sets up a  bijective correspondence between $X\in SH(\cH)$ and $v\in\R^{(\PH)}$
(theorem  \ref{th:q-reals});
in this case, we write $X=\Psi(v)$ and $v=\Phi(X)$.
We then have $\Psi\circ\Phi=\id$ on $SH(\cH)$ and $\Phi\circ\Psi=\id$ on $\RPH$.
Define $\tilde{H}:\RPH\to\RSR$ and $\tilde{G}:\RSR\to\RPH$ 
by $\tilde{H}(v)=H(\Psi(v))$ for all $v\in\R^{(\PH)}$
and  $\tilde{G}(u)=\Phi(G(u))$ for all $u\in \mathbb{R}^{(\Subcl(\Sigma))}_{q}$.
  Then, we obtain
\deqs{
\tilde{G}(u)(\ck{r})&=\Phi(G(u))(\ck{r})=E^{G(u)}_r=\ep(u(\ck{r}))=\alpha(u)(\ck{r}),\\
\tilde{H}(v)(\ck{r})&=H(\Psi(v))(\ck{r})=\de(E^{\Psi(v)}_r)=\de(v(\ck{r}))=\omega(u)(\ck{r})
}
for all $r\in\Q$.  We also have
\deqs{
\alpha\circ\omega(v)&=\tilde{G}\circ\tilde{H}(v)
=\Phi\circ G\circ H\circ \Psi(v)
=v,\\
\omega\circ\alpha(u)
&=\tilde{H}\circ\tilde{G}(u)=H\circ\Psi\circ\Phi\circ G(u)=u
}
for all $u\in \RPH$ and $v\in\RSR$.

Thus, we obtain

\begin{theorem}	
The relations $v=\alpha(u)$ and $u=\omega(v)$ for  $u\in\RSR$ and $v\in\RPH$
set up a  bijective correspondence between $\RSR$ and $\RPH$.
\end{theorem}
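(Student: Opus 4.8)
The plan is to reduce the statement to bijections already in hand: $\Psi\colon\RPH\to SA(\cH)$ and $\Phi\colon SA(\cH)\to\RPH$ from Theorem~\ref{th:q-reals}, and $G\colon\RSR\to SA(\cH)$ and $H\colon SA(\cH)\to\RSR$ from Theorem~\ref{Thm_SAOPsBijectiveRegReals} together with Proposition~\ref{th:P25-6}. Concretely, I would show that the restriction of $\alpha$ to $\RPH$ is exactly $H\circ\Psi$ and the restriction of $\omega$ to $\RSR$ is exactly $\Phi\circ G$. Since $G,H$ and $\Psi,\Phi$ are already known to be mutually inverse, the theorem then follows by a purely formal composition.

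First I would record the elementary fact that $\alpha$ and $\omega$ fix canonical names, i.e.\ $\alpha(\ck x)=\ck x$ in $\VS$ and $\omega(\ck x)=\ck x$ in $\VPH$ for all $x\in V$; this is an immediate induction on rank using $\de(\top)=\top$ and $\ep(\top)=\top$. In particular, if $v\in\RPH$ then $\dom(\alpha(v))=\{\alpha(\ck r)\mid r\in\Q\}=\dom(\ck\Q)$ and $\alpha(v)(\ck r)=\de(v(\ck r))$ for every $r\in\Q$, and symmetrically if $u\in\RSR$ then $\dom(\omega(u))=\dom(\ck\Q)$ and $\omega(u)(\ck r)=\ep(u(\ck r))$ for every $r\in\Q$.

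The core step is a pointwise identification in each direction. For $v\in\RPH$, write $X=\Psi(v)$, so that $v(\ck r)=E^X_r$ for all $r\in\Q$ by Theorem~\ref{th:q-reals}; then $\alpha(v)(\ck r)=\de(v(\ck r))=\de(E^X_r)=H(X)(\ck r)$ by the definition of $H$. Since $\alpha(v)$ and $H(\Psi(v))$ have the common domain $\dom(\ck\Q)$ and agree at every $\ck r$, they coincide, so $\alpha(v)=H(\Psi(v))$; in particular $\alpha(v)\in\RSR$ by Proposition~\ref{th:daseinisation-H}, so $\alpha$ genuinely restricts to a map $\RPH\to\RSR$. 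For $u\in\RSR$ I would use $\omega(u)(\ck r)=\ep(u(\ck r))=F^u(r)$, and note that Proposition~\ref{Prop_PropertiesOfF}(iii) together with the defining formula $E^{G(u)}_\la=\Sup_{r\in\Q:r<\la}F^u(r)$ from Proposition~\ref{Thm_uGivesSpecFam} yields $F^u(r)=E^{G(u)}_r$ for $r\in\Q$; hence $\omega(u)(\ck r)=E^{G(u)}_r=\Phi(G(u))(\ck r)$ for all $r$, and again by equality of domains $\omega(u)=\Phi(G(u))\in\RPH$, so $\omega$ restricts to a map $\RSR\to\RPH$.

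Finally I would compose: on $\RPH$, $\omega\circ\alpha=(\Phi\circ G)\circ(H\circ\Psi)=\Phi\circ(G\circ H)\circ\Psi=\Phi\circ\Psi=\id$, and on $\RSR$, $\alpha\circ\omega=(H\circ\Psi)\circ(\Phi\circ G)=H\circ(\Psi\circ\Phi)\circ G=H\circ G=\id$, using that $G,H$ and $\Psi,\Phi$ are mutually inverse. This shows the restrictions of $\alpha$ and $\omega$ are mutually inverse bijections between $\RPH$ and $\RSR$, which is the claim. I do not expect a genuine obstacle: all of the analytic and order-theoretic work is already contained in Theorems~\ref{th:q-reals} and~\ref{Thm_SAOPsBijectiveRegReals}. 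The only point requiring care is bookkeeping — checking that $\alpha$ and $\omega$, which are a priori defined on the whole of $\VPH$ and $\VS$, really do carry $\RPH$ into $\RSR$ and back, and that their values on canonical names behave as needed — and this is exactly what the identification with $H\circ\Psi$ and $\Phi\circ G$ supplies.
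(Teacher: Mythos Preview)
Your proof is correct and follows essentially the same route as the paper: identify the restrictions of $\alpha$ and $\omega$ with the compositions $H\circ\Psi$ and $\Phi\circ G$ (which the paper names $\tilde H$ and $\tilde G$), check the identification pointwise at each $\ck r$, and then compose using the known mutual inverses from Theorems~\ref{th:q-reals} and~\ref{Thm_SAOPsBijectiveRegReals}. Your version is in fact tidier: you explicitly verify that $\alpha$ and $\omega$ fix canonical names so that domains match, and you have the directions of $\alpha$ and $\omega$ the right way round, whereas the paper's displayed computations and the theorem statement itself systematically swap the two (writing e.g.\ $v=\alpha(u)$ with $u\in\RSR$, which does not type-check against the definition $\alpha:V^{(\PH)}\to V^{(\Subcl(\Sig))}$).
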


}
\section{Conclusion}	\label{Sec_Conclusion}

In conclusion, we take the main results of the paper to be the following.

\begin{itemize}
\item The translation of the orthomodular structure of traditional quantum logic into a new form of distributive and paraconsistent logic that arises naturally in the context of TQT. 
\item The introduction of the paraconsistent negation $*$ into the complete bi-Heyting algebra $\Sub$
to be properly paraconsistent in the sense that $S\And S^{*}=\bot$ only if $S\in \{\bot,\top\}$.
\item The introduction of the commutativity relation into the complete bi-Heyting algebra $\Sub$
expressed by the lattice operation and the paraconsistent negation $*$.
\item The construction of the paraconsistent set theoretic structure $V^{(\Subcl(\Sigma))}$ and the derivation of a $\De_{0}$-theorem transfer scheme that allows us to model major fragments of classical set theory in $V^{(\Subcl(\Sigma))}$. 
\item The translation of the lattice-theoretic operations $\de, \eps$ into higher level set-theoretic maps $\alpha, \omega$, which can subsequently be used to translate ideas and results between TQT and QST.
\item The precise characterisation of the relationship between real numbers in $V^{(\Subcl(\Sigma))}$ and self-adjoint operators on the initial Hilbert space.
\end{itemize}

It goes without saying that this paper only represents a first step in the broader project of unifying TQT and QST into a single overarching formal framework. 

\vspace{1cm}

\section*{References}

\vspace{-1cm}


\begin{thebibliography}{9}

\bibitem{Bel11} Bell, J. L., \emph{Set Theory: Boolean-valued Models and Independence
Proofs, Third Edition} (Oxford UP, Oxford, 2005).

\bibitem{Bel08} Bell, J. L., \textit{Toposes and Local Set Theories, An Introduction} 
(Dover, New York, 2008).

\bibitem{BirvNe36} Birkhoff, G. and von Neumann, J., The Logic of Quantum Mechanics, \emph{Annals of Mathematics}, Second Series, 37(4): 823--843 (1936).

\bibitem{Bra89} Brady, R.T., The Non-Triviality of Dialectical Set Theory, in 
Priest, G., Routley, R. and Norman, J. (eds.), \emph{Paraconsistent Logic: Essays on the Inconsistent}: 437--470 (Philosophia, M\"{u}nchen, 1989).

\bibitem{Can13} Cannon, S., \emph{The Spectral Presheaf of an Orthomodular Lattice}, MSc thesis, University of Oxford (2013).

\bibitem{CanDoe16} Cannon, S. and D{\"o}ring, A., A Generalisation of Stone Duality to Orthomodular Lattices, in Ozawa, M. et al. (eds), \textit{Reality and Measurement in Algebraic Quantum Theory}, Proceedings in Mathematics \& Statistics (PROMS) 261: 3--65 
(Springer,  Singapore, 2018).

\bibitem{Doe12} D{\"o}ring, A., Topos-Based Logic for Quantum Systems and Bi-Heyting Algebras, in Chubb, J., Eskandarian, A. and Harizanov, V. (eds.), \emph{Logic and Algebraic Structures in Quantum Computing}, Lecture Notes in Logic 45: 151--173
(Cambridge UP, Cambridge, 2016).

\bibitem{DoeIsh11} D{\"o}ring, A. and Isham, C., What is a Thing?: Topos Theory in the Foundations of Physics, in Coecke, B. (ed.), \emph{New Structures for Physics}, Lecture Notes in Physics 813: 753--940 (Springer, Berlin, 2011).

\bibitem{Dum76} Dummett, M., Is Logic Empirical?, in Lewis, H. D. (ed.), \emph{Contemporary British Philosophy, 4th series}: 45--68 (Allen and Unwin, London 1976).

\bibitem{Eva15} Eva, B., Towards a Paraconsistent Quantum Set Theory, 
in Heunen, C., Selinger, P. and Vicary, J. (eds),
\emph{Proceedings of the 12th International Workshop on Quantum Physics and Logic 
(QPL 2015)}, 
Electronic Proceedings in Theoretical Computer Science (EPTCS) 195: 158--169 (2015).


\bibitem{Eva16} Eva, B., Topos Theoretic Quantum Realism, \emph{British Journal for the Philosophy of Science} 68(4): 1149--1181 (2017). 
DOI: 10.1093/bjps/axv057.

\bibitem{Gib08} Gibbins, P., \emph{Particles and Paradox: The Limits of Quantum Logic} (Cambridge UP, Cambridge, 2008).

\bibitem{Har81} Hardegree, G., Material Implication in Orthomodular (and Boolean) Lattices, \emph{Notre Dame Journal of Formal Logic} 22: 163--182 (1981).

\bibitem{HHLN19} Harding, J., Heunen, C., Lindenhovius, B. and Navara, M.,
Boolean Subalgebras of Orthoalgebras, \emph{Order - A Journal on the Theory of Ordered Sets and its Applications} 36(3): 563--609 (2019).

\bibitem{HarNav11} Harding, J. and Navara, M., Subalgebras of Orthomodular Lattices, \emph{Order - A Journal on the Theory of Ordered Sets and its Applications} 28(3): 549--563 (2011).

\bibitem{Ish97} Isham, C., Topos Theory and Consistent Histories: The Internal Logic of the Set of all Consistent Sets, \emph{International Journal of Theoretical Physics} 36: 785--814 (1997).

\bibitem{IshBut98} Isham, C. and Butterfield, J., A topos perspective on the Kochen-Specker theorem 1: Quantum States as Generalized Valuations, \emph{International Journal of Theoretical Physics} 37: 2669--2733 (1998).

\bibitem{Jec03} Jech, T., \emph{Set Theory: The Third Millennium Edition, revised and expanded}
(Springer, Berlin, 2003).

\bibitem{Joh0203} Johnstone, P., \emph{Sketches of an Elephant, A Topos Theory Compendium, Vols. I, II} (Cambridge UP, Cambridge, 2002/03).

\bibitem{Lib05} Libert, T., Models for a paraconsistent set theory, \emph{Journal of Applied Logic} 3: 15--41 (2005).

\bibitem{LowTar15} L{\"o}we, B. and Tarafder, S., Generalised Algebra-Valued Models of Set Theory, \emph{Review of Symbolic Logic} 8: 192--205 (2015).

\bibitem{McKWeb12} McKubre-Jordens, M. and Weber, Z., Real Analysis in Paraconsistent Logic, \emph{Journal of Philosophical Logic} 41: 901--922 (2012).

\bibitem{McLMoe92} Mac Lane, S. and Moerdijk, I., \textit{Sheaves in Geometry and Logic, A First Introduction to Topos Theory} (Springer, New York, 1994).
	
\bibitem{Put75} Putnam, H., The Logic of Quantum Mechanics, in \emph{Mathematics, Matter and Method}: 174--197 
(Cambridge UP, Cambridge, 1975).

\bibitem{Pri79} Priest, G., Logic of Paradox, \emph{Journal of Philosophical Logic} 8: 219--241 (1979).

\bibitem{Oza04} Ozawa, M.,  Uncertainty Relations for Joint Measurements of Noncommuting Observables, \emph{Physics Letters A} 320: 367--374 (2004).

\bibitem{Oza07} Ozawa, M., Transfer Principle in Quantum Set Theory, \emph{Journal of Symbolic Logic} 72: 625--648 (2007).

\bibitem{Oza16b} Ozawa, M., Quantum Set Theory Extending the Standard Probabilistic Interpretation of Quantum Theory, 
\emph{New Generation Computing} 34: 125--152 (2016).

\bibitem{Oza16} Ozawa, M., Operational Meanings of Orders of Observables Defined Through Quantum Set Theories With Different Conditionals, 
in Duncan, R. and Heunen, C. (eds),
\emph{Proceedings of the 13th International Workshop on Quantum Physics and Logic (QPL2016)},  
Electronic Proceedings in Theoretical Computer Science (EPTCS) 236: 127--144 (2017).

\bibitem{Oza17} Ozawa, M., Orthomodular-Valued Models for Quantum Set Theory, \emph{Review of Symbolic Logic} 10: 782--807 (2017).

\bibitem{RS80} Reed, M. and Simon. B., Methods of Modern Mathematical Physics I: {Functional} Analysis (Revised and Enlarged Edition),
(Academic, New York, 1980).

\bibitem{Sto36} Stone, M.H., The Theory of Representations for Boolean algebras, \emph{Transactions of the 
American Mathematical Society} 40: 37--111 (1936).

\bibitem{Tak74} Takeuti, G., \emph{Two Applications of Logic to Mathematics} (Princeton UP, Princeton, 1974).

\bibitem{Tak81} Takeuti, G., Quantum Set Theory, in Beltrameti, E. G.  and van Fraassen, B.(eds), \emph{Current Issues in Quantum Logic}:  303--322  (Plenum, New York, 1981).

\bibitem{Tit99} Titani, S., Lattice Valued Set Theory, \emph{Archive for Mathematical Logic} 38(6): 395--421 (1999).

\bibitem{Web10} Weber, Z., Transfinite Numbers in Paraconsistent Set Theory, \emph{Review of Symbolic Logic} 3: 71--92 (2010).

\bibitem{Web12} Weber, Z., Transfinite Cardinals in Paraconsistent Set Theory, \emph{Review of Symbolic Logic} 5: 269--293 (2012).

\bibitem{Yin05} Ying, M., A Theory of Computation Based on Quantum Logic (I), 
\emph{Theoretical Computer Science} 344:  134--207 (2005)

\end{thebibliography}
\end{document}